\documentclass{lmcs}
%This is a template for producing LIPIcs articles. 
%See lipics-manual.pdf for further information.
%for A4 paper format use option "a4paper", for US-letter use option "letterpaper"
%for british hyphenation rules use option "UKenglish", for american hyphenation rules use option "USenglish"
%for section-numbered lemmas etc., use "numberwithinsect"
%for enabling cleveref support, use "cleveref"
%for enabling autoref support, use "autoref"
%for anonymousing the authors (e.g. for double-blind review), add "anonymous"
%for enabling thm-restate support, use "thm-restate"

%\graphicspath{{./graphics/}}%helpful if your graphic files are in another directory
\usepackage{lineno}
\usepackage{amssymb}
%\linenumbers
\usepackage{tikz}
\usepackage{amsthm}
\theoremstyle{plain}\newtheorem*{proviso}{Proviso}

\bibliographystyle{plainurl}% the mandatory bibstyle

\theoremstyle{plain}\newtheorem{claim}[thm]{Claim}
\newenvironment{claimproof}[1]{\par\noindent \emph{Proof:}\space#1}{\hfill \renewcommand{\qedsymbol}{\ensuremath{\blacksquare}} \qedsymbol}
%TODO

 %TODO Please add

%\titlerunning{Approximate query enumeration} %TODO optional, please use if title is longer than one line

%\authorrunning{I. Adler and P. Fahey} %TODO mandatory. First: Use abbreviated first/middle names. Second (only in severe cases): Use first author plus 'et al.'
%
%
%\Copyright{Isolde Adler and Polly Fahey} %TODO mandatory, please use full first names. LIPIcs license is "CC-BY";  http://creativecommons.org/licenses/by/3.0/

\keywords{Query Enumeration, Sublinear Time Algorithms, Constant Delay, Logic and Databases, Property Testing} %TODO mandatory; please add comma-separated list of keywords

%%
%% The code below is generated by the tool at http://dl.acm.org/ccs.cfm.
%% Please copy and paste the code instead of the example below.
%%
%\begin{CCSXML}
%<ccs2012>
%<concept>
%<concept_id>10003752.10003809.10010055</concept_id>
%<concept_desc>Theory of computation~Streaming, sublinear and near linear time algorithms</concept_desc>
%<concept_significance>500</concept_significance>
%</concept>
%<concept>
%<concept_id>10003752.10010070.10010111.10011711</concept_id>
%<concept_desc>Theory of computation~Database query processing and optimization (theory)</concept_desc>
%<concept_significance>500</concept_significance>
%</concept>
%</ccs2012>
%\end{CCSXML}
%
%\ccsdesc[500]{Theory of computation~Streaming, sublinear and near linear time algorithms}
%\ccsdesc[500]{Theory of computation~Database query processing and optimization (theory)}

%\category{} %optional, e.g. invited paper
%
%\relatedversion{} %optional, e.g. full version hosted on arXiv, HAL, or other respository/website
%%\relatedversion{A full version of the paper is available at \url{...}.}
%
%\supplement{}%optional, e.g. related research data, source code, ... hosted on a repository like zenodo, figshare, GitHub, ...
%
%\acknowledgements{}%optional

%\nolinenumbers %uncomment to disable line numbering

%\hideLIPIcs  %uncomment to remove references to LIPIcs series (logo, DOI, ...), e.g. when preparing a pre-final version to be uploaded to arXiv or another public repository

%%%%%%%%%%%%% ISOLDE

\newcommand{\poly}{\operatorname{poly}}

\newcommand{\Enum}{\operatorname{Enum}}
\newcommand{\dv}{\operatorname{dv}}

%%%%%%%%%%%%%%%%
%%
%% \BibTeX command to typeset BibTeX logo in the docs
%\AtBeginDocument{%
%  \providecommand\BibTeX{{%
%    \normalfont B\kern-0.5em{\scshape i\kern-0.25em b}\kern-0.8em\TeX}}}

%Editor-only macros:: begin (do not touch as author)%%%%%%%%%%%%%%%%%%%%%%%%%%%%%%%%%%
%\EventEditors{John Q. Open and Joan R. Access}
%\EventNoEds{2}
%\EventLongTitle{42nd Conference on Very Important Topics (CVIT 2016)}
%\EventShortTitle{CVIT 2016}
%\EventAcronym{CVIT}
%\EventYear{2016}
%\EventDate{December 24--27, 2016}
%\EventLocation{Little Whinging, United Kingdom}
%\EventLogo{}
%\SeriesVolume{42}
%\ArticleNo{23}
%%%%%%%%%%%%%%%%%%%%%%%%%%%%%%%%%%%%%%%%%%%%%%%%%%%%%%

\begin{document}
\title{Towards Approximate Query Enumeration with Sublinear Preprocessing Time}

\author{Isolde Adler}
\address{University of Leeds, School of Computing, Leeds, UK}
\email{i.m.adler@leeds.ac.uk}%TODO mandatory, please use full name; only 1 author per \author macro; first two parameters are mandatory, other parameters can be empty. Please provide at least the name of the affiliation and the country. The full address is optional

\author{Polly Fahey}
\address{University of Leeds, School of Computing, Leeds, UK}
\email{mm11pf@leeds.ac.uk}
%TODO mandatory: add short abstract of the document
\begin{abstract}
This paper aims at providing extremely efficient algorithms for approximate query enumeration on sparse databases, that come with performance and accuracy guarantees. We introduce a new model for approximate query enumeration on classes of relational databases of bounded degree. We first prove that on databases of bounded degree any \emph{local} first-order definable query that has a sufficiently large answer set can be enumerated approximately with constant delay after a preprocessing phase with \emph{constant} running time. 
%Where we say a first-order query is `local' if given any bounded degree database and tuple it can be decided by only looking at the local (fixed radius) neighbourhood around the tuple whether the tuple is an answer to the query for the database or not.
We extend this, showing that on databases of bounded tree-width and bounded degree, every query that is expressible in first-order logic and has a sufficiently large answer set can be enumerated approximately with constant delay after a preprocessing phase with \emph{sublinear} (more precisely, \emph{polylogarithmic}) running time. 
	
Durand and Grandjean (ACM Transactions on Computational Logic 2007) proved that \emph{exact} enumeration of first-order queries on databases of bounded degree can be done with constant delay after a preprocessing phase with running time \emph{linear} in the size of the input database.	Hence we achieve a significant speed-up in the preprocessing phase. Since sublinear running time does not allow reading the whole input database even once, sacrificing some accuracy is inevitable for our speed-up. Nevertheless, our enumeration algorithm comes with guarantees:
With high probability, (1) only tuples are enumerated that are answers to the query or `close' to being answers to the query, and (2) if the proportion of tuples that are answers to the query is sufficiently large, then all answers will be enumerated. Here the notion of `closeness' is a tuple edit distance in the input database.  For local first-order queries, only actual answers are enumerated, strengthening (1). Moreover, both the `closeness' and the proportion required in (2) are controllable. Our algorithms only access the input database by sampling local parts, in a distributed fashion.  

While our preprocessing phase is simpler than the preprocessing phase for the exact algorithm, our enumeration phase is more involved, as we push parts of the computation into the enumeration phase, allowing us to keep on enumerating answers.

We combine methods from property testing of bounded degree graphs with logic and query enumeration, which we believe can inspire further research.  
\end{abstract}

\maketitle

\section{Introduction}
Given the ubiquity and sheer size of stored data nowadays, there is an immense need for
	highly efficient algorithms to extract information from the data.
When the input data is huge, many algorithms 
that are traditionally classified as `efficient' become impractical. 
	Hence in practice often heuristics are used, at the price of losing control over
	the quality of the computed information. In many application areas however, such as 
	aviation, security, medicine, and research,
	accuracy guarantees regarding the computed output are crucial. 

We address this by taking a step towards 
foundations for Approximate Query Processing~\cite{chaudhuri2017approximate}.
We provide a new model for 
\emph{approximately} enumerating the set of answers to queries on relational databases. 
%We also extend this to approximate \emph{query membership testing} and 
%\emph{counting}.
%We also extend our results
%to approximate \emph{query membership testing} and \emph{counting}.
This enables us to decrease the running time significantly 
compared to traditional algorithms 
while providing probabilistic accuracy guarantees.
 
\paragraph*{\textbf{Query enumeration.}} Query evaluation plays a central role in databases systems, and in the past decades it has received huge attention both from practical and theoretical perspectives. 
One of the central problems is \emph{query enumeration}. Here we are given a 
database $\mathcal D$ and a query $q$, and the goal is to compute the set 
$q(\mathcal D)$ of all
answers to $q$ on $\mathcal D$. However, the set $q(\mathcal D)$
could be exponential in the number of free variables of $q$, and even bigger than $\mathcal D$, 
hence the total running time required to enumerate all answers may not be a meaningful 
complexity measure. 
Taking this into account, models for query enumeration distinguish two phases,
a \emph{preprocessing phase}, and an \emph{enumeration phase}.
Typically, in the preprocessing phase some form of data structure  
is computed from $\mathcal D$ and $q$, in such a way that in the enumeration phase
all answers in $q(\mathcal D)$ can be enumerated (without repetition) with only a small delay between any two consecutive answers. We focus on data complexity, i.\,e.\ we  
regard the query as being fixed, and the database being the input.
\emph{Efficiency} is measured both in terms of
the running time of the preprocessing phase and the \emph{delay}, i.\,e.\ the 
maximum time between the output of any two consecutive answers. 
For the delay we can hope for constant time at best, independent of the
size of the database. For the preprocessing
phase, the best we can hope for regarding exact algorithms is linear time. 

Recent research has been very successful in providing exact enumeration algorithms for 
first-order queries on \emph{sparse} relational databases. In 2007, Durand and Grandjean
showed that on relational databases of bounded degree, every first-order query can be enumerated
with constant delay after a linear time preprocessing phase~\cite{durand2007first}. 
This result triggered a number of 
papers~\cite{KazanaS11,DurandSS14, SegoufinV17},  
culminating in Schweikardt, Segoufin and Vigny's result that
on \emph{nowhere dense} databases, first-order queries can be enumerated with constant delay
after a pseudo-linear time preprocessing phase~\cite{SchweikardtSV18}.

\subparagraph*{\textbf{Our contributions.}}
In this paper we aim at \emph{sublinear time} preprocessing and constant delay 
in the enumeration phase. We consider databases $\mathcal D$ of bounded degree $d$, i.\,e.\
every element of the domain appears in at most $d$ tuples in relations of $\mathcal D$, and we identify conditions under which first-order definable queries can be enumerated approximately
with constant delay after a sublinear preprocessing phase. We consider two different categories of first-order definable queries, \emph{local} and \emph{general} (including \emph{non-local}) queries. A first-order query is \emph{local} if, given any bounded degree database and tuple, it can be decided by only looking at the local (fixed radius) neighbourhood around the tuple whether the tuple is an answer to the query for the database. We show the following.

 \emph{On input databases of bounded degree, every (fixed) local first-order definable query can be enumerated approximately with constant delay after a constant time preprocessing phase (Theorem \ref{thm: local queries enumeration}).}

 \emph{On input databases of bounded degree and bounded tree-width, every (fixed) first-order definable query can be enumerated approximately with constant delay after a sublinear time preprocessing phase (Theorem \ref{thm:non-local enumeration main}).}

We also give generalisations of the two theorems above (Theorems \ref{theorem: strengthened local main}, \ref{theorem: strengthened non local main} and \ref{theorem: strengthened main hanf}) and applications of our approach to further computational problems on databases (Theorems \ref{query membership testing} and \ref{counting theorem}), which we will discuss below.

First, let us give some more details.
For any local first-order query $q$, bounded degree database $\mathcal{D}$ and tuple $\bar{a}$ from $\mathcal{D}$ it can be decided in constant time whether $\bar{a}$ is an answer to $q$ on $\mathcal{D}$ (Lemma \ref{lemma: local membership testing}). Using this fact, we show that for any fixed local first-order definable query $q(\bar{x})$ with $|\bar{x}| =:k$ and $\gamma \in (0,1)$, there exists an enumeration algorithm with constant preprocessing time and constant delay, that is given a bounded degree database $\mathcal{D}$ with domain of size $n$ as input and does the following. It enumerates a set of tuples that are answers to $q$ on $\mathcal{D}$, and with high probability it enumerates \emph{all} answers to $q$ on $\mathcal{D}$ if the size of the answer set of $q$ on $\mathcal{D}$ is larger than $\gamma n^k$ (i.e. the number of answers to the query is larger than a fixed fraction of the total possible number of answers). 

Towards reducing the minimum size of the answer set required to enumerate \emph{all} answers to the query, we show we actually only require size $ \gamma n^c$, where $c$ is the maximum number of connected components in the neighbourhood (of some fixed radius) of an answer to $q$ (Theorem \ref{theorem: strengthened local main}). We argue that in practice, $c$ can be expected to be low for natural queries.

If a first-order query $q$ is non-local, then for a database $\mathcal{D}$ and a tuple $\bar{a}$, we can no longer decided in constant time whether $\bar{a}$ is an answer to $q$ on $\mathcal{D}$. However, using Hanf-locality of first-order logic \cite{Hanf1965} and a result from the area of property testing, we can approximately enumerate any first-order definable query on bounded degree and bounded tree-width databases with polylogarithmic preprocessing time and constant delay (Theorem \ref{thm:non-local enumeration main}). Let us now explain our notion of approximation, which is based on neighbourhood types.

For $d\in \mathbb N$, let $\mathbf C$ be a class of databases 
of degree at most $d$  over a fixed finite schema. %$\sigma$.
Let $\mathcal D\in \mathbf C$, let $r\in \mathbb N$ and let $a$ be an element of the 
domain of $\mathcal D$. The \emph{$r$-neighbourhood type} of $a$ in $\mathcal D$ is
the isomorphism type of the sub-database of $\mathcal D$ induced
by all elements of the domain whose distance to $a$ (in the underlying graph of $\mathcal{D}$)
is at most $r$, expanded by $a$. The element $a$ is called the \emph{centre}. This can be extended to define the $r$-neighbourhood type of a \emph{tuple} $\bar a$ in $\mathcal D$,
by considering the isomorphism type of the sub-database induced by the union of the
$r$-neighbourhoods of all components of $\bar a$, expanded by $\bar a$. We call
such an isomorphism type an \emph{$r$-type (with $|\bar a|$ centres)}. 
Given a database query $q(\bar x)$  
with $|\bar x|=:k$ and a database $\mathcal D$ with domain of size $n$ we say that a tuple $\bar{a}$ from $\mathcal D$ is \emph{$\epsilon$-close to being an answer of $q$ on $\mathcal D$ and $\mathbf C$}, if $\mathcal{D}$ can be modified with tuple modifications (insertions and deletions) into a database $\mathcal{D'} \in \mathbf{C}$ with at most $\epsilon d n$ modifications, such that $\bar{a}$ is an answer of $q$ on $\mathcal{D'} $ and $\bar{a}$ has the same $r$-neighbourhood type (for some $r$) in $\mathcal{D'} $ and $\mathcal{D} $. We let $q(\mathcal{D},\mathbf{C}, \epsilon)$ be the set of $k$-tuples $\bar a$ 
of elements of $\mathcal D$ that are $\epsilon$-close to being an answer of $q$ on $\mathcal D$ and $\mathbf C$. Note that for any local first-order query $q$, $q(\mathcal{D},\mathbf{C}, \epsilon)=q(\mathcal{D})$.

We say that the enumeration problem $\Enum_{\mathbf{C}}(q)$
for $q$ on $\mathbf C$ can be solved \emph{approximately} with preprocessing time $H(n)$ and
constant delay for answer threshold function $f(n)$, if for every $\epsilon \in (0,1]$, there exists an algorithm, 
which is given oracle access to an input database 
$\mathcal{D} \in \mathbf{C}$ (for each given element of the domain, the
tester can query the oracle for tuples in any of the relations containing the element, and
we assume that oracle queries are answered in constant time), and is given the number $n$ of elements of the domain, 
and proceeds in two phases. First, a preprocessing phase that runs in time $H(n)$, followed by
an enumeration phase where a set $S$ of pairwise distinct tuples is enumerated, 
with constant delay between any two consecutive tuples.
In addition, we require that with probability at least $2/3$,
%\begin{itemize}
		(1) $S \subseteq q(\mathcal{D}) \cup q(\mathcal{D},\mathbf{C}, \epsilon)$, and
	   (2) if $|q(\mathcal{D})| \geq f(n)$, then $q(\mathcal{D}) \subseteq S$.
%\end{itemize}

We consider database queries that are expressible in first-order logic. Note that our notion of approximation is designed specifically for first-order queries and sparse databases and for other classes of queries and input databases alternative models may be necessary. We prove that for every first-order query $q(\bar{x})$ with $|\bar{x}|=k$ the problem $\Enum_{\mathbf C_d^t}(q)$ (where $\mathbf C_d^t$ is the class of
all databases of $d$-bounded degree and $t$-bounded tree-width) can be solved approximately with polylogarithmic preprocessing time and constant delay with answer threshold function $f(n)=\gamma n^k$ for any $\gamma \in (0,1)$ (Theorem~\ref{thm:non-local enumeration main}). As with local queries, we further prove that we can actually reduce the answer threshold function to $f(n)=\gamma n^c$ where $c \leq k$ is the maximum number of connected components in the neighbourhood (of some fixed radius) of an answer to $q$ (Theorem \ref{theorem: strengthened non local main}).
We also identify a condition that is based on Hanf-locality of first-order logic~\cite{Hanf1965}, which
we call \emph{Hanf-sentence testability}, and we prove a general theorem (Theorem~\ref{theorem: strengthened main hanf}),
that for every first-order query $q(\bar{x})$ with $|\bar{x}|=k$ that is Hanf-sentence testable on $\mathbf C$
in time $H(n)$, the problem $\Enum_{\mathbf{C}}(q)$ can be solved approximately 
with preprocessing time $\mathcal{O}(H(n))$ and constant delay for answer threshold function $f(n)=\gamma n^c$ as above.

We illustrate our model throughout the paper with a running example which can be motivated by the problems of subgraph matching and inexact subgraph matching in social and biological networks (e.g. \cite{zhang2009gaddi, tong2007fast}). We show that our running example is Hanf-sentence testable on the class of all bounded degree graphs in constant time, and hence by Theorem \ref{theorem: strengthened main hanf} it can be approximately enumerated with constant preprocessing time.

  Our notion of approximation is based on `structural' closeness and therefore our algorithms are aimed at applications where structural similarity is essential. 
For example, when given a new huge dataset (such as biological datasets or social networks),
a first exploration of the approximate structure with some accuracy guarantees may be
desirable to obtain initial insights quickly. These insights could then e.\,g.\ be used to make 
decisions regarding more time consuming analysis in a follow-up stage.

\subparagraph*{\textbf{Property testing.}}
%Our approach combines of methods from property testing of bounded degree graphs 
%with logic and query enumeration. 
Before sketching the proof idea of Theorem~\ref{thm:non-local enumeration main}, let us give some background on property testing. Property testing aims at providing highly efficient algorithms that 
derive \emph{global} information on the
structure of the input by only exploring a small number of \emph{local} parts of it.
These algorithms are randomised and allow for a small error. Nevertheless, they 
come with guarantees regarding both
the quality of the solution and efficiency. Typically, the algorithms
 only look at a constant number of small parts of the input,
and they run in constant or sublinear time. Even for problems that allow linear time
exact algorithms, such as graph connectivity, reducing the running time
(while sacrificing some accuracy) may become crucial if the networks are huge.
Property testing can be seen as solving \emph{relaxed decision problems}. 
Instead of deciding whether a given input has a certain property, 
the goal is to determine with high probability correctly, whether the input has the
property or is far from having it.
Formally, a \emph{property} $\mathbf P$ is an isomorphism closed class of relational databases. 
For example, each Boolean database query $q$ defines a property
$\mathbf P_q$, the class of all databases satisfying $q$.
A \emph{property testing algorithm} (\emph{tester}, for short) for $\mathbf P$ determines whether a given database $\mathcal D$ has property  $\mathbf P$ (i.\,e.\ whether $\mathcal D$ is a member of $\mathbf P$) or 
is $\epsilon$-far from having $\mathbf P$. 
Testers are randomised and allow for a small constant error probability.
The algorithms are parameterised by a distance measure $\epsilon$, where
the distance measure depends on the model.

Property testing was first 
introduced in~\cite{RubinfeldS96}, in the context of Programme Checking. 
In this paper we build on 
the model for property testing on relational databases of \emph{bounded degree} 
of~\cite{adler2018property}, which is a generalisation of the bounded degree graph 
model~\cite{goldreich2002property}. 
This model assumes a uniform upper bound $d$ on the degree of the input databases. 
For $\epsilon\in [0,1]$, 
a database $\mathcal D $ with domain of size $n$
is \emph{$\epsilon$-close} to satisfying $\mathbf P$, if we can make $\mathcal D$ isomorphic to
a member of  $\mathbf P$ by editing (inserting or removing) at most $\epsilon dn$ 
tuples in relations of
$\mathcal D$ (i.\,e.\ at most an `$\epsilon$-fraction' of the 
maximum possible number $dn$ of tuples in relations). Otherwise, $\mathcal D$ is called \emph{$\epsilon$-far} from $\mathbf P$.
An $\epsilon$-tester receives the size $n$ of the domain of the input, and
and has oracle access to the database.

\subparagraph*{\textbf{Techniques.}}
To give a flavour of our techniques, we sketch the proof idea of Theorem~\ref{thm:non-local enumeration main}. Let $\phi(\bar{x})$ be a
first-order formula with $|\bar{x}|=k$ and let $\mathcal{D}$ be an input database
from the class of databases $\mathbf{C}_d^t$ with bounded degree and bounded tree-width over a fixed finite
schema. In the preprocessing phase, we first compute a formula $\chi$ that is 
equivalent to $\phi$ on $\mathbf{C}_d^t$, and $\chi$ is in a special type of 
Hanf normal form
that groups the Hanf-sentences and sphere formulas together (Lemma~\ref{normal-form lemma}).  
We then run property testers on the sentence parts of $\chi$ and compute a set
$T$ of $r$-types (where $r$ is the Hanf locality radius of $\phi$), that with high probability
for any $\mathcal{D} \in \mathbf{C}_d^t $ and $\bar{a} \in D^k$,
if $\bar{a} \in \phi(\mathcal{D})$, then the $r$-type of $\bar{a}$ in $\mathcal{D}$ is in $T$, and
if $\bar{a} \in D^k \setminus \phi(\mathcal{D},\mathbf{C}_d^t, \epsilon )$, then the $r$-type of $\bar{a}$ in $\mathcal{D}$ is not in $T$.
 In the remainder of the preprocessing
phase we randomly sample a constant number of $k$-tuples of elements of
$\mathcal{D}$ and check whether their $r$-type is in $T$. Assuming
that $|\phi(\mathcal{D})|$ is sufficiently large, with high probability we will have sampled at
least one tuple whose $r$-type is in $T$, and we start the enumeration phase by 
enumerating this tuple. To keep the enumeration going, after each tuple that is enumerated,
we sample 
a constant number of tuples from the input database. To avoid outputting duplicates we keep a record of which tuples we have already seen by using an array that can be updated and read in constant time. Finally, 
with high probability we will see every possible
$k$-tuple of elements of $\mathcal{D}$.

\subparagraph*{\textbf{Further related research.}}
So far, only a small number of results in database theory make use of models from property testing. Chen and Yoshida~\cite{chen2019testability} study the testability of homomorphism inadmissibility in a model which is close to the general graph model (cf.\ e.\,g. \cite{alon2008testing}).
%They access the database with a completion oracle. 
%With the completion oracle given a tuple of elements from the database and a special wildcard symbol they can retrieve a random completion to this tuple or the number of possible completions. 
Ben-Moshe et al. \cite{ben2011detecting} study the testability of near-sortedness (a property of relations that states that most tuples are close to their place in some desired order). 
%In their model they are allowed to ask for the ith tuple. 
Our model differs from both of these, as it relies on a degree bound and uses a different type of oracle access.
A \emph{conjunctive query} (CQ) is a first-order formula constructed from atomic formulas using conjunctions and existential quantification only. CQ evaluation is closely related to solving 
constraint satisfaction problems (CSPs)~\cite{kolaitis2000conjunctive}. CSPs have been studied 
under different models from property testing (\cite{chen2019constant, yoshida2011optimal, alon2003random}). Our work, however, is relevant for more complex queries, as enumerating CQs in our model basically amounts to sampling.

Our work is a step towards approximate enumeration on sparse databases. 
It would be interesting to study approximate enumeration on databases of bounded \emph{average degree}. However, this would require different techniques.

\subparagraph*{\textbf{Organisation.}}
In Section~\ref{section prelim} we introduce notions used throughout the paper. In Section \ref{sec: local and non local queries} we give some useful normal forms of first-order queries along with some results on local first-order queries. In Sections \ref{sec: local queries} and \ref{sec: non local queries} we prove our main theorems on the enumeration of local and general first-order queries respectively. In Section \ref{sec: proof of theorem strengthened local main}, in an attempt to push the boundaries further, we prove strengthened versions of the theorems proved in Sections \ref{sec: local queries} and \ref{sec: non local queries}, showing how the required answer threshold can be reduced. Finally, in Section \ref{sec: further results} we prove a generalisation of our main theorem on approximate enumeration of general first-order queries showing that the assumption of bounded tree-width can be replaced with the weaker assumption of Hanf-sentence testability. We also provide results on approximate membership testing and approximate counting.
%In Section~\ref{section prelim} we introduce notions used throughout the paper. In Section~\ref{sec:enum} we introduce our model for approximate query enumeration on classes of relational databases of bounded degree. 
%Then in Section~\ref{sec:main} we prove our main theorems. 
%We conclude in Section~\ref{sec:conclusion}. Due to space constraints the proofs of statements labelled $(\ast)$ are deferred to the appendix.

\section{Preliminaries}\label{section prelim}

We let $\mathbb{N}$ be the set of natural numbers including $0$, and $\mathbb{N}_{\geq 1} = \mathbb{N} \setminus \{0\}$. For each $n \in \mathbb{N}_{\geq 1}$, we let $[n] = \{1,2,\dots,n\}$.

\subsection*{Databases.}
A \emph{schema} is a finite set $\sigma = \{R_1,\dots,R_{|\sigma|}\}$ of relation names, 
where each $R\in \sigma$ has an \emph{arity} ar$(R) \in \mathbb{N}_{\geq 1}$. 
The \emph{size} of a schema, denoted by $\|\sigma\|$, is the sum of the arities of its relation names. A \emph{database} $\mathcal{D}$ of schema $\sigma$ ($\sigma$-db for short) is of the form
$\mathcal{D} = (D, R_1^{\mathcal{D}}, \dots, R_{|\sigma|}^{\mathcal{D}})$, where $D$ is a finite set, the set
 of \emph{elements} of $\mathcal{D}$, and $R_i^{\mathcal{D}}$ is an ar$(R_i)$-ary relation on $D$.
 The set $D$ is also called the \emph{domain} of $\mathcal{D}$. An \emph{(undirected) graph} $\mathcal{G}$ is a tuple $\mathcal{G} =(V(\mathcal{G}),E(\mathcal{G}))$ where $V(\mathcal{G})$ is a set of \emph{vertices} and $E(\mathcal{G})$ is a set of $2$-element subsets of $V(\mathcal{G})$ (the \emph{edges} of $\mathcal G$). 
For an edge $\{u,v\}\in E(\mathcal G)$ we simply write $uv$.
For a graph $\mathcal G$ with $uv\in E(\mathcal G)$ we let $\mathcal G\setminus uv$ denote the graph obtained from $\mathcal{G}$ by removing the edge $uv$ from $E({\mathcal{G})}.$
An undirected graph can be seen as a $\{E\}$-db, where $E$ is a binary relation name, interpreted by a symmetric, irreflexive relation.

We assume that all databases are linearly ordered or, equivalently, that $D=[n]$ for some $n\in \mathbb N$ (similar to \cite{KazanaS11}). We extend this linear ordering to a linear order on the relations of $\mathcal{D}$ via lexicographic ordering.
The \emph{Gaifman graph} of a $\sigma$-db $\mathcal D$ is the undirected graph 
$\mathcal{G}(\mathcal{D})=(V,E)$, 
with vertex set $V:=D$ and an edge between vertices $a$ and $b$ whenever $a\neq b$ and there is an 
$R\in \sigma$ and a 
tuple $(a_1,\ldots,a_{\text{ar}(R)})\in R^{\mathcal D}$ with $a,b\in\{a_1,\ldots,a_{\text{ar}(R)}\}$.
The \emph{degree} $\deg(a)$ of an element $a$ in a database $\mathcal{D}$ is the total number of tuples in all relations of $\mathcal D$ that contain $a$. We say the \emph{degree} $\deg(\mathcal{D})$ of a database $\mathcal{D}$ is the maximum degree of its elements. 
A class of databases $\mathbf{C}$ has \emph{bounded degree}, if there exists a constant $d\in\mathbb N$ such that for all $\mathcal{D} \in \mathbf{C}$, $\deg(\mathcal{D}) \leq d$. (We always assume that classes of databases are closed under isomorphism.) 
Let us remark that the $\deg(\mathcal{D})$ and the (graph-theoretic) degree of $\mathcal{G}(\mathcal{D})$ only differ by at most a constant factor (cf.\ e.\,g.~\cite{durand2007first}). Hence both measures yield the same classes of relational structures of bounded degree.
We define the \emph{tree-width} of a database $\mathcal D$ %$\tw(\mathcal D)$, 
as the the tree-width of its Gaifman graph. (See e.\,g.\ \cite{Flum:2006:PCT:1121738} for a discussion of tree-width in this context.)
A class $\mathbf{C}$ of databases has \emph{bounded tree-width}, if there exists a constant $t\in \mathbb N$ such that all databases $\mathcal{D} \in \mathbf{C}$ have tree-width at most~$t$. 
Let $\mathcal D$ be a $\sigma$-db, and $M\subseteq D$. The sub-database of  $\mathcal{D}$ \emph{induced by} $M$ is the database $\mathcal{D}[M]$ with domain $M$ and $R^{\mathcal{D}[M]}:=R^{\mathcal{D}}\cap M^{\text{ar}(R)}$ for every $R\in \sigma$.

\subsection*{Database queries.}
Let \textbf{var} be a countable infinite set of \emph{variables}, and fix a relational schema $\sigma$. 
The set $\operatorname{FO}[\sigma]$ is built from \emph{atomic formulas} of the form $x_1=x_2$ or $R(x_1, \dots, x_{\textup{ar}(R)})$, where $R \in \sigma$ and $x_1,\dots,x_{\textup{ar}(R)} \in \textbf{var}$, and is closed under  Boolean connectives ($\lnot$, $\lor$, $\land$, $\rightarrow$, $\leftrightarrow$) 
and existential and universal quantifications ($\exists, \forall$). The set $\operatorname{FO}[\{E\}]$ is the set of first-order formulas for undirected graphs.
We let $\operatorname{FO}:=\bigcup_{\sigma\text{ schema}}\operatorname{FO}[\sigma]$.
We use $\exists^{\geq m}x\,\phi$  (and $\exists^{= m}x\,\phi$, respectively) as a shortcut for the 
$\operatorname{FO}$ formula expressing that that the number of witnesses $x$ satisfying $\phi$ is at least $m$
 (exactly $m$, resp.).
 A \emph{free variable} of an $\operatorname{FO}$ formula is a variable that does not appear in the scope of a quantifier. For a tuple $\bar x$ of variables and a formula $\phi\in\operatorname{FO}$, we write $\phi(\bar{x})$ to indicate that the free variables of $\phi$ are exactly the variables in $\bar{x}$. An $\operatorname{FO}$ formula without free variables is called a \emph{sentence}.
 An \emph{$\operatorname{FO}$ query} (of arity $k\in \mathbb N$) is an $\operatorname{FO}$ formula $\phi(\bar{x})$  (with $|\bar x|=k$).
 Let $\mathcal{D}$ be a database and $\bar{a}$ be a tuple of elements of $\mathcal{D}$ of length $|\bar x |$. We write $\mathcal{D} \models \phi(\bar{a})$, if $\phi$ is true in $\mathcal{D}$ when we replace the free variables of $\phi$ with $\bar{a}$, and we say that $\bar{a}$ is an \emph{answer} for $\phi$ on $\mathcal{D}$. We let $\phi(\mathcal D):=\{\bar{a}\in D^{|\bar{a}|}\mid \mathcal{D} \models \phi(\bar{a})\}$ be the set of all answers for $\phi$ on $\mathcal D$. Two formulas $\phi(\bar{x}), \psi(\bar{x}) \in \operatorname{FO}[\sigma]$, where $|\bar{x}| =k$, are \emph{d-equivalent} (written $\phi(\bar{x}) \equiv _ d \psi(\bar{x})$) if for all $\sigma$-dbs $\mathcal{D}$ with degree at most~$d$ and all $\bar{a} \in D^k$, $\mathcal{D} \models \phi(\bar{a})$ iff  $\mathcal{D} \models \psi(\bar{a})$. The \emph{quantifier rank} of a formula $\phi$, denoted by $qr(\phi)$, is the maximum nesting depth of quantifiers that occur in $\phi$. The \emph{size of a formula} $\phi$, denoted by $\|\phi\|$, is the length of $\phi$ as a string over the alphabet $\sigma \cup \textbf{var} \cup \{\exists, \forall,\lnot,\lor,\land,\rightarrow,\leftrightarrow,=\} \cup \{,\} \cup \{(,)\}$. 
 
\subsection*{Enumeration problems.}
Let $\sigma$ be a relational schema, let $\mathbf{C}$ be a class of $\sigma$-dbs and let $\phi(\bar{x}) \in \operatorname{FO}[\sigma]$. The \emph{enumeration problem of $\phi$ over $\mathbf{C}$} denoted by $\operatorname{Enum}_{\mathbf{C}}(\phi)$ is, given a database $\mathcal{D} \in \mathbf{C}$, to output the elements of $\phi(\mathcal{D})$ one by one with no repetition. An \emph{enumeration algorithm} for the enumeration problem $\operatorname{Enum}_{\mathbf{C}}(\phi)$ with input database $\mathcal{D} \in \mathbf{C}$ proceeds in two phases, a preprocessing phase and an enumeration phase. The enumeration phase outputs all the elements of $\phi(\mathcal{D})$ with no duplicates. Furthermore, the enumeration phase has full access to the output of the preprocessing phase but can use only a constant total amount of extra memory.

The \emph{delay} of an enumeration algorithm is the maximum time between the start of the enumeration phase and the first output (or the `end of enumeration message' if there are no answers), two consecutive outputs, and the last output and the `end of enumeration message'.

%An \emph{enumeration problem} is a binary relation. Let $R$ be some enumeration problem over sets $A$ and $B$ and let $x \in A$ be an input, an \emph{answer} for $x$ is a $y \in B$ such that $(x,y) \in R$. An enumeration problem induces a computational problem: Given some input $x$, output all of its answers. An \emph{enumeration algorithm} for an enumeration problem is an algorithm that, with input $x$, proceeds in two phases, a preprocessing phase and 
%an enumeration phase. The enumeration phase outputs all the answers for $x$ with no duplicates. Furthermore, the enumeration phase has full access to the output of the preprocessing phase but can use only a constant total amount of extra memory.
%
%The \emph{delay} of an enumeration algorithm is the maximum time between 
%\begin{itemize}
%\item the start of the enumeration phase and the first output (or the `end of enumeration message' if there are no answers),
%\item two consecutive outputs, and
%\item the last output and the `end of enumeration message'.
%\end{itemize}
%
%We are interested in the following enumeration problem for $\phi (\bar{x}) \in \operatorname{FO}[\sigma]$ (with $|\bar{x}| = k$) and a class of databases $\mathbf{C}$ with bounded degree $d \in \mathbb{N}$:
%\begin{equation*}\operatorname{Enum}_{\mathbf{C}}(\phi) = \{(\mathcal{D}, \bar{a}) \mid \mathcal{D} \in \mathbf{C}\text{, } \bar{a}\text{ is a  tuple of elements of }\mathcal{D} \text{ and }\mathcal{D} \models \phi ( \bar{a}) \}.\end{equation*}

\subsection*{Neighbourhoods and Hanf normal form.}
For a $\sigma$-db $\mathcal D$ and $a,b \in D$, the \emph{distance} between $a$ and $b$ in $\mathcal D$, denoted by dist$_{\mathcal D}(a,b)$, is the length of a shortest path between $a$ and $b$ in $\mathcal{G}(\mathcal{D})$. The \emph{distance} between two tuples $\bar{a} = (a_1,\dots,a_m)$ and $\bar{b} = (b_1,\dots,b_l)$ of $\mathcal{D}$ is the min$\{\text{dist}_{\mathcal D}(a_i,b_j) \mid 1 \leq i \leq m, 1\leq j \leq l \}$. Let $r \in \mathbb{N}$. 
For a tuple $\bar{a}\in D^{|\bar a|}$, we let $N^{\mathcal D}_r(\bar{a})$ denote the set of all elements of $\mathcal{D}$ that are at distance at most $r$ from $\bar{a}$. The \emph{$r$-neighbourhood} of $\bar{a}$ in $\mathcal D$, denoted by $\mathcal{N}^{\mathcal D}_r(\bar{a})$, is the tuple $(\mathcal{D}[N_r(\bar{a})], \bar{a})$ where the elements of $\bar a$ are called \emph{centres}. We omit the superscript and
write $N_r(\bar{a})$ and $\mathcal{N}_r(\bar{a})$, if $\mathcal D$ is clear from the context.
Two $r$-neighbourhoods, $\mathcal{N}_r(\bar{a})$ and $\mathcal{N}_r(\bar{b})$, are \emph{isomorphic} (written $\mathcal{N}_r(\bar{a}) \cong  \mathcal{N}_r(\bar{b})$) if there is an isomorphism between $\mathcal{D}[N_r(\bar{a})]$ and $\mathcal{D}[N_r(\bar{b})]$ which maps $\bar{a}$ to $\bar{b}$. An $\cong$-equivalence-class of $r$-neighbourhoods with $k$ centres is called an \emph{$r$-neighbourhood type} (or \emph{$r$-type} for short) with $k$ centres. We let $T_{r}^{\sigma, d}(k)$ denote the set of all $r$-types with $k$ centres and degree at most $d$, over schema $\sigma$. Note that for fixed $d$ and $\sigma$, the cardinality 
$|T_{r}^{\sigma, d}(k)|=:\operatorname{c}(r,k)$ is a constant, only depending on $r$ and $k$. 
%(depending only on $\sigma$ and $d$). 
We say that tuple $\bar{a}\in D^{|\bar a|}$ \emph{has $r$-type $\tau$}, if $\mathcal{N}_r^{\mathcal D}(\bar{a}) \in \tau$.

%For $r\in \mathbb N$, the \emph{$r$-histogram} of a database $\mathcal{D}$,
%denoted by $\operatorname{hr}(\mathcal{D})$, is the vector with
%$\operatorname{c}(r)$ components, indexed by the $r$-types, where the
%component corresponding to type $\tau$ contains the number of elements of
%$\mathcal{D}$ of $r$-type $\tau$. We sometimes omit $r$ and write
%\emph{neighbourhood type}, if $r$ is clear from the context.

Let $r \in \mathbb{N}$ and $k \in\mathbb N_{\geq 1}$. A \emph{sphere-formula}, denoted by $\operatorname{sph}_{\tau}(\bar{x})$ (where $|\bar{x}| = k$), is an FO formula which expresses that the $r$-type of $\bar{x}$ is $\tau$, where $\tau$ is some $r$-type with $k$ centres, and $r$ is called the \emph{locality radius} of the sphere-formula. 
A \emph{Hanf-sentence} is a sentence of the form $\exists ^{\geq m} x \operatorname{sph}_{\tau}(x)$, where $\tau$ is an $r$-type with one centre, and $r$ is the \emph{locality radius} of the Hanf-sentence. An FO formula is in \emph{Hanf normal form} if it is a Boolean combination of Hanf-sentences and sphere-formulas. The \emph{Hanf locality radius} of an FO formula $\phi$ in Hanf normal form is the maximum of the locality radii of the Hanf-sentences and sphere-formulas of $\phi$. 
A well-known theorem by Hanf states that on databases of bounded degree, every FO formula can be transformed into an equivalent formula in Hanf normal form~\cite{Hanf1965}. This theorem was subsequently refined as follows.

\begin{thm}[\cite{bollig2012optimal}]\label{hnf}
	For any $\phi(\bar{x}) \in \operatorname{FO}$ and $d \in \mathbb N_{\geq 1}$, 
	there exists a 
	 $d$-equivalent formula $\psi(\bar{x})$ in Hanf normal form with the same free variables as $\phi$, and 
	  $\psi$ can be computed in time $ 2^{d^{2^{\mathcal{O}(\|\phi\|)}}}$ from $\phi$.
Furthermore, the Hanf locality radius of $\psi$ is at most $4^{qr(\phi)}$.
\end{thm}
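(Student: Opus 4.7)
The plan is to proceed by induction on the structure of $\phi$, producing an equivalent formula in Hanf normal form that is valid on all $\sigma$-dbs of degree at most $d$. The base case (atomic formulas) and the Boolean cases are essentially free: an atomic formula $R(\bar x)$ or $x_1=x_2$ is expressible as a sphere-formula of radius $0$ or $1$ with the appropriate centres, and the class of Hanf normal form formulas is closed under Boolean combinations by definition. So I only need to handle the step $\phi(\bar y) = \exists x\, \chi(x,\bar y)$, where by induction $\chi$ has already been put into Hanf normal form with some locality radius $r$.

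For the existential step I would use the classical Hanf locality argument: split the witness $x$ into two cases according to its distance from $\bar y$. If $\operatorname{dist}(x,\bar y)\leq 2r+1$, then $\chi(x,\bar y)$ is determined by the $(\bar y,x)$-centred neighbourhood of radius $r$, which is contained in the $\bar y$-centred neighbourhood of radius roughly $3r+1$. The existential statement "some such $x$ exists" can therefore be absorbed into a disjunction of sphere-formulas $\operatorname{sph}_\tau(\bar y)$ of radius $3r+1$, ranging over those $(|\bar y|+1)$-centre $r$-types $\tau'$ whose "$\bar y$-part" matches the desired type and on which $\chi$ holds. If instead $\operatorname{dist}(x,\bar y)>2r+1$, then $N_r(x)$ is disjoint from $N_r(\bar y)$, so $\chi$ depends only on the $r$-type of $x$ and on the Hanf-sentence conjuncts; the existence of such an $x$ is then captured by a Boolean combination of Hanf-sentences $\exists^{\geq m} z\,\operatorname{sph}_\sigma(z)$ (asserting enough witnesses of each relevant $1$-centre type exist), after deducting the bounded-many elements lying inside $N_r(\bar y)$, which can again be encoded into the sphere-formula around $\bar y$. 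Taking the disjunction over both cases and over all finitely many relevant types yields an equivalent Hanf normal form formula whose locality radius is at most $4r$, which inductively gives the promised bound $4^{qr(\phi)}$.

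The main obstacle is controlling the size blow-up in order to achieve the stated running time $2^{d^{2^{\mathcal{O}(\|\phi\|)}}}$. The driver is the number of $r$-types: in a $d$-bounded degree database the $r$-ball around a tuple has at most $(|\bar x|+1)\cdot d^{\mathcal{O}(r)}$ elements, so $|T_r^{\sigma,d}(k)|$ is bounded by $2^{d^{\mathcal{O}(r)}}$, and at each quantifier step we must enumerate disjunctions indexed by such types with radius growing by a factor of at most $4$. Plugging in $r\leq 4^{qr(\phi)}\leq 2^{\mathcal{O}(\|\phi\|)}$ gives the claimed triple-exponential bound. The delicate part of the argument is verifying that the "absorption" of close witnesses into a sphere-formula of a larger radius, combined with the Hanf-sentence count of far witnesses, can be carried out uniformly in a way whose combinatorial cost matches these type counts; this bookkeeping is exactly what Bollig and Kuske optimise to obtain the sharp bounds stated in the theorem.
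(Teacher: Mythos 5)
This theorem is imported verbatim from Bollig and Kuske \cite{bollig2012optimal}; the paper gives no proof of its own, so there is no in-paper argument to compare yours against. Your sketch follows the standard Hanf-locality induction, which is indeed the route taken in the cited source: induct on the structure of $\phi$, and at an existential quantifier split witnesses into those within distance $2r+1$ of the free tuple (absorbed into sphere-formulas of radius $3r+1\leq 4r$ around $\bar y$) and those farther away (whose $r$-ball is disjoint from that of $\bar y$, so their existence is a threshold count expressible by Hanf-sentences after subtracting the close elements, whose count per type is itself determined by the $(3r+1)$-sphere of $\bar y$). The recurrence $r\mapsto 3r+1$ gives the $4^{qr(\phi)}$ radius bound, and the type-count estimate $|T_r^{\sigma,d}(k)|\leq 2^{d^{\mathcal{O}(r)}}$ with $r\leq 4^{qr(\phi)}=2^{\mathcal{O}(\|\phi\|)}$ gives the stated running time, so the quantitative claims are consistent.

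Two small points. First, in the far-witness case the elements to deduct are those at distance at most $2r+1$ from $\bar y$ (not merely those inside $N_r(\bar y)$); you need the larger radius so that their full $r$-type is visible in the $(3r+1)$-sphere of $\bar y$. Second, before the case split one must normalise the inductively obtained Boolean combination so that its sphere-formulas speak about the whole tuple $(x,\bar y)$ (this is what Lemma~\ref{d-equivalent disjunction} provides); you implicitly assume this. As you acknowledge, the uniform bookkeeping that yields the sharp bounds is exactly the content of the cited reference, so your write-up is a correct sketch rather than a self-contained proof — which is appropriate for a result the paper itself only cites.
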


For each FO formula $\phi$, we fix a formula $\psi$ (that is computed by Theorem \ref{hnf}) that is $d$-equivalent to $\phi$ and is in Hanf normal form. We then fix the Hanf locality radius of $\phi$ to be the Hanf locality radius of $\psi$ (and so we can then refer to the Hanf locality radius of an FO formula).

\subsection*{Local and non-local first-order queries}

We call an $\operatorname{FO}[\sigma]$ formula $\phi(\bar{x})$ (with $k$ free variables) \emph{local} if there exists some $r \in \mathbb{N}$ such that for any $\sigma$-dbs $\mathcal{D}_1$ and $\mathcal{D}_2$ and tuples $\bar{a}_1 \in D_1^k$ and $\bar{a}_2 \in D_2^k$, if $\mathcal{N}^{\mathcal D_1}_r(\bar{a}_1) \cong \mathcal{N}^{\mathcal{D}_2}_r(\bar{a}_2)$ then, $\mathcal{D}_1 \models \phi(\bar{a}_1)$ if and only if $\mathcal{D}_2 \models \phi(\bar{a}_2)$. We call $r$ the \emph{locality radius of $\phi$}. If an $\operatorname{FO}$ formula is not local we say it is \emph{non-local}. 
%Note that for any $\phi(\bar{x}) \in \operatorname{FO}$ if $\phi$ is $d$-equivalent to a boolean combination of sphere-formulas then $\phi$ is local. 
We highlight that this notion of locality differs from that of Hanf locality and Gaifman locality of FO and should not be confused.  
%\polly{  hanf-localty radius is locality radius give proof}

\begin{proviso}
For the rest of the paper, we fix a schema $\sigma$ and numbers $d,t \in \mathbb{N}$ with $d \geq 2$. From now on, all databases are $\sigma$-dbs and have degree at most $d$, unless stated otherwise. We use $\mathbf{G}_d$ to denote the class of all graphs with degree at most $d$, $\mathbf{C}_d$ to denote the class of all $\sigma$-dbs with degree at most $d$, $\mathbf{C}_d^t$ to denote the class of all $\sigma$-dbs with degree at most $d$ and tree-width at most $t$ and finally we use $\mathbf{C}$ to denote a class of $\sigma$-dbs with degree at most $d$.
\end{proviso}

\subsection*{Property Testing.}\label{Prelim PT}
First, we note that we only use methods from property testing from Section~\ref{sec: non local queries} onwards.
 We use the model of property testing for bounded degree databases introduced in~\cite{adler2018property}, which is a straightforward extension of the model for bounded degree graphs~\cite{goldreich2002property}. 
Property testing algorithms do not have access to the whole input database. Instead, they are given access via an \emph{oracle}. Let $\mathcal{D}$ be an input $\sigma$-db on $n$ elements. A property testing algorithm receives the number $n$ as input, and it can make \emph{oracle queries}\footnote{Note that an oracle query is not a database query.} of the form $(R,i,j)$, where $R \in \sigma$, $i \leq n$ and $j \leq \text{deg}(\mathcal{D})$. The answer to $(R,i,j)$ is the $j^{\text{th}}$ tuple in $R^{\mathcal{D}}$ containing the $i^{\text{th}}$ element\footnote{According to the assumed linear order on $D$.} of $D$ (if such a tuple does not exist then it returns $\bot$). We assume oracle queries are answered in constant time. 

Let $\mathcal{D},\mathcal{D'}$ be two $\sigma$-dbs, both having $n$ elements. The \emph{distance} between $\mathcal{D}$ and $\mathcal{D'}$, denoted by dist$(\mathcal{D}, \mathcal{D'})$, is the minimum number of tuples that have to be inserted or removed from relations of $\mathcal{D}$ and $\mathcal{D'}$ to make $\mathcal{D}$ and $\mathcal{D'}$ isomorphic. For $\epsilon \in [0,1]$, we say $\mathcal{D}$ and $\mathcal{D'}$ are \emph{$\epsilon$-close} if dist$(\mathcal{D}, \mathcal{D'}) \leq \epsilon d n$, and are \emph{$\epsilon$-far} otherwise. A \emph{property} is simply a class of databases. Note that every $\operatorname{FO}$ sentence $\phi$ defines a property $\mathbf{P}_{\phi}=\{\mathcal D\mid \mathcal D \models \phi\}$. We call $\mathbf{P}_{\phi}\cap \mathbf{C}$ the property \emph{defined by $\phi$ on $\mathbf{C}$}.
A $\sigma$-db $\mathcal{D}$ is \emph{$\epsilon$-close} to a property $\mathbf{P}$ if there exists a database $\mathcal{D'} \in \mathbf{P}$ that is $\epsilon$-close to $\mathcal{D}$, otherwise $\mathcal{D}$ is \emph{$\epsilon$-far} from $\mathbf{P}$.

Let $\mathbf{P} \subseteq \mathbf{C}$ be a property and $\epsilon \in (0,1]$ be the proximity parameter. An \emph{$\epsilon$-tester} for $\mathbf{P}$ on $\mathbf{C}$ is a probabilistic algorithm which is given oracle access to a $\sigma$-db $\mathcal{D} \in \mathbf{C}$ and it is given $n:=|D|$ as auxiliary input. The algorithm does the following.
\begin{enumerate}
\item If $\mathcal{D} \in \mathbf{P}$, then the tester accepts with probability at least ${2}/{3}$.
\item If $\mathcal{D}$ is $\epsilon$-far from $\mathbf{P}$, then the tester rejects with probability at least ${2}/{3}$.
\end{enumerate}
The \emph{query complexity} of a tester is the maximum number of oracle queries made. 
A tester has \emph{constant} query complexity, if the query complexity does not depend on
the size of the input database.
We say a property $\mathbf{P} \subseteq \mathbf{C}$ is \emph{uniformly testable} in time $f(n)$ on $\mathbf{C}$, if for every $\epsilon \in (0,1]$ there exists an $\epsilon$-tester for $\mathbf{P}$ on $\mathbf{C}$ which has constant query complexity and whose running time on databases on $n$ elements is $f(n)$. Note that this tester must work for all $n$. We give an example below, which is also the basis of our running example. 
%These are inspired by biological networks.

\begin{exa}\label{tester example}
%	Let $\mathbf{C}$ be the class of simple graphs with bounded degree $d \in \mathbb{N}$. 
	On the class $\mathbf G_d$, consider the isomorphism types
	$\tau_2$ and $\tau_4$ of the $2$-neighbourhoods $(N_2, (c_1,c_2))$ and $(N_4, (c_1))$ where $N_2$ and $N_4$ are the graphs shown in Figure~\ref{fig:taus} with centres $(c_1,c_2)$ and $(c_1)$. 
\begin{figure}
\begin{center}
\begin{tikzpicture}
[scale=.5,auto=left,every node/.style={circle,fill=black!,scale=.4}]

  \node[fill=gray,scale=1.5, label=right:\LARGE $c_1$] (n1) at (0,0) {};
  \node (n2) at (-1,1)  {};
  \node (n3) at (1,1)  {};
 \node[fill=gray,scale=1.5, label=right:\LARGE $c_2$] (n4) at (0,-1)  {};
  \node (n5) at (-2,1)  {};
   \node (n6) at (-1,2)  {};
     \node (n7) at (2,1) {};
  \node (n8) at (1,2)  {};

  \foreach \from/\to in {n1/n2,n1/n3,n1/n4,n2/n5,n2/n6,n3/n7,n3/n8}
    \draw (\from) -- (\to);
    
    \node [below=2cm,align=flush center,style={fill=none,scale=1.5},scale=1.5] at (n1)
        {
            $N_1$
        };

 \node[fill=gray,scale=1.5, label=right:\LARGE $c_1$] (n11) at (6,0) {};
  \node (n12) at (5,1)  {};
  \node (n13) at (7,1)  {};
 \node[fill=gray,scale=1.5, label=right:\LARGE $c_2$] (n14) at (6,-1)  {};
  \node (n15) at (4,1)  {};
   \node (n16) at (5,2)  {};
     \node (n17) at (8,1) {};
  \node (n18) at (7,2)  {};

  \foreach \from/\to in {n11/n12,n11/n13,n11/n14,n12/n15,n12/n16,n13/n17,n13/n18,n16/n15}
    \draw (\from) -- (\to);
    
    \node [below=2cm,align=flush center,style={fill=none,scale=1.5},scale=1.5] at (n11)
        {
            $N_2$
        };

 \node[fill=gray,scale=1.5, label=right:\LARGE $c_1$] (n21) at (12,0) {};
  \node (n22) at (11,1)  {};
  \node (n23) at (13,1)  {};
 \node[fill=gray,scale=1.5, label=right:\LARGE $c_2$] (n24) at (12,-1)  {};
  \node (n25) at (10,1)  {};
   \node (n26) at (11,2)  {};
     \node (n27) at (14,1) {};
  \node (n28) at (13,2)  {};

  \foreach \from/\to in {n21/n22,n21/n23,n21/n24,n22/n25,n22/n26,n23/n27,n23/n28,n25/n26,n27/n28}
    \draw (\from) -- (\to);
    
    \node [below=2cm,align=flush center,style={fill=none,scale=1.5},scale=1.5] at (n21)
        {
            $N_3$
        };

  \node[fill=gray,scale=1.5, label=right:\LARGE $c_1$] (n29) at (18,0) {};
  \node (n30) at (17,1)  {};
  \node (n31) at (19,1)  {};
 \node (n32) at (18,-1)  {};
  \node (n33) at (16,1)  {};
   \node (n34) at (17,2)  {};
     \node (n35) at (20,1) {};
  \node (n36) at (19,2)  {};

  \foreach \from/\to in {n29/n30,n29/n31,n29/n32,n30/n33,n30/n34,n31/n35,n31/n36}
    \draw (\from) -- (\to);
    
    \node [below=2cm,align=flush center,style={fill=none,scale=1.5},scale=1.5] at (n29)
        {
            $N_4$
        };
\end{tikzpicture}
\end{center}
\caption{The four $2$-types of Examples~\ref{tester example} and \ref{running example}.
The vertices labelled `$c_1$' and `$c_2$' are the centres.}
\label{fig:taus}
\end{figure}
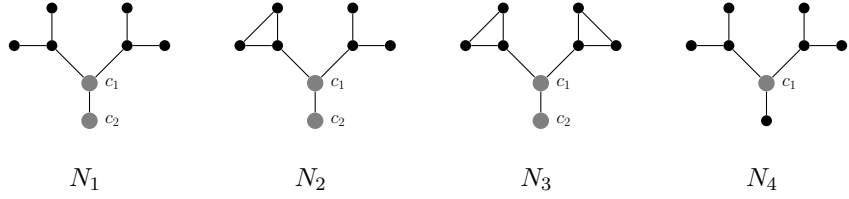
	Let $\phi$ be the $\operatorname{FO}[\{E\}]$-formula $\phi = \exists x \exists y \operatorname{sph}_{\tau_2}(x,y) \land \lnot \exists z\operatorname{sph}_{\tau_4}(z).$
	Consider the property $\mathbf{P}:=\mathbf{P}_{\phi}$. We show that on the class $\mathbf{G}_d$, $\mathbf{P}\cap \mathbf{G}_d$ is uniformly testable with constant time. For this, let $\epsilon \in (0,1]$. Given oracle access to a graph $\mathcal{G} \in \mathbf{G}_d$ and $|V(\mathcal{G})|=n$ as an input, the $\epsilon$-tester proceeds as follows:
\begin{enumerate}
\item If $n < 24d^3/\epsilon$, do a full check of $\mathcal{G}$ and decide if $\mathcal{G} \in \mathbf{P}$. 
\item Otherwise uniformly and independently sample $\alpha = \log_{1-\epsilon d/3}1/3$ vertices from $[n]$.
\item For each sampled vertex, compute its $2$-neighbourhood.
\item If a vertex is found with $2$-type $\tau_4$ then the tester rejects. Otherwise it accepts.
\end{enumerate}
\begin{claim}\label{claim: pt example}
The above $\epsilon$-tester accepts with probability at least $2/3$ if $\mathcal{G} \in \mathbf{P}$ and rejects with probability at least $2/3$ if $\mathcal{G}$ is $\epsilon$-far from $\mathbf{P}$. Furthermore, the $\epsilon$-tester has constant query complexity and runs in constant time.
\end{claim}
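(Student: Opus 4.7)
The plan is to verify the three requirements of an $\epsilon$-tester in turn: completeness, soundness and constant query complexity/running time. For completeness, suppose $\mathcal{G} \in \mathbf{P}$, so $\mathcal{G} \models \phi$ and in particular no vertex has $2$-type $\tau_4$. If $n < 24d^3/\epsilon$, step~1 exhaustively verifies that $\mathcal{G}\models \phi$ and accepts. Otherwise, in step~4 the tester cannot find a vertex of $2$-type $\tau_4$ and therefore accepts with probability~$1$, which is well above the required $2/3$.

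For soundness I argue the contrapositive: if the fraction of vertices of $2$-type $\tau_4$ in $\mathcal{G}$ is strictly less than $\epsilon d/3$, then $\mathcal{G}$ is $\epsilon$-close to $\mathbf{P}$. Write $B:=\{v\in V(\mathcal{G})\mid \mathcal{N}_2(v)\in \tau_4\}$, so $|B|<\epsilon d n/3$. I would construct a graph $\mathcal{G}'\in \mathbf{P}\cap \mathbf{G}_d$ with $\mathrm{dist}(\mathcal{G},\mathcal{G}')\le \epsilon d n$ in two stages. First, destroy every $\tau_4$-neighbourhood: for each $v\in B$ delete a constant number of edges inside $N_2(v)$ so that the induced $2$-neighbourhood at $v$ no longer has the specific tree shape of $N_4$; by iterating greedily and choosing the deleted edges carefully (for instance, edges at the \emph{leaves} of the tree $N_4$ so that the remaining $2$-neighbourhood collapses to a strictly smaller subtree of $N_4$ and hence cannot itself be $\tau_4$), I ensure that no new $\tau_4$-vertex is created and the degree bound is preserved. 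This contributes $O(|B|)=O(\epsilon d n/3)$ modifications. Second, if the resulting graph does not already contain a pair of vertices of $2$-type $\tau_2$, I use the size assumption $n\ge 24d^3/\epsilon$ to find a disjoint isolated region on at most $1+d^2$ vertices (there is enough room because the $2$-neighbourhood sizes are bounded by $1+d^2$ and the modified region so far has size $O(d^2|B|)\ll n$), and install a copy of $N_2$ there using a constant number $O(d^3)$ of edge insertions. Summing both stages keeps the total number of modifications below $\epsilon d n$, contradicting $\epsilon$-farness. Hence at least $\epsilon d n/3$ vertices have $2$-type $\tau_4$, so the probability that none of the $\alpha$ independent uniform samples hits such a vertex is at most $(1-\epsilon d/3)^{\alpha}=1/3$, giving rejection with probability at least $2/3$.

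For the final part, the number of samples $\alpha=\log_{1-\epsilon d/3}(1/3)$ depends only on $\epsilon$ and $d$ and is therefore a constant. Each $2$-neighbourhood contains at most $1+d+d^2$ vertices and can be constructed by a BFS using $O(d^3)$ oracle queries, and comparing it to the fixed type $\tau_4$ takes constant time; the initial full check runs on a graph of size below $24d^3/\epsilon$, also constant in $n$. So both the query complexity and the total running time are $O(1)$, independent of $n$.

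The main obstacle I expect is the soundness direction, specifically the technical bookkeeping required in stage one to guarantee that the local edge deletions destroying existing $\tau_4$-neighbourhoods do not accidentally produce new ones elsewhere. A clean way to handle this is to delete edges in the outer layer of each $N_2(v)$ so that the $2$-neighbourhoods of affected vertices become strict substructures of $N_4$, combined with an amortisation argument over clusters of $\tau_4$-vertices that lie within distance $4$ of each other, ensuring the total deletion cost stays linear in $|B|$ rather than in $d\cdot|B|$.
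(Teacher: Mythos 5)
Your overall strategy coincides with the paper's: completeness is immediate, and soundness reduces to showing that having fewer than $\epsilon d n/3$ vertices of $2$-type $\tau_4$ forces $\epsilon$-closeness, via an explicit repair whose cost is (number of $\tau_4$-vertices) plus a constant. But the two points where the paper does real work are exactly the points you leave open or get wrong. First, the claim that the deletions destroying $\tau_4$-neighbourhoods create no new ones is the crux of your stage one, and you explicitly defer it. The clean argument (the paper's) is to delete, for each $\tau_4$-centre $v$, the single edge $uv$ to its degree-$1$ neighbour $u$: afterwards $v$ has degree exactly $2$; the only vertices whose $2$-type can change are $u$ (now isolated) and the vertices within distance $1$ of $v$, and each of the latter sees the degree-$2$ vertex $v$ inside its new $2$-neighbourhood, whereas every vertex of $N_4$ has degree $1$ or $3$ — so no new copy of $\tau_4$ can appear. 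Your proposed ``amortisation over clusters'' is both unnecessary (one deletion per bad vertex suffices) and risky: the threshold $\epsilon d n/3$ is hard-wired into the sample size $\alpha$, so the repair must cost at most roughly one edge modification per $\tau_4$-vertex plus $2\epsilon d n/3$; an unspecified ``constant number of edges per vertex'' larger than about $2$ breaks the arithmetic.

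Second, your stage two does not produce a graph in $\mathbf{P}$ as written. A bounded-degree graph need not contain any isolated vertices, so you cannot simply ``find'' an isolated region; you must create one by deleting the (up to $8d$) edges incident to the $8$ chosen vertices before inserting the $8$ edges of $N_2$, and those deletions can themselves create up to $8d^4$ new vertices of type $\tau_4$ that your construction, which performs the $\tau_4$-cleanup \emph{before} the insertion, never removes. The paper avoids this by reversing the order: insert the $\tau_2$-copy first (cost at most $8(d+1)$, creating at most $8d^4$ new $\tau_4$-vertices), and only then sweep away all $\tau_4$-vertices, old and new, with the one-deletion-per-vertex argument above; the bound $\epsilon d n - 8(d+1) - 8d^4 \geq \epsilon d n/3$ then follows from $n \geq 24d^3/\epsilon$. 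With that reordering and the degree-$2$ argument supplied, your contrapositive goes through. The completeness and query-complexity/running-time parts of your proposal are fine.
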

\begin{claimproof}
	Note that in $\tau_4$, every vertex has $1$ or $3$ neighbours.
For showing correctness, first assume $\mathcal{G} \in \mathbf{P}$. Then the tester will always accept as there exists no vertex with 2-type $\tau_4$.

	Now assume $\mathcal{G}$ is $\epsilon$-far from $\mathbf{P}$. Then at least $\epsilon d n$ edge modifications are necessary to make $\mathcal{G}$ isomorphic to a graph in $\mathbf{P}$. If $n<24d^3/\epsilon$ then the tester will reject so assume otherwise. Inserting a copy of $\tau_2$ requires at most $8(d+1)$ modifications (pick 8 vertices and remove all incident edges then add the 8 edges to make an isolated copy of $\tau_2$). Removing an edge $uv$ from $\mathcal{G}$ will change the $2$-type of any vertex in the set $N_2^{\mathcal{G}}(u) \cap N_2^{\mathcal{G}}(v)$. Lemma 3.2 (a) of \cite{berkholz2018answering} states that $|N_2^{\mathcal{G}}(u)| \leq d^{2+1}$ and $|N_2^{\mathcal{G}}(v)| \leq d^{2+1}$. Therefore, $|N_2^{\mathcal{G}}(u) \cap N_2^{\mathcal{G}}(v)| \leq d^3$ and inserting a copy of $\tau_2$ could add at most $8d^4$ many copies of $\tau_4$.  After inserting a copy of $\tau_2$ we need to remove all copies of $\tau_4$. Let $v \in V(\mathcal{G)}$ be a vertex with 2-type $\tau_4$. Let $u$ be the neighbour of $v$ with degree 1. If we remove the edge $uv \in E(\mathcal{G})$, $v$'s $2$-type is no longer $\tau_4$. Note that $v$ has exactly $2$ neighbours and $u$ has $0$ neighbours in $\mathcal G\setminus uv$.
	Moreover, we claim that by removing $uv$, we have introduced no new vertices with $2$-type $\tau_4$.
	To see this, observe that deleting $uv$ will only affect the $2$-types of vertices in 
	$N^{\mathcal{G}}_1(v)$. But each vertex $x\in N^{\mathcal{G}}_1(v)$ will have a vertex with exactly two neighbours in its $2$-neighbourhood in $\mathcal G\setminus uv$. Hence
	the new $2$-type of $x$ is not $\tau_4$.	
	This shows that there are at least $ \epsilon d n - 8(d+1) - 8d^4$ vertices with $2$-type $\tau_4$. As $n \geq 24d^3/\epsilon$ and $d \geq 2$, then $8(d+1)  \leq 8d^4 \leq \epsilon d n / 3$. The probability that we sample a vertex with $2$-type $\tau_4$ is therefore at least $\epsilon d n/3n = \epsilon d/3$. Hence the probability that none of the $\alpha$ sampled vertices have $2$-type $\tau_4$ is at most $(1- \epsilon d/3)^\alpha = 1/3$. Therefore with probability at least $2/3$ the tester rejects.

	For the running time, if $n<24d^3/\epsilon$ then we can do a full check of the input graph in time only dependent on $d$ and $\epsilon$. Otherwise, note that the tester samples only a constant number of vertices in (2), and for 
	each of the sampled vertices, the tester needs to make a constant number of oracle queries only to calculate its $2$-neighbourhood in (3), because the degree is bounded. 
	Therefore the tester has constant query complexity and constant running time.
	\end{claimproof}
\end{exa}

Adler and Harwath showed that, on the class of all databases with bounded degree and tree-width, every property definable in monadic second-order logic with counting (CMSO) is uniformly testable in polylogarithmic running time~\cite{adler2018property}. (Where a function is \emph{polylogarithmic} in $n$, if it is a polynomial in $\log n$.)
The logic CMSO is an extension of FO and we, therefore, get the following result which will be used as a subroutine in Section~\ref{sec: non local queries}.

\begin{thm}[\cite{adler2018property}]\label{FO testability} 
 Each property $\mathbf{P} \subseteq \mathbf{C}_d^t$ definable in $\operatorname{FO}$ is uniformly testable on $\mathbf{C}_d^t$ in polylogarithmic running time.
\end{thm}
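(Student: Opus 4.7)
The plan is to derive this theorem as an immediate consequence of the stronger result for CMSO by Adler and Harwath, cited in the paragraph just preceding the statement. Concretely, the strategy is to observe that every first-order definable property on $\mathbf{C}_d^t$ is already CMSO-definable, and then to invoke the CMSO testability theorem as a black box.

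First I would recall the syntax of CMSO, which extends FO by allowing quantification over monadic second-order (set) variables together with modular counting quantifiers. Since FO is obtained syntactically by dropping exactly these additional features, any sentence $\phi \in \operatorname{FO}[\sigma]$ is, verbatim, a CMSO$[\sigma]$ sentence, and hence defines the same property $\mathbf{P}_\phi$ in both logics. Consequently, for any property $\mathbf{P}\subseteq \mathbf{C}_d^t$ defined by an FO sentence $\phi$, the intersection $\mathbf{P}\cap \mathbf{C}_d^t$ coincides with the CMSO-definable property given by $\phi$ on $\mathbf{C}_d^t$.

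Next, I would apply the Adler--Harwath theorem from \cite{adler2018property}: on $\mathbf{C}_d^t$, every CMSO-definable property is uniformly testable in polylogarithmic running time. Applied to $\mathbf{P}$ viewed as a CMSO-definable property, this yields, for every proximity parameter $\epsilon \in (0,1]$, an $\epsilon$-tester with constant query complexity and running time polylogarithmic in $n=|D|$, which is exactly the conclusion of Theorem~\ref{FO testability}.

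Since the work is entirely offloaded to the cited CMSO result, there is no real obstacle; the only point worth being careful about is making sure the parameters ($d$, $t$, the oracle model, and the notion of $\epsilon$-farness) used in the CMSO testability theorem agree with the ones fixed in the Proviso and in the property testing preliminaries above, so that the black-box application is legitimate. Once this alignment is noted, the theorem follows without further work.
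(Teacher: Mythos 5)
Your proposal is correct and matches the paper's own justification exactly: the paper derives this theorem in the paragraph immediately preceding it by noting that CMSO extends FO, so the Adler--Harwath polylogarithmic testability result for CMSO-definable properties on $\mathbf{C}_d^t$ applies verbatim to FO-definable ones. Nothing further is needed.
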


\subsection*{Model of Computation.}
We use Random Access Machines (RAMs) and a uniform cost measure when analysing our algorithms, i.\,e.\ we assume all basic arithmetic operations including random sampling can be done in constant time, regardless of the size of the numbers involved. 
We assume that if we initialise an array, all entries are set to 0 and this can be done in constant time for any length or dimension array. This is achieved by using the lazy array initialisation technique (cf.\ e.g.~\cite{Moret:1991:APN:102912}) where entries are only actually stored when they are first needed. We use one-based indexing for arrays. Let $\mathbf{A}$ be a $1$-dimensional array. We assume that for a number $a \in \mathbb{N}_{\geq 1}$, the entry $\mathbf{A}[a]$ can be accessed in constant time.
% Let $k \in \mathbb{N}_{\geq 1}$ and $\mathbf{A}$ be a $k$-dimensional array. We assume that for a tuple 
%$(a_1,a_2,\dots,a_k) \in  \mathbb{N}_{\geq 1}^k$, the entry  $\mathbf{A}[a_1,a_2,\dots,a_k]$
%at position $(a_1,\dots,a_k)$ can be accessed in constant time.
%The address of the entry $\mathbf{A}[a_1,\dots,a_k]$ can be found by a formula that requires $k$ multiplications and $k$ additions and hence in the RAM-model the entry $\mathbf{A}[a_1,\dots,a_k]$ can be accessed in constant time.

\section{Properties of first-order queries on bounded degree}\label{sec: local and non local queries}

In this section, we shall give some useful normal forms of FO queries. We shall then give a characterisation and some results for local FO queries.  

\subsection{General first-order queries}
We make use of the following lemma to simplify Boolean combinations of sphere-formulas. We shall use this result to show we can write FO queries in a special type of Hanf normal form that groups the Hanf-sentences and the sphere-formulas in a convenient way. 

\begin{lem}[\cite{berkholz2018answering}]\label{d-equivalent disjunction}
Let $r,k,d\in \mathbb N$ with $k \geq 1$, $d \geq 2$ and let $\sigma$ be a schema. For every Boolean combination $\phi(\bar{x})$ of sphere-formulas of degree at most $d$ and radius at most $r$, there exists an $I \subseteq T_{r}^{\sigma, d}(k)$ such that $\phi(\bar{x})$ is $d$-equivalent to $\bigvee_{\tau \in I}\operatorname{sph}_{\tau}(\bar{x})$.

Furthermore, given $\phi(\bar{x})$, the set $I$ can be computed in time $ \poly(\|\phi\|)\cdot 2^{(kd^{r+1})^{\mathcal{O}(\|\sigma\|)}}.$
\end{lem}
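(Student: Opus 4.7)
The plan is to exploit the fact that, on databases of degree at most $d$, the $r$-type of a $k$-tuple is a complete structural invariant for its $r$-neighbourhood, and in particular every tuple has \emph{exactly one} $r$-type in $T_r^{\sigma,d}(k)$. Consequently, the sphere-formulas $\{\operatorname{sph}_\tau(\bar x) \mid \tau \in T_r^{\sigma,d}(k)\}$ form a partition of the $k$-tuples from such databases: exactly one of them is true of any given $\bar a$. Any Boolean combination $\phi(\bar x)$ of sphere-formulas of radius at most $r$ is therefore $d$-equivalent to the disjunction of those $\operatorname{sph}_\tau(\bar x)$ whose $\tau$ makes $\phi$ true.

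More concretely, for each $r$-type $\tau\in T_r^{\sigma,d}(k)$ fix a canonical representative $(\mathcal{D}_\tau,\bar a_\tau)$, and set
\[
I\;:=\;\{\,\tau\in T_r^{\sigma,d}(k)\mid \mathcal{D}_\tau\models\phi(\bar a_\tau)\,\}.
\]
I would first check the equivalence. Let $\mathcal{D}$ have degree at most $d$ and let $\bar a\in D^k$. Let $\tau^\ast$ be the (unique) $r$-type of $\bar a$ in $\mathcal{D}$. Every sphere-formula $\operatorname{sph}_{\tau'}(\bar x)$ with $\tau'\in T_{r'}^{\sigma,d}(k)$ for some $r'\le r$ that appears in $\phi$ is decided by $\tau^\ast$ alone, because the $r'$-neighbourhood of $\bar a$ is determined by its $r$-neighbourhood. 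Hence $\mathcal{D}\models\phi(\bar a)$ iff $\mathcal{D}_{\tau^\ast}\models\phi(\bar a_{\tau^\ast})$, i.e.\ iff $\tau^\ast\in I$, i.e.\ iff $\mathcal{D}\models\bigvee_{\tau\in I}\operatorname{sph}_\tau(\bar a)$.

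For the running time I would bound the number of candidate $r$-types and the cost of testing membership in $I$. An $r$-neighbourhood of a $k$-tuple in a $\sigma$-db of degree at most $d$ has at most $k\cdot\sum_{i=0}^{r}d^i\le k\,d^{r+1}$ elements, so the number of possible isomorphism types of such an expanded sub-database is at most $2^{(kd^{r+1})^{\mathcal O(\|\sigma\|)}}$ (this bound on $|T_r^{\sigma,d}(k)|$ is essentially the counting argument from the reference, and is the main technical calculation to verify). Canonical representatives of all $r$-types can therefore be enumerated in time $2^{(kd^{r+1})^{\mathcal O(\|\sigma\|)}}$. For each representative, testing whether it satisfies a single sphere-subformula of $\phi$ reduces to an isomorphism test between two structures of bounded size (depending only on $k,d,r,\sigma$, hence constant), and Boolean combining the outcomes costs $\operatorname{poly}(\|\phi\|)$. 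The total time is thus $\operatorname{poly}(\|\phi\|)\cdot 2^{(kd^{r+1})^{\mathcal O(\|\sigma\|)}}$, as claimed.

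The main obstacle is getting the cardinality bound on $T_r^{\sigma,d}(k)$ right and verifying that the per-type evaluation of $\phi$ really fits inside $\operatorname{poly}(\|\phi\|)$; everything else (partition property, equivalence, enumeration) follows directly from the definition of $r$-type on bounded-degree databases.
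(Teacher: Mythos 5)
The paper imports this lemma from \cite{berkholz2018answering} without giving a proof, and your argument is correct and is the standard one: the $r$-types in $T_r^{\sigma,d}(k)$ partition the $k$-tuples of bounded-degree databases, each sphere-subformula of radius $r'\le r$ is decided by the $r$-type alone, so $I$ can be read off by evaluating $\phi$ on one canonical representative per type, and bounding $|N_r(\bar a)|\le kd^{r+1}$ and counting $\sigma$-structures on that many elements yields the stated time bound. The one point worth making explicit is that the Boolean combination may contain sphere-formulas over proper sub-tuples of $\bar x$ (as happens when such combinations arise from Hanf normal form); your argument still goes through because the $r'$-type of any sub-tuple is determined by the $r$-type of the full tuple, since shortest paths of length at most $r'$ from the sub-tuple stay inside $N_r(\bar a)$.
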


In the following lemma, we show that we can write any FO query as a disjunction of conjunctions of a sphere-formula and a boolean combination of Hanf-sentences. This normal form will be used in Lemma \ref{lemma: computing rel r-types}.

%where we use this form to help us compute a set of $r$-types such that with high probability, 
%
%for any $\bar{a} \in D^k$,
%\begin{enumerate}
%\item if $\bar{a} \in \phi(\mathcal{D})$, then the $r$-type of $\bar{a}$ in $\mathcal{D}$ is in $T$, and
%\item if $\bar{a} \in D^k \setminus \phi(\mathcal{D},\mathbf{C}_d^t, \epsilon )$, then the $r$-type of $\bar{a}$ in $\mathcal{D}$ is not in $T$.
%\end{enumerate} 

\begin{lem}\label{normal-form lemma}
Let $\phi(\bar{x}) \in \operatorname{FO}$ and $|\bar{x}|=k$. Let $r$ be the Hanf locality radius of $\phi$. For every $d \in \mathbb{N}$ with $d  \geq 2$ there exists a computable, $d$-equivalent formula to $\phi$ of the form
\begin{equation}\label{normal-form}
	\chi(\bar{x}) =\bigvee_{i\in [m]}\Big(\operatorname{sph}_{\tau_i}(\bar{x}) \land \psi^s_i \Big)
\end{equation}
for some $m \in \mathbb{N}$, where for all $i \in [m]$, $\tau_i$ is an $r$-type with $k$ centres and $\psi^s_i$ is a conjunction of Hanf-sentences and negated Hanf-sentences. For each $\phi(\bar{x}) \in \operatorname{FO}$, we fix such a $d$-equivalent formula to $\phi$ (so we can refer to \emph{the} $d$-equivalent formula of $\phi$ in the form (\ref{normal-form})).
\end{lem}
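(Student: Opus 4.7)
The plan is to start from Theorem~\ref{hnf} and then reshape the resulting formula by standard propositional manipulation together with Lemma~\ref{d-equivalent disjunction}. First I would apply Theorem~\ref{hnf} to obtain a $d$-equivalent formula $\psi(\bar x)$ in Hanf normal form with Hanf locality radius $r$; this is a Boolean combination of Hanf-sentences and sphere-formulas, each of radius at most $r$. Converting $\psi$ to disjunctive normal form yields $\psi \equiv \bigvee_j \alpha_j$, where each $\alpha_j$ is a conjunction of literals, each literal being a Hanf-sentence, a negated Hanf-sentence, a sphere-formula, or a negated sphere-formula.

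Next, for each $j$, I would split $\alpha_j \equiv \alpha_j^{\mathrm{sph}} \land \psi_j^s$, where $\alpha_j^{\mathrm{sph}}$ collects the (possibly negated) sphere-literals and $\psi_j^s$ collects the (possibly negated) Hanf-sentence literals. If $\alpha_j^{\mathrm{sph}}$ is empty I replace it by the tautology $\bigvee_{\tau \in T_r^{\sigma,d}(k)}\operatorname{sph}_\tau(\bar x)$, and if $\psi_j^s$ is empty I append a trivially true Hanf-sentence such as $\exists^{\geq 0} x\,\operatorname{sph}_\tau(x)$, so that the second factor is a genuine conjunction of (negated) Hanf-sentences. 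Since $\alpha_j^{\mathrm{sph}}$ is a Boolean combination of sphere-formulas of radius at most $r$, Lemma~\ref{d-equivalent disjunction} yields a set $I_j \subseteq T_r^{\sigma,d}(k)$ with $\alpha_j^{\mathrm{sph}} \equiv_d \bigvee_{\tau \in I_j}\operatorname{sph}_\tau(\bar x)$.

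Distributing conjunction over disjunction then gives
\[
\phi(\bar x) \;\equiv_d\; \bigvee_j \bigvee_{\tau \in I_j}\bigl(\operatorname{sph}_\tau(\bar x) \land \psi_j^s\bigr),
\]
which after re-indexing the double disjunction as $\bigvee_{i \in [m]}$ and setting $\tau_i := \tau$ and $\psi_i^s := \psi_j^s$ has precisely the shape claimed in~(\ref{normal-form}). Every step in the construction is effective, so a suitable $\chi$ is computable from $\phi$.

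The only subtle point is matching the locality radii: Lemma~\ref{d-equivalent disjunction} hands us positive sphere-formulas of one fixed radius $r$, and this is usable here precisely because the Hanf locality radius bound from Theorem~\ref{hnf} guarantees that every sphere-formula appearing in $\psi$ has radius at most $r$, so all sphere-literals can be uniformised to radius $r$. The remaining work is routine propositional rewriting.
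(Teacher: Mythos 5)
Your proposal is correct and follows essentially the same route as the paper's proof: apply Theorem~\ref{hnf}, rewrite in DNF, group the sphere-literals and Hanf-sentence literals per disjunct, replace each sphere part by a disjunction of sphere-formulas via Lemma~\ref{d-equivalent disjunction}, and distribute. Your extra bookkeeping (negated sphere-literals, empty conjuncts, uniformising radii to $r$) is a slightly more careful version of what the paper leaves implicit, so no further comment is needed.
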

\begin{proof}
From $\phi$ we can construct a formula in the required form as follows. Firstly, by Theorem \ref{hnf} we construct a $d$-equivalent formula in Hanf normal form. Next, we write the resulting formula in DNF to obtain a formula of the form
\begin{equation*}
	\chi(\bar{x})' =\bigvee_{i\in[l]}\Big(\psi_i^f(\bar{x}) \land \psi^s_i \Big)
\end{equation*}
for some $l \in \mathbb{N}$, where for $i \in [l]$, $\psi_i^f(\bar{x})$ is a conjunction of sphere-formulas and $\psi^s_i$ is a conjunction of Hanf-sentences and negated Hanf-sentences. Then, by Lemma \ref{d-equivalent disjunction}, we can replace each $\psi_i^f(\bar{x})$ with a $d$-equivalent formula $ \bigvee_{t \in \lambda_i}\operatorname{sph}_{t}(\bar{x})$ where $\lambda_i$ is a set of $r$-types with $k$ centres. Finally, we replace each $\bigvee_{t \in \lambda_i}\operatorname{sph}_{t}(\bar{x}) \land \psi^s_i$ with $ \bigvee_{t_i \in \lambda_i}( \operatorname{sph}_{t_i}(\bar{x}) \land \psi^s_i)$. The resulting formula is in the required form.
\end{proof}

In Theorems \ref{theorem: strengthened local main} and \ref{theorem: strengthened non local main}, we reduce the minimum size of the answer set required to enumerate all answers to the query $\phi$ in our approximate enumeration algorithms. We show we only actually require an answer set of size $\gamma n ^c$, where $c:=\operatorname{conn}(\phi,d)$ is the maximum number of connected components in the $r$-neighbourhood (where $r$ is the Hanf-locality radius of $\phi$) of an answer to $\phi$. We define $\operatorname{conn}(\phi,d)$ below.

\begin{defi}[$\operatorname{conn}(\phi,d)$] Let $\phi(\bar{x}) \in \operatorname{FO}[\sigma]$ where $|\bar{x}| =k$ and let $\chi(\bar{x})$ be the formula in the form (\ref{normal-form}) of Lemma \ref{normal-form lemma} that is $d$-equivalent to $\phi$. We define \emph{$\operatorname{conn}(\phi,d)$} as the maximum number of connected components of the neighbourhood types that appear in the sphere-formulas of $\chi$.
Note that $\operatorname{conn}(\phi,d) \leq k$.
\end{defi}

Recall that we fix a formula $\chi$ in the form (\ref{normal-form}) of Lemma \ref{normal-form lemma} for each FO formula $\phi$, and hence $\operatorname{conn}(\phi,d)$ is well defined.

\subsection{Local first-order queries}
We shall start by showing that for any local FO query $\phi$ we can compute a set of $r$-types $T$ (where $r$ is the locality radius) such that for any $\sigma$-db $\mathcal{D}$ and tuple $\bar{a}$, $\bar{a}$ is an answer to $\phi$ on $\mathcal{D}$ if and only if the $r$-type of $\bar{a}$ is in $T$. 

\begin{lem}\label{lemma: local query rel r-types}
There is an algorithm that, given a local query $\phi(\bar{x})\in \operatorname{FO}[\sigma]$ with $k$ free variables and given the locality radius $r$ of $\phi$, computes a set of $r$-types $T$ with $k$ centres such that for any $\sigma$-db $\mathcal{D}$ and tuple $\bar{a} \in D^k$, $\bar{a} \in \phi(\mathcal{D})$ if and only if the $r$-type of $\bar{a}$ in $\mathcal{D}$ is in $T$.
\end{lem}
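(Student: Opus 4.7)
The approach is to use locality directly: for each $r$-type $\tau \in T_{r}^{\sigma,d}(k)$, I decide membership of $\tau$ in $T$ by evaluating $\phi$ on one concrete finite representative of $\tau$.

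First I would enumerate $T_r^{\sigma,d}(k)$. Under the standing degree bound $d$, each $r$-neighbourhood of a $k$-tuple has at most $k \cdot \sum_{i=0}^{r} d^i$ elements, so up to isomorphism there are only finitely many such types; they can all be generated, for instance by iterating through all $\sigma$-databases on the domain $\{1,\dots,k \cdot \sum_{i=0}^{r} d^i\}$ together with a distinguished $k$-tuple, discarding those with degree larger than $d$ or with an element at distance more than $r$ from the tuple, and grouping the rest by isomorphism. For each $\tau$, pick a canonical representative: a small $\sigma$-db $\mathcal{D}_\tau$ and a tuple $\bar{a}_\tau \in D_\tau^k$ with $D_\tau = N_r^{\mathcal{D}_\tau}(\bar{a}_\tau)$ and $(\mathcal{D}_\tau, \bar{a}_\tau) \in \tau$. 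Since $\mathcal{D}_\tau$ contains only elements within distance $r$ of $\bar{a}_\tau$, we have $\mathcal{N}_r^{\mathcal{D}_\tau}(\bar{a}_\tau) = (\mathcal{D}_\tau, \bar{a}_\tau) \in \tau$, and $\mathcal{D}_\tau$ has degree at most $d$.

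Now set $T := \{\tau \in T_r^{\sigma,d}(k) : \mathcal{D}_\tau \models \phi(\bar{a}_\tau)\}$. This is computable because FO model checking on a fixed finite structure is decidable (indeed doable by brute force over the bounded domain of $\mathcal{D}_\tau$). For correctness, take any $\sigma$-db $\mathcal{D}$ and any $\bar{a} \in D^k$, and let $\tau$ be the $r$-type of $\bar{a}$ in $\mathcal{D}$. Then $\mathcal{N}_r^{\mathcal{D}}(\bar{a}) \cong \mathcal{N}_r^{\mathcal{D}_\tau}(\bar{a}_\tau)$ by definition of $r$-type, and by the $r$-locality of $\phi$ this gives $\mathcal{D} \models \phi(\bar{a})$ iff $\mathcal{D}_\tau \models \phi(\bar{a}_\tau)$, iff $\tau \in T$. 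I do not anticipate a real obstacle; the only subtle point is to verify that the canonical representative $\mathcal{D}_\tau$ itself has the target $r$-type at $\bar{a}_\tau$ (which is built into the construction, as the whole of $\mathcal{D}_\tau$ lies within the $r$-neighbourhood of $\bar{a}_\tau$), and to apply locality across arbitrary pairs of databases and tuples, which is exactly the quantification used in the definition of a local query.
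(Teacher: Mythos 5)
Your proposal is correct and follows essentially the same route as the paper's proof: enumerate the finitely many $r$-types, evaluate $\phi$ on a fixed representative of each type with its centre tuple, place the type in $T$ exactly when the representative satisfies $\phi$, and invoke locality for correctness. The additional details you supply (how to generate the types and why the representative's $r$-type at its centres is $\tau$ itself) are sound and merely make explicit what the paper leaves implicit.
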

\begin{proof}
 Let $T$ be an empty list. For each $r$-type $\tau$ with $k$ centres we do the following. Let $\mathcal{D}_{\tau}$ be the fixed representative $\sigma$-db of $\tau$ where $\bar{c}$ is the centre tuple, then if $\mathcal{D}_{\tau} \models \phi(\bar{c})$ add $\tau$ to $T$. Then since $\phi$ is local and $r$ is the locality radius of $\phi$, for every $\sigma$-db $\mathcal{D}$ and tuple $\bar{a} \in D^k$, $\mathcal{D} \models \phi(\bar{a})$ if and only if the $r$-type of $\bar{a}$ in $\mathcal{D}$ is in $T$.
\end{proof}

Using the previous lemma we shall show that for any local FO query, $\sigma$-db $\mathcal{D}$ and tuple $\bar{a}$ from $\mathcal{D}$ it can be decided in constant time whether $\bar{a}$ is an answer to $\phi$ on $\mathcal{D}$. We will use this when approximately enumerating answers to local FO queries.

\begin{lem}\label{lemma: local membership testing} Let $\phi(\bar{x})\in \operatorname{FO}[\sigma]$ be a local query with $k$ free variables. There is an algorithm that, given a $\sigma$-db $\mathcal{D}$ and a tuple $\bar{a} \in D^k$, decides whether $\bar{a} \in \phi(\mathcal{D})$ in constant time.
\end{lem}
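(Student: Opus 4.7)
The plan is to leverage Lemma~\ref{lemma: local query rel r-types} together with the fact that $r$-neighbourhoods in bounded-degree databases have bounded size. Since $\phi$ is fixed, let $r$ be its locality radius and let $T$ be the (fixed) set of $r$-types with $k$ centres produced by Lemma~\ref{lemma: local query rel r-types}; both $r$ and $T$ can be regarded as hard-coded constants of the algorithm (or, equivalently, precomputed once from $\phi$ without reading the input database, so the cost does not depend on $|D|$).

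On input $\mathcal{D}$ and $\bar{a}\in D^k$, the algorithm will first compute $\mathcal{N}_r^{\mathcal{D}}(\bar{a})$ by performing a breadth-first search of depth $r$ in $\mathcal{G}(\mathcal{D})$ starting from each component of $\bar{a}$, using the oracle/random access to the relations of $\mathcal{D}$. Because $\deg(\mathcal{D})\leq d$ and there are $k$ centres, we have $|N_r^{\mathcal{D}}(\bar{a})|\leq k\sum_{i=0}^{r} d^i$, which is a constant depending only on $k$, $d$ and $r$; collecting the induced sub-database $\mathcal{D}[N_r(\bar{a})]$ therefore requires only constantly many oracle accesses and takes constant time under our RAM cost model.

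Next, the algorithm determines the $r$-type $\tau$ of $\bar{a}$ in $\mathcal{D}$ by comparing $\mathcal{N}_r^{\mathcal{D}}(\bar{a})$ against each representative of the finitely many $r$-types in $T_{r}^{\sigma,d}(k)$. Each isomorphism test is between two structures of constant size (bounded by the constant above), and $|T_{r}^{\sigma,d}(k)|=\operatorname{c}(r,k)$ is itself a constant, so the total work is constant. Finally, the algorithm returns \textbf{yes} if $\tau\in T$ and \textbf{no} otherwise; correctness is immediate from the characterisation provided by Lemma~\ref{lemma: local query rel r-types}.

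The only conceptual subtlety is the treatment of the preprocessing of $T$: it depends on $\phi$ but not on $\mathcal{D}$, so its cost is a constant absorbed into the statement. I do not expect any genuine obstacle here; the lemma is essentially a direct packaging of locality plus the bounded-degree bound on neighbourhood sizes.
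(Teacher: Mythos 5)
Your proposal is correct and follows exactly the paper's argument: compute the set $T$ of $r$-types from Lemma~\ref{lemma: local query rel r-types} (a constant depending only on $\phi$, not $\mathcal{D}$), compute the $r$-type of $\bar{a}$ in constant time using the degree bound, and check membership in $T$. The extra detail you give on the breadth-first search and the constant-size isomorphism tests is a faithful elaboration of what the paper leaves implicit.
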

\begin{proof}
%First note that for any constant $r$, the $r$-neighbourhood type of the tuple $\bar{a}$ in $\mathcal{D}$ can be computed in constant time.

Let $r$ be the locality radius of $\phi$. First let us compute the set of $r$-types $T$ as in Lemma \ref{lemma: local query rel r-types}. We shall then compute the $r$-type $\tau$ of $\bar{a}$ in $\mathcal{D}$. By Lemma \ref{lemma: local query rel r-types} if $\tau \in T$ then $\bar{a} \in \phi(\mathcal{D})$ and if $\tau \not\in T$ then $\bar{a} \not\in \phi(\mathcal{D})$.

Since $r$ does not depend on $\mathcal{D}$, the $r$-type of $\bar{a}$ in $\mathcal{D}$ can be computed in constant time. Furthermore, computing the set $T$ does not depend on $\mathcal{D}$, and hence it can be decided in constant time whether $\bar{a} \in \phi(\mathcal{D})$.
%
%Let $\chi$ be the formula that is $d$-equivalent to $\phi$ in the form (\ref{local normal-form}) of Lemma \ref{lemma: hanf locality radius and locality radius}. Note $\chi$ can be computed in time independent of $|D|$. Let $r$ be the Hanf locality radius of $\chi$ and let $\tau$ be the $r$-type of $\bar{a}$ in $\mathcal{D}$ (which can be computed in constant time). If there exists a sphere formula $\operatorname{sph}_{\tau}(\bar{x})$ in $\chi$ then clearly $\mathcal{D} \models \chi(\bar{a})$ and hence $\bar{a} \in \phi(\mathcal{D})$. If there does not exist a sphere formula $\operatorname{sph}_{\tau}(\bar{x})$ in $\chi$ then clearly $\mathcal{D} \not\models \chi(\bar{a})$ and hence $\bar{a} \not\in \phi(\mathcal{D})$. Therefore it can be decided in constant time whether $\bar{a} \in \phi(\mathcal{D})$ or not.
\end{proof}

We shall finish this section with the following characterisation of local FO queries. We do not make use of this characterisation but we include it to aid intuition. The proof of the observation is straightforward but we shall give it for completeness.

\begin{obs}\label{lemma: hanf locality radius and locality radius} 
%Let $\sigma$ be a schema and let $\phi(\bar{x})\in \operatorname{FO}[\sigma]$ be a local query. Let $r$ be the Hanf locality radius of $\phi$. For every $d \in \mathbb{N}$ with $d  \geq 2$ there exists a, $d$-equivalent formula to $\phi$ which is of the form
% \begin{equation}\label{local normal-form}
%	\chi(\bar{x}) =\bigvee_{i\in [m]}\operatorname{sph}_{\tau_i}(\bar{x}) 
%\end{equation}
%where each $\tau_i$ is an $r$-type with $|\bar{x}|$ centres.
%Moreover, $\chi$ is computable from $\phi$.
Let $\phi(\bar{x})\in \operatorname{FO}[\sigma]$. Then $\phi$ is local if and only if $\phi$ is $d$-equivalent to a boolean combination of sphere-formulas.

Furthermore, for any local FO query $\phi$, the locality radius of $\phi$ is equal to the Hanf locality radius of $\phi$. Therefore, since the Hanf locality radius of an FO query is computable by Theorem \ref{hnf}, the locality radius of a local FO query is also computable.
\end{obs}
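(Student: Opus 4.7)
The plan is to prove the biconditional directly, with one direction via Lemma \ref{lemma: local query rel r-types} and the other by unfolding, and then to derive the furthermore by identifying the Hanf locality radius as a valid locality radius.

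For $(\Leftarrow)$, I would argue: if $\phi\equiv_d \beta$ where $\beta$ is a Boolean combination of sphere-formulas whose maximum radius is $r$, then $\phi$ is local with locality radius $r$. Indeed, for any $\mathcal D_1,\mathcal D_2\in\mathbf C_d$ and tuples $\bar a_i\in D_i^k$ with $\mathcal N_r^{\mathcal D_1}(\bar a_1)\cong \mathcal N_r^{\mathcal D_2}(\bar a_2)$, every sphere-formula $\operatorname{sph}_\tau(\bar x)$ appearing in $\beta$, having radius $s\le r$, takes the same truth value on $(\mathcal D_1,\bar a_1)$ and $(\mathcal D_2,\bar a_2)$, since its value is determined by the $s$-type of $\bar a$ and this is in turn determined by the $r$-type of $\bar a$; hence $\beta$, and therefore $\phi$, agrees on the two inputs. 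For $(\Rightarrow)$, I would simply apply Lemma \ref{lemma: local query rel r-types} with $r$ the locality radius of $\phi$, obtaining $T\subseteq T_r^{\sigma,d}(k)$ with $\phi\equiv_d\bigvee_{\tau\in T}\operatorname{sph}_\tau(\bar x)$, which is itself a Boolean combination of sphere-formulas.

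For the furthermore, I would prove two inequalities between a (smallest) locality radius $r$ and the Hanf locality radius $r'$ of the fixed HNF $\psi$ of $\phi$. One direction is essentially free: the formula $\bigvee_{\tau\in T}\operatorname{sph}_\tau(\bar x)$ constructed in $(\Rightarrow)$ is already in Hanf normal form with Hanf locality radius $r$, so $r'\le r$. For $r\le r'$, I would apply the $(\Leftarrow)$ argument to $\psi$: $\psi$ is a Boolean combination of sphere-formulas of radius at most $r'$ and Hanf-sentences (which depend only on $\mathcal D$, not on $\bar a$), and I would argue that on $\mathbf C_d$ the Hanf-sentence contributions cannot flip the overall truth value in a way inconsistent with the locality of $\phi\equiv_d\psi$. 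The computability of the locality radius then follows from Theorem \ref{hnf}.

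The main obstacle is the careful justification that the Hanf-sentence summands in the fixed HNF $\psi$ do not spoil locality at radius $r'$. I plan to handle this by first using Lemma \ref{d-equivalent disjunction} to collapse the sphere-formula part of $\psi$ into a disjunction over $r'$-types, and then exploiting $\phi$'s locality at some (possibly larger) radius to show that, for any pair of tuples in $\mathbf C_d$ databases with isomorphic $r'$-neighbourhoods, any residual dependence of $\psi$ on the global Hanf-sentence values would produce a witness against locality, a contradiction; hence $\psi(\bar a)$ is determined by the $r'$-type of $\bar a$ alone, as required.
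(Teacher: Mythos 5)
Your proof of the equivalence itself coincides with the paper's: for $(\Leftarrow)$ you unfold the Boolean combination and note that every sphere-formula of radius at most $r$ is determined by the $r$-type, and for $(\Rightarrow)$ you invoke Lemma~\ref{lemma: local query rel r-types} to get $\phi\equiv_d\bigvee_{\tau\in T}\operatorname{sph}_\tau(\bar x)$. This is exactly the paper's argument, and note that the paper explicitly proves \emph{only} this first part, leaving the ``furthermore'' unproved.

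Your additional attempt at the ``furthermore'' has two gaps. First, for $r'\le r$: exhibiting \emph{one} Hanf-normal-form formula of Hanf locality radius $r$ (namely $\bigvee_{\tau\in T}\operatorname{sph}_\tau(\bar x)$) does not bound $r'$, because the paper fixes the Hanf locality radius of $\phi$ to be that of the particular $\psi$ produced by Theorem~\ref{hnf}, which is only guaranteed to be at most $4^{qr(\phi)}$ and need not be minimal; the asserted equality only makes sense if both radii are read as minimal, and your argument does not address this mismatch. Second, for $r\le r'$: the claim that ``any residual dependence on the Hanf-sentence values would produce a witness against locality'' is currently a restatement of the goal; it needs a concrete construction. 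The standard one: suppose $\bar a_1\in D_1^k$ and $\bar a_2\in D_2^k$ have isomorphic $r'$-neighbourhoods but $\mathcal D_1\models\phi(\bar a_1)$ and $\mathcal D_2\not\models\phi(\bar a_2)$. Pass to the disjoint union $\mathcal D:=\mathcal D_1\sqcup\mathcal D_2$, which still has degree at most $d$. In $\mathcal D$ the Hanf-sentences take the \emph{same} truth values when evaluating $\psi$ at $\bar a_1$ and at $\bar a_2$ (they depend only on $\mathcal D$), and the sphere-formulas of $\psi$ agree at the two tuples since their $r'$-types are preserved by the disjoint union; hence $\mathcal D\models\phi(\bar a_1)\leftrightarrow\phi(\bar a_2)$. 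But locality of $\phi$ at its radius $r$, together with the fact that $r$-neighbourhoods are also preserved under disjoint union, forces $\mathcal D\models\phi(\bar a_1)$ and $\mathcal D\not\models\phi(\bar a_2)$ --- a contradiction. Without this (or an equivalent) construction, the hard direction of the ``furthermore'' is not established.
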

\begin{proof}
We will give a proof of the first part of the observation only.
 Let $|\bar{x}| =k$.
First let us assume that $\phi$ is $d$-equivalent to a FO formula $\chi$ that is a boolean combination of sphere-formulas. Let $r$ be the Hanf locality radius of $\chi$. Then since $\chi$ contains no Hanf-sentences, for any $\sigma$-dbs $\mathcal{D}_1$ and $\mathcal{D}_2$ and tuples $\bar{a}_1 \in D_1^k$ and $\bar{a}_2 \in D_2^k$, if $\mathcal{N}^{\mathcal D_1}_r(\bar{a}_1) \cong \mathcal{N}^{\mathcal{D}_2}_r(\bar{a}_2)$ then, $\mathcal{D}_1 \models \phi(\bar{a}_1)$ if and only if $\mathcal{D}_2 \models \phi(\bar{a}_2)$. Hence $\phi$ is local and $r$ is the locality radius of $\phi$.
 
Now let us assume that $\phi$ is local. Let $T$ be the set of $r$-types as constructed in Lemma \ref{lemma: local query rel r-types}. Therefore $\phi$ is $d$-equivalent to the formula $\bigvee_{\tau \in T} \operatorname{sph}_{\tau}(\bar{x})$ which is in the required form.
\end{proof}

\section{Enumerating Answers to Local First-Order Queries}\label{sec: local queries}

Assume $q$ is a local FO query with $k$ free variables and $\mathcal{D}$ is a $\sigma$-db, such that the set $q(\mathcal{D})$ is larger than a fixed proportion of all possible $k$-tuples, i.\,e.\ $|q(\mathcal{D})| \geq \mu |D|^k$ for some fixed $\mu \in (0,1)$. It is easy to construct an algorithm that enumerates the set $q(\mathcal{D})$ with amortized constant delay, i.\,e.~the average delay between any two outputs is constant. For each tuple $\bar{a} \in D^k$ (processed in, say, lexicographical order), the algorithm tests if $\bar{a}$ is in $q(\mathcal{D})$ (which can be done in constant time by Lemma \ref{lemma: local membership testing} as $q$ is local) and outputs $\bar{a}$ if $\bar{a} \in q(\mathcal{D})$. Since we are assuming that $|q(\mathcal{D})|$ is larger than a fixed proportion of all possible tuples, the overall running time of the algorithm is $\mathcal{O}(|D|^k)$ and hence the algorithm has constant amortized delay. In this section we prove that we can de-amortize this algorithm using random sampling.

We begin this section by showing that there exists a randomised algorithm which does the following. The input is a set $V$ which is partitioned into two sets $V_1$ and $V_2$. We assume that the algorithm can test in constant time if a given element from $V$ is in $V_1$ or $V_2$. After a constant time preprocessing phase, the algorithm enumerates a set $S$ of elements with $S \subseteq V_1$, with constant delay. Furthermore, we show that if $|V_1|$ is large enough then with high probability $S=V_1$. We then use this result to prove our main theorem of this section (Theorem \ref{thm: local queries enumeration}) on the approximate enumeration of the answers to a local query. In Theorem~\ref{theorem: strengthened local main} we show that the relative size of the answer set can be reduced whilst still guaranteeing that with high probability we enumerate all answers to the query.

\begin{lem}\label{lemma: partitioned set enumeration} Fix $\mu \in (0,1)$ and $\delta \in (0,1)$. There exists a randomised algorithm which does the following. The input is a set $V$ which is partitioned into two sets $V_1$ and $V_2$. We assume that the algorithm is given access to the size of $V$ and can decide in constant time whether a given element from $V$ is in $V_1$ or $V_2$. The algorithm outputs a set $S \subseteq V_1$ such that if $|V_1| \geq \mu |V|$ then, with probability at least $\delta$, $S=V_1$.

The algorithm has constant preprocessing time and enumerates $S$ with no duplicates and constant delay between any two consecutive outputs. 
\end{lem}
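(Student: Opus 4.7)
My plan is to implement the algorithm via uniform random sampling with replacement from $V$, using a lazy-initialised boolean marker array $A$ over $V$ (buildable in constant time via the technique recalled in Section~\ref{section prelim}) to record which elements have already been emitted. In the preprocessing phase we only allocate $A$ and compute the sampling budget $N = N(\mu,\delta)$ specified below. The enumeration phase then runs a sampling loop: draw $v \in V$ uniformly at random, test in constant time whether $v \in V_1$, and if so and $A[v] = 0$, set $A[v] := 1$ and emit $v$; otherwise continue. Once $N$ samples have been spent, the algorithm issues the end-of-enumeration message.

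The inclusion $S \subseteq V_1$ is immediate from the explicit membership test, and the marker array guarantees that no element is output twice. For the probabilistic guarantee, I fix $v \in V_1$ and note that the probability that $v$ is never sampled within $N$ uniform trials is $(1-1/|V|)^N \leq e^{-N/|V|}$. A union bound over $V_1$ shows that the probability of missing some element of $V_1$ is at most $|V_1|\,e^{-N/|V|} \leq |V|\,e^{-N/|V|}$. Choosing $N$ so that this quantity is at most $1-\delta$ yields $\Pr[S = V_1] \geq \delta$ whenever $|V_1| \geq \mu|V|$; a short calculation shows that a value $N = \lceil|V|(\ln|V| + \ln(1/(1-\delta)))\rceil$ suffices.

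Every iteration of the sampling loop costs only $O(1)$ because of the constant-time membership oracle on $V_1$ and the constant-time array operations, and a single iteration produces at most one output. The subtle point is therefore to bound the number of iterations that elapse between two consecutive outputs, and between the last output and the end-of-enumeration message, by a constant depending only on $\mu$ and $\delta$. My plan here is to schedule a fixed number of samples per emission slot, exploiting the fact that so long as at least a fixed fraction of $V_1$ remains unseen, the fraction of fresh $V_1$-elements in $V$ is bounded below by a positive constant determined by $\mu$, so that the geometric waiting time for the next fresh element has constant expectation of order $1/\mu$. I expect the main obstacle to be lifting this expected bound to the worst-case constant-delay guarantee required by the lemma and handling the boundary case when only few fresh elements remain; for the latter I intend to fold the leftover sampling into the termination check, so that the delay between the last output and the end-of-enumeration message is absorbed within the same overall budget $N$ and the worst-case delay depends only on $\mu$ and $\delta$.
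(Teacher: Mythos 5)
There is a genuine gap, and it sits exactly at the point you flag as ``the subtle point'': pure uniform sampling with replacement cannot deliver worst-case constant delay together with $S=V_1$. Your coupon-collector budget $N=\Theta(|V|\log|V|)$ is what completeness costs under random sampling, and most of that cost is concentrated at the end: when only one element of $V_1$ remains unseen, each sample hits it with probability $1/|V|$, so the wait for the final output is $\Theta(|V|)$ samples in expectation (and unbounded in the worst case). If instead you cap the number of samples per emission slot at a constant $c=c(\mu,\delta)$ and terminate when a slot produces nothing fresh, the algorithm halts prematurely with high probability once the density of unseen $V_1$-elements drops below roughly $1/c$, leaving a constant fraction of $V_1$ unenumerated; ``folding the leftover sampling into the termination check'' does not avoid this, it just relocates a $\Theta(|V|\log|V|)$ delay to the gap before the end-of-enumeration message. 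No choice of constants rescues the scheme: constant work per output times at most $|V_1|$ outputs gives $O(|V|)$ total samples, which is asymptotically too few for the union bound you invoke.

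The paper's proof supplies the missing idea: it interleaves the random sampling with a \emph{deterministic sweep} of $V$ in its linear order, examining the next $\lceil 1/\mu^2\rceil$ elements of $[n]$ in every enumeration step in addition to $\alpha(\mu,\delta)$ random samples, and buffering discovered $V_1$-elements in a queue. Completeness then comes from the sweep, not from the sampling: after $\lceil \mu^2 n\rceil \le \lceil \mu|V_1|\rceil$ steps every element of $V$ has been examined, so $S=V_1$ provided the algorithm survives that long. The role of the randomness is only to keep the queue nonempty for those $\lceil \mu|V_1|\rceil$ steps (the algorithm stops the first time the queue is empty at an output step), and the paper's probability calculation bounds exactly that survival probability by $\delta$. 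Your membership test, lazy array, and duplicate handling all match the paper; what you need to add is the deterministic sweep plus the queue, and to replace the coupon-collector analysis by a bound on the probability that the queue empties before the sweep finishes.
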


\begin{proof}
Let $|V| =n$ and let us assume that $V$ comes with a linear order over its elements, or equivalently that $V=[n]$. If $V$ does not come with a linear order over its elements then we use the linear order induced by the encoding of $V$. Let $q=\text{min}((1-\mu(1-\mu))^2, ({1-\delta})^2/{9})$.
The preprocessing phase proceeds as follows:
\begin{enumerate}
\item Initialise an array $\mathbf{B}$ of length $n$. The array $\mathbf{B}$ contains one entry for each element in $[n]$ and it is used to record sampled elements. For an element $a \in [n]$, the entry $\mathbf{B}[a]$ is 1 if $a$ has previously been sampled and it is 0 otherwise.
\item Initialise an empty queue $\mathbf{Q}$, to store tuples to be enumerated.
%\item Sample $\alpha = \lceil \log_{1-\gamma(1-\gamma)}q \rceil$ many elements uniformly and independently from $[n]$. 
%\item For each sampled element $a$, if $\mathbf{B}[a] = 1$, skip this element. Otherwise, set $\mathbf{B}[a]=1$ and if $a$ is in $V_1$ add $a$ to $\mathbf{Q}$.
\end{enumerate}

 As discussed in Section \ref{section prelim} an array of any size can be initialised in constant time using lazy initialisation and hence the preprocessing phase runs in constant time.

	Moving on to the enumeration phase, between each output the algorithm will sample a constant number of elements as well as going through a constant number of the elements in $[n]$ in order. The enumeration phase proceeds as follows:
\begin{enumerate}

\item Sample $\alpha = \lceil \log_{1-\mu(1-\mu)}q \rceil $ many elements uniformly and independently from $[n]$ and let \emph{t} be a list of these elements.
\item Add the next $\lceil {1}/{\mu^2} \rceil$ elements from $[n]$ to \emph{t}. If there are less than $\lceil {1}/{\mu^2} \rceil$ elements remaining just add all the remaining elements to \emph{t}.
\item For each element $a$ in \emph{t}, if $\mathbf{B}[a] = 1$, skip this element. Otherwise, set $\mathbf{B}[a]=1$ and if $a$ is in $V_1$ add $a$ to $\mathbf{Q}$.
\item If $\mathbf{Q}\neq\emptyset$, output the next element from $\mathbf{Q}$; stop otherwise.
\item Repeat Steps 1-4 until there is no element to output in Step 4.
\end{enumerate}

In Steps 1 and 2 a list of elements is created which is of constant size. For each element in this list, in Step 3, the algorithm can check whether it is in $V_1$ in constant time and the arrays $\mathbf{Q}$ and $\mathbf{B}$ can be read and updated in constant time. 
Hence, each enumeration step can be done in constant time. 
This concludes the analysis of the running time. We now prove correctness. 

Clearly, no duplicates will be enumerated due to the use of the array $\mathbf{B}$ and the only elements enumerated are those that are in $V_1$.
Let $S$ be the set of elements that are enumerated. We need to show that with probability at least ${2}/{3}$ if $|V_1| \geq \mu |V|$, then $S = V_1$. In each enumeration step we take the next $\lceil{1}/{\mu ^2}\rceil$ elements from $[n]$. Assuming $|V_1| \geq \mu |V|$, after $ \lceil n \cdot \mu^2 \rceil \leq \lceil \mu |V_1| \rceil$ enumeration steps the algorithm will have checked every element in $[n]$ and therefore $S=V_1$. Let us find a bound on the probability that we do at least $\lceil \mu |V_1| \rceil $ enumeration steps.

	\begin{claim}\label{probability bound}
For all $q \in [0,1)$ and $m \in \mathbb{N}_{\geq 1}$, $\prod_{i=1}^{m}(1-q^{\frac{i+1}{2}}) \geq 1-3q^{\frac{1}{2}}.$
	\end{claim}
		\begin{claimproof}
	First let us prove that 
\[\prod_{i=1}^{m}(1-q^{\frac{i+1}{2}}) \geq1-q^{\frac{1}{2}}-q-q^{\frac{3}{2}} +q^{\frac{m+2}{2}}\]
by induction on $m$.

 For the base case, let $m=1$, then
\[\prod_{i=1}^{1}(1-q^{\frac{i+1}{2}}) = 1-q \geq 1-q^{\frac{1}{2}}-q-q^{\frac{3}{2}} +q^{\frac{3}{2}}\]
as required.

Now for the inductive step. Let us assume the claim is true for $m$ and we shall show the claim is true for $m+1$. We have
\begin{align*} \prod_{i=1}^{m+1}(1-q^{\frac{i+1}{2}}) & = \Big(\prod_{i=1}^{m}(1-q^{\frac{i+1}{2}})\Big)\cdot (1-q^{\frac{m+2}{2}}) & \geq (1-q^{\frac{1}{2}}-q-q^{\frac{3}{2}} +q^{\frac{m+2}{2}})(1-q^{\frac{m+2}{2}})\end{align*}
by the inductive hypothesis.
\begin{align*}(1-q^{\frac{1}{2}}-q-q^{\frac{3}{2}} +q^{\frac{m+2}{2}})(1-q^{\frac{m+2}{2}})  &= 1-q^{\frac{1}{2}}-q-q^{\frac{3}{2}} +q^{\frac{m+3}{2}}+q^{\frac{m+4}{2}}+q^{\frac{m+5}{2}}-q^{m+2} \\ & \geq 1-q^{\frac{1}{2}}-q-q^{\frac{3}{2}} +q^{\frac{m+3}{2}},\end{align*}
as $q^{(m+4)/2}+q^{(m+5)/2}-q^{m+2} \geq 0$.

Therefore,  
\[\prod_{i=1}^{m}(1-q^{\frac{i+1}{2}}) \geq 1-q^{\frac{1}{2}}-q-q^{\frac{3}{2}} +q^{\frac{m+2}{2}} \geq1-3q^{\frac{1}{2}}\] as required
\end{claimproof}
	\begin{claim} \label{printing probability}
	Assume that $|V_1| \geq \mu n$.  The probability that at least $\lceil \mu|V_1| \rceil$ distinct elements from $V_1$ are enumerated is at least $1-3q^{\frac{1}{2}}$.
\end{claim} 
\begin{claimproof}
We shall start by showing that for $j \in \mathbb{N}$, where $1 \leq j \leq \lceil \mu|V_1| \rceil$, the probability that at least $j$ distinct elements from $V_1$ are enumerated is at least $\prod_{i=1}^j(1-q^{(i+1)/{2}})$.

 We shall prove this by induction on $j$. For the base case, let $j=1$. If an element from $V_1$ is sampled in the first enumeration step, then at least one element from $V_1$ will be enumerated. An element that is in $V_1$ is sampled with probability
 \[\frac{|V_1|}{n} \geq \frac{\mu n}{n} = \mu  \geq \mu(1-\mu).\]
The probability that out of the $\alpha$ elements sampled in the first enumeration step there is none from $V_1$ is at most
 $  (1-\mu(1-\mu))^{\alpha} \leq q$ as $\alpha = \lceil \log_{1-\mu(1-\mu)}q \rceil \geq \log_{1-\mu(1-\mu)}q $.
Therefore with probability at least $1-q$ at least one element from $|V_1|$ is enumerated and hence we have proved the base case.

For the inductive step, assume that the claim is true for $j$, where $1 \leq j < \lceil \mu|V_1| \rceil$, we shall show it is true for $j+1$. Let us assume $j$ distinct elements from $V_1$ have already been enumerated, and a total of at least $(j+1)\alpha$ elements have been sampled (of which at least $j$ are from $V_1$). The probability an element from $V_1$ that was not already enumerated is sampled is 
$ (|V_1| - j)/n.$
Therefore, the probability that exactly $j$ unique elements from $V_1$ have been sampled is at most
\[\Big(1-\frac{|V_1|-j}{n}\Big)^{(j+1)\alpha -j} <(1-\mu(1-\mu))^{(j+1)\alpha -j},\]
as $|V_1|-j > |V_1|-\mu|V_1| \geq \mu n(1-\mu)$. Then
\[(1-\mu(1-\mu))^{(j+1)\alpha -j} \leq \frac{q^{j+1}}{(1-\mu(1-\mu))^j}\leq \frac{q^{j+1}}{(q^{\frac{1}{2}})^j} = q^{\frac{j+2}{2}},\]
as $\alpha = \lceil \log_{1-\mu(1-\mu)}q \rceil \geq \log_{1-\mu(1-\mu)}q $ and as $q\leq (1-\mu(1-\mu))^2$. Therefore, the probability that there are at least $j+1$ elements from $V_1$ in these sampled tuples is at least $1-q^{(j+2)/2}$. Then by the inductive hypothesis, the probability that at least $j+1$ elements from $V_1$ are enumerated is at least \[\Big(\prod_{i=1}^j(1-q^{\frac{j+1}{2}})\Big)\cdot(1-q^{\frac{j+2}{2}}) = \prod_{i=1}^{j+1}(1-q^{\frac{i+1}{2}}) \] as required. 

Finally, by Claim \ref{probability bound}, the probability that at least $\lceil \mu|V_1| \rceil$  many distinct elements from $V_1$ are enumerated is at least \[\prod_{i=1}^{\lceil \mu|\phi(\mathcal{D})| \rceil }(1-q^{\frac{i+1}{2}}) \geq 1-3q^{\frac{1}{2}}.\] \end{claimproof}

By Claim \ref{printing probability} the probability that $S = V_1$ if $|V_1| \geq \mu n$ is at least
$(1-3q^{\frac{1}{2}}) \geq \delta$
by the choice of $q$. This completes the proof.
\end{proof}

We now use Lemma \ref{lemma: partitioned set enumeration} to prove the following theorem.

\begin{thm}\label{thm: local queries enumeration}
Let $\phi(\bar{x}) \in  \operatorname{FO}[\sigma]$ be a local query with $k$ free variables and let $\gamma \in (0,1)$. There exists an algorithm that is given a $\sigma$-db $\mathcal{D}$ as an input, that after a constant time preprocessing phase, enumerates a set $S$ (with no duplicates) with constant delay between any two consecutive outputs, such that:
\begin{enumerate}
\item $S \subseteq \phi(\mathcal{D})$, and
\item if $|\phi(\mathcal{D})|| \geq \gamma |D|^k$ (i.e. the number of answers to the query is larger than a fixed fraction of the total possible number of answers), then with probability at least $2/3$, $S = \phi(\mathcal{D})$.
\end{enumerate}
\end{thm}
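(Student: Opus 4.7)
The plan is to apply Lemma~\ref{lemma: partitioned set enumeration} almost as a black box, instantiating it with the right sets. Set $V := D^k$ (which comes with a natural linear order induced by the linear order on $D$ via lexicographic ordering, as fixed in Section~\ref{section prelim}), $V_1 := \phi(\mathcal{D})$, and $V_2 := V \setminus V_1$. The algorithm is given $\mathcal{D}$, so in particular it has access to $n := |D|$ and hence to $|V| = n^k$.

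First I would verify the two preconditions of Lemma~\ref{lemma: partitioned set enumeration}. The size of $V$ is available in constant time from $n$. For constant-time membership testing of a tuple $\bar{a} \in V$ in $V_1 = \phi(\mathcal{D})$, I invoke Lemma~\ref{lemma: local membership testing}: since $\phi$ is a local $\operatorname{FO}$ query, deciding whether $\bar{a} \in \phi(\mathcal{D})$ takes time depending only on $\phi$ and $d$, i.e.\ constant time.

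With these preconditions in place, I apply Lemma~\ref{lemma: partitioned set enumeration} with parameters $\mu := \gamma$ and $\delta := 2/3$. The lemma yields a randomised algorithm with constant preprocessing time that enumerates a set $S$ with no duplicates and constant delay between consecutive outputs, such that $S \subseteq V_1 = \phi(\mathcal{D})$, which gives claim (1). Moreover, whenever $|V_1| = |\phi(\mathcal{D})| \geq \mu |V| = \gamma n^k$, we have $S = V_1 = \phi(\mathcal{D})$ with probability at least $\delta = 2/3$, which gives claim (2).

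There is no real obstacle: the hard work of de-amortising the naive enumerator via random sampling was already carried out in Lemma~\ref{lemma: partitioned set enumeration}, and the local-query hypothesis is used exactly to supply constant-time membership testing through Lemma~\ref{lemma: local membership testing}. The only minor point to spell out is that the set $D^k$ inherits a usable linear order (and its size $n^k$ is trivially computable from $n$), so the input conventions of Lemma~\ref{lemma: partitioned set enumeration} are met.
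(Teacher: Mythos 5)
Your proposal is correct and matches the paper's own proof exactly: the paper also instantiates Lemma~\ref{lemma: partitioned set enumeration} with $V=D^k$, $V_1=\phi(\mathcal{D})$, $V_2=D^k\setminus\phi(\mathcal{D})$, $\mu=\gamma$, $\delta=2/3$, using Lemma~\ref{lemma: local membership testing} for the constant-time membership test. Your added remark about the lexicographic order on $D^k$ is a minor point the paper leaves implicit.
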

\begin{proof}
Given a tuple $\bar{a} \in |D|^k$ we can test in constant time whether $\bar{a} \in \phi(\mathcal{D})$ or $\bar{a} \not\in \phi(\mathcal{D})$ by Lemma \ref{lemma: local membership testing}. We can partition the set $D^k$ into two sets based on whether a tuple is in $\phi(\mathcal{D})$ or not.
Therefore the algorithm from Lemma \ref{lemma: partitioned set enumeration} (with $\delta = 2/3$, $\mu=\gamma$, $V=|D|^k$, $V_1= \phi(\mathcal{D})$ and $V_2=|D|^k \setminus \phi(\mathcal{D})$) meets the requirements in the theorem statement.
\end{proof}

In our algorithms, in order to achieve constant preprocessing time and constant delay we require the number of answers to the query to be some fixed fraction of the total possible number of answers. Otherwise, with high probability the algorithm would not sample an answer in the enumeration phase and the algorithm would stop.

It seems natural to expect that for queries occurring in practice, the elements of an answer tuple are 
%close together 
within a small distance of each other in the input database (i.\,e.\ the $r$-neighbourhood of the answer has few connected components). In such scenarios, we can strengthen our main theorem by reducing the number of answers required to output all answers to the query with high probability.

\begin{thm}\label{theorem: strengthened local main}
Let $\phi(\bar{x}) \in  \operatorname{FO}[\sigma]$ be a local query with locality radius $r$ and let $\gamma \in (0,1)$. 
Let $c : =\operatorname{conn}(\phi, d)$, i.e the maximum number of connected components in the $r$-neighbourhood of a tuple $\bar{a} \in \phi(\mathcal{D})$ for any $\sigma$-db $\mathcal{D}$.
%Let $\phi'$ be the formula in Hanf normal form that is $d$-equivalent to $\phi$. Let $c$ be the maximum number of connected components in the $r$-types that appear in $\phi'$. 
There exists an algorithm that, given a $\sigma$-db $\mathcal{D}$ as input, after a constant time preprocessing phase enumerates a set $S$ (with no duplicates) with constant delay between any two consecutive outputs, such that the following hold.
\begin{enumerate}
\item $S \subseteq \phi(\mathcal{D})$, and
\item if $|\phi(\mathcal{D})|| \geq \gamma |D|^c$, then with probability at least $2/3$, $S = \phi(\mathcal{D})$.
\end{enumerate}
\end{thm}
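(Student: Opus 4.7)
The plan is to generalise the approach of Theorem~\ref{thm: local queries enumeration} by sampling $c$-tuples rather than $k$-tuples of elements of $\mathcal{D}$, and reconstructing the full answer tuples from each sample via local exploration. The key insight is that if an answer $\bar a \in \phi(\mathcal{D})$ has an $r$-neighbourhood with at most $c$ connected components, then $\bar a$ can be canonically compressed to a $c$-tuple by picking one element per component, and the fibres of this compression have constant size.

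First, for each $\bar a \in \phi(\mathcal{D})$ I would define a canonical representative $p(\bar a) \in D^c$ as follows. Let $\tau$ be the $r$-type of $\bar a$, let $c_\tau \le c$ be its number of connected components, and let $I_1, \ldots, I_{c_\tau}$ be the partition of $\{1, \ldots, k\}$ induced by these components, ordered by least index. Set $p(\bar a) := (a_{\min I_1}, \ldots, a_{\min I_{c_\tau}})$, padded on the right by copies of $a_{\min I_{c_\tau}}$ to length $c$. Let $W := \{p(\bar a) : \bar a \in \phi(\mathcal{D})\}$. I would then show that $p$ is at most $C$-to-one for some constant $C = C(k, r, d)$: given $\bar w$ and a candidate $r$-type $\tau$, each component of $\mathcal{N}_r(\bar a)$ must lie inside the ball of radius $k d^{r+1}$ in $\mathcal{D}$ around the corresponding entry of $\bar w$, and this ball has constant size; summing over the constantly many $r$-types yields the bound. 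Consequently $|W| \geq |\phi(\mathcal{D})|/C \geq (\gamma/C)\,n^c$.

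Next, I would give a constant-time procedure that, on input $\bar w \in D^c$, decides whether $\bar w \in W$ and outputs the (at most $C$) elements of $p^{-1}(\bar w)$. The procedure iterates over the set $T$ of accepting $r$-types obtained via Lemma~\ref{lemma: local query rel r-types}, and for each $\tau \in T$ and each feasible matching of the entries of $\bar w$ to components of $\tau$, runs BFS of constant radius from each $w_j$ to generate candidate tuples $\bar a$, then verifies that the $r$-type of $\bar a$ is exactly $\tau$ and that $p(\bar a) = \bar w$. Since all neighbourhoods explored have size bounded by $k,r,d$ alone, the procedure runs in constant time.

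Finally, I would apply Lemma~\ref{lemma: partitioned set enumeration} with $V := D^c$, $V_1 := W$, $\mu := \gamma/C$ and $\delta := 2/3$ to enumerate $W$ with constant delay, using the membership test just described. Each $\bar w$ produced is passed through the reconstruction routine above, and the resulting (constantly many) answers are buffered and output one per delay slot; duplicates are prevented by a lazy-initialised $k$-dimensional array indexed by tuples in $D^k$. By the guarantee of Lemma~\ref{lemma: partitioned set enumeration}, if $|\phi(\mathcal{D})| \geq \gamma n^c$ then $|W| \geq \mu |V|$, so with probability at least $2/3$ all of $W$ is enumerated, and hence all of $\phi(\mathcal{D}) = \bigsqcup_{\bar w \in W} p^{-1}(\bar w)$ is output. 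The main obstacle I expect is the correct design of the representative map $p$ and the argument that its fibres are uniformly bounded, which requires care in choosing a canonical padding convention so that $p$ is well-defined independently of $\bar a$; keeping the delay constant in the presence of bursts of up to $C$ answers per sampled representative is a minor point, handled by amortising the reconstruction across the fibre.
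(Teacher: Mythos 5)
Your proposal is correct and follows essentially the same route as the paper: the paper likewise reduces to sampling representative tuples of length at most $c$ (one element per connected component of an answer's $r$-neighbourhood), reconstructs the full answer tuples by constant-radius local exploration, bounds the fibres by a constant, and feeds the resulting sets into Lemma~\ref{lemma: partitioned set enumeration}, computing the accepting set of $r$-types exactly via Lemma~\ref{lemma: local query rel r-types}. Your canonical compression map $p$ with padding into $D^c$ is simply a different packaging of the paper's $r$-split machinery (Lemmas~\ref{lemma: compute found tuple} and~\ref{lemma: equivalent partitions}) over $V=\bigcup_{1\le i\le c}D^i$.
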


We defer the proof of Theorem \ref{theorem: strengthened local main} to Section \ref{sec: proof of theorem strengthened local main}.

%lower bounds - to have constant preprocessing we need a linear number of answers (if we use random sampling)
%
%to have sublinear preprocessing we need $|q(\mathcal{D})|/n^k$ to be $O(1)/o(n)$ I think? 

\section{Enumerating Answers to General First-Order Queries}\label{sec: non local queries}

We now shift our focus to enumerating answers to general FO queries, now they can be non-local in the sense that we can not check if a tuple is an answer to the query by only looking at its neighbourhood. We are aiming at sublinear preprocessing time hence we cannot read the whole input database and therefore will need to sacrifice some accuracy. We allow our algorithms to enumerate `close' answers as well as actual answers. We start this section by defining our notion of approximation before proving our main result.

\subsection{Our Notion of Approximation}

We shall start by defining our notion of closeness.

\begin{defi}[$\epsilon$-close answers to FO queries]\label{def: closeness}
Let $\mathcal{D} \in \mathbf{C}$ be a $\sigma$-db and let $\epsilon \in (0,1]$. Let $\phi(\bar{x}) \in  \operatorname{FO}[\sigma]$ be a query with $k$ free variables and Hanf locality radius $r$.  A tuple $\bar{a} \in D^k$ is \emph{$\epsilon$-close to being an answer of $\phi$ on $\mathcal{D}$ and $\mathbf{C}$} if $\mathcal{D}$ can be modified (with tuple insertions and deletions) into a $\sigma$-db $\mathcal{D}' \in \mathbf{C}$ with at most $\epsilon d |D|$ modifications (i.e $\operatorname{dist}(\mathcal{D}, \mathcal{D}') \leq \epsilon d |D|$) such that $\bar{a} \in \phi(\mathcal{D}')$ and the $r$-type of $\bar{a}$ in $\mathcal{D}'$ is the same as the $r$-type of $\bar{a}$ in $\mathcal{D}$.
%there exists a $\sigma$-db $\mathcal{D}_1 \in \mathbf{C}$ with $|D|$ elements such that 
%\begin{enumerate}
%\item $\operatorname{dist}(\mathcal{D}, \mathcal{D}_1) \leq \epsilon d |D|$, and 
%\item there exists a tuple $ \bar{b} \in \phi(\mathcal{D}_1)$ with the same $r$-type as $\bar{a}$.
%\end{enumerate} 

We denote the set of all tuples that are $\epsilon$-close to being an answer of $\phi$ on $\mathcal{D}$ and $\mathbf{C}$ as $\phi(\mathcal{D}, \mathbf{C}, \epsilon)$. Note that $\phi(\mathcal{D}) \subseteq \phi(\mathcal{D}, \mathbf{C}, \epsilon)$.
\end{defi}

We shall illustrate Definition \ref{def: closeness} in the following example.

\begin{exa}\label{running example}
%Let $\mathbf{C}$ be the class of simple graphs with bounded degree $d \in \mathbb{N}$. 
	On the class $\mathbf G_d$, consider the isomorphism types
	 $\tau_1$, $\tau_2$ and $\tau_3$ of the $2$-neighbourhoods $(N_1, (c_1,c_2))$, $(N_2, (c_1,c_2))$ and $(N_3, (c_1,c_2))$ shown in Figure~\ref{fig:taus}. Let $\phi\in\operatorname{FO}[\{E\}]$ be given by $\phi(x,y) :=  \operatorname{sph}_{\tau_1}(x,y) \lor (\operatorname{sph}_{\tau_2}(x,y) \land \lnot(\exists z \exists w\operatorname{sph}_{\tau_1}(z,w))).$ This formula might be useful in scenarios where ideally we want to return pairs of vertices with a specific $2$-type $\tau_1$ but if there is no such pair then returning vertex pairs with a similar $2$-type will suffice.
	 
Let $\mathcal{G} \in \mathbf{G}_d$ be a graph on $n$ vertices and $\epsilon \in (0,1]$. First observe that for any pair $(u,v) \in V(\mathcal{G})^2$ with 2-type $\tau_1$, $(u,v) \in \phi(\mathcal{G})$ and hence $(u,v) \in \phi(\mathcal{G},\mathbf{G}_d, \epsilon)$. 

Assume $(u,v) \in V(\mathcal{G})^2$ has 2-type $\tau_2$. Then $(u,v) \in \phi(\mathcal{G})$ if and only if $\mathcal{G}$ contains no vertex pair of 2-type $\tau_1$. 
The pair $(u,v)$ is in $\phi(\mathcal{G},\mathbf{G}_d, \epsilon)$ if and only if $\mathcal{G}$ can be modified (with edge modifications) into a graph $\mathcal{G'} \in \mathbf{G}_d$ with at most 
$\epsilon d n$ modifications such that $(u,v) \in \phi(\mathcal{G'})$ and the 2-type of $(u,v)$ in $\mathcal{G'}$ is still $\tau_2$.

For example if $\mathcal{G}$ is at distance at most $\epsilon d n - 4d - 6$ (assuming that $n$ is large enough such that $\epsilon d n - 4d - 6 >0$) from a graph $\mathcal{G''} \in \mathbf G_d$ such that $\mathcal{G''} \models \exists x \exists y \operatorname{sph}_{\tau_2}(x,y) \land \lnot(\exists z \exists w \operatorname{sph}_{\tau_1}(z,w)) $ then $(u,v) \in\phi(\mathcal{G},\mathbf{G}_d, \epsilon)$. To see this let us assume that such a graph $\mathcal{G''} $ exists. Note that as $\mathbf{G}_d$ is closed under isomorphism we can assume that $\mathcal{G} $ and $\mathcal{G''} $ are on the same vertices. Then if $(u,v)$ has 2-type $\tau_2$ in $\mathcal{G''} $, $(u,v) \in\phi(\mathcal{G},\mathbf{G}_d, \epsilon)$ since $\epsilon d n - 4d - 6 \leq \epsilon d n$. So let us assume that $(u,v)$ does not have 2-type $\tau_2$ in $\mathcal{G''} $. Let $(u_1, v_1)  \in V(\mathcal{G''})^2 $ have 2-type $\tau_2$ (we know one exists). Then we remove every edge that has $u$, $v$, $u_1$ or $v_1$ as an endpoint (there are at most $2d +3$ such edges), and then for each edge we removed we insert the same edge back in but swapping any endpoint $u$ to $u_1$ and $v$ to $v_1$ and vice versa (this requires at most $2(2d+3)$ many edge modifications in total). By doing this we have essentially just swapped the labels of the vertices $u$ and $u_1$ and $v$ and $v_1$. Hence in the resulting graph $\mathcal{G'}$, $(u,v)$ has 2-type $\tau_2$ and $\mathcal{G'}$ still contains no pair of vertices with 2-type $\tau_1$. Therefore $(u,v) \in \phi(\mathcal{G'})$, and the distance between $\mathcal{G}$ and $\mathcal{G'}$ is at most $\epsilon d n - 4d - 6 + 2(2d+3) = \epsilon d n$.

%, $(u,v)$ does not have 2-type $\tau_2$. Let $(u_1, v_1)  \in \mathcal{G''}^2 $ have 2-type $\tau_2$ (we know one exists). Then if we remove every edge that has $u$, $v$, $u_1$ or $v_1$ as an endpoint (there are at most $2n +3$ such edges), and then for each edge we removed we insert the same edge back in but swapping any endpoint $u$ to $u_1$ and $v$ to $v_1$ and vice versa. Then if $\mathcal{G'}$ is the resulting graph, $(u,v)$ has 2-type $\tau_2$ in $\mathcal{G'}$, $(u,v) \in \phi(\mathcal{G'})$, and the distance between $\mathcal{G}$ and $\mathcal{G'}$ is at most $\epsilon d n$.\polly!

% with at most $\epsilon d n - 2d$ modifications such that $\mathcal{G'}$ satisfies $\exists x \exists y \operatorname{sph}_{\tau_2}(x,y) \land \lnot(\exists z \exists w \operatorname{sph}_{\tau_1}(z,w)) $. \polly{explain more}

Finally, for any pair $(u,v) \in V(\mathcal{G})^2$ with 2-type $\tau_3$, $(u,v) \not\in \phi(\mathcal{G})$ and $(u,v) \not\in \phi(\mathcal{G},\mathbf{G}_d, \epsilon)$ as for every $\mathcal{G'} \in \mathbf{G}_d$ there does not exist a pair with 2-type $\tau_3$ that is in $\phi(\mathcal{G'})$.

\end{exa}

The set $\phi(\mathcal{D},\mathbf{C}, \epsilon)$ contains all tuples that are
$\epsilon$-close to being answers to $\phi$. A tuple $\bar{a} \in D^k$\ is in
$\phi(\mathcal{D},\mathbf{C}, \epsilon)$ if only a relatively small (at most $\epsilon d n$) number of modifications to $\mathcal{D}$ are needed to make $\bar{a}$ an answer to $\phi$ without changing $\bar{a}$'s neighbourhood type. This can be seen as a notion of \emph{structural} approximation. One might be tempted to define $\phi(\mathcal{D},\mathbf{C}, \epsilon)$ differently, namely as the set of tuples that can be turned into an answer to $\phi$ on $\mathcal{D}$ (without necessarily preserving the neighbourhood type) with at most $\epsilon d n$ modifications to $\mathcal D$.
However, if $\phi(\mathcal{D})\neq\emptyset$, say, $\bar{a}\in \phi(\mathcal{D})$, 
	then we can turn any tuple $\bar{b} \in D^k$ into an answer 
	for $\phi$ on $\mathcal{D}$ with only a constant number of modifications.
	This can be done by exchanging $\bar{b}$'s $r$-neighbourhood with $\bar{a}$'s, for some $r$ depending on $\phi$. This is not meaningful.
%	But then $\phi(\mathcal{D},\mathbf{C}, \epsilon)$ would be equal to $D^k$ 
%	(if  $D$ is sufficiently large), 
%	which is not meaningful.

Let $\chi$ be as in (\ref{normal-form}) of Lemma \ref{normal-form lemma} for $\phi$. Note that only tuples with a neighbourhood type that appears in $\chi$ can be in the set $ \phi(\mathcal{D}, \mathbf{C}, \epsilon)$.
Nevertheless, the difference $|\phi(\mathcal{D},\mathbf{C},
\epsilon)|-|\phi(\mathcal{D})|$ can be unbounded. 
The following example demonstrates this.

\begin{exa}\label{ex:many-eps-close} 
	Let $\phi$, $\tau_1$ and $\tau_2$ be as in Example \ref{running example}. For $m \in \mathbb{N}_{\geq 1}$, let $\mathcal{G}_{1,m}$ be the graph that contains $m$ disjoint copies of $\tau_2$ and 1 disjoint copy of $\tau_1$. Note that $\mathcal{G}_{1,m}$ has $n=8(m+1)$ vertices. The graph $\mathcal{G}_{1,m}$ can be modified with one edge modification to form a graph which satisfies $\exists x \exists y\operatorname{sph}_{\tau_2}(x,y) \land \lnot(\exists z \exists w\operatorname{sph}_{\tau_1}(z,w))$ without modifying the 2-type of any pair $(u,v) \in V(\mathcal{G}_{1,m})^2$ with 2-type $\tau_2$ in $\mathcal{G}_{1,m}$.
 Therefore if $1 \leq \epsilon d n$ then every pair $(u,v) \in V(\mathcal{G}_{1,m})^2$ with 2-type $\tau_2$ is in $\phi(\mathcal{G}_{1,m},\mathbf{G}_d,\epsilon)$.
% $\mathcal{G}_{1,m}$ is $\epsilon$-close to satisfying $\exists x \exists y\operatorname{sph}_{\tau_2}(x,y) \land \lnot(\exists z \exists w\operatorname{sph}_{\tau_1}(z,w))$ and $\tau_2$ is $\epsilon$-close to being relevant for $\mathcal{G}_{1,m}$ and $\phi$ on $\mathbf{G}_d$. 
 Hence, assuming $1 \leq \epsilon d n$ we have $|\phi(\mathcal{G}_{1,m},\mathbf{G}_d,\epsilon)|-|\phi(\mathcal{G}_{1,m})|=m+1-1=\Theta(n)$.
\end{exa}

While $\phi(\mathcal{D},\mathbf{C}, \epsilon)$ is a \emph{structural} approximation
of $\phi(\mathcal D)$, Example~\ref{ex:many-eps-close} illustrates that 
it may not be a \emph{numerical} approximation. However, in scenarios where
the focus lies on structural closeness, this might not be an issue.

We say that the problem $\operatorname{Enum}_{\mathbf{C}}(\phi)$ can be \emph{solved approximately} with $\mathcal{O}(H(n))$ preprocessing time and constant delay for answer threshold function $f(n)$, if for every parameter $\epsilon \in (0,1]$, there exists an algorithm, which is given oracle access to an input database $\mathcal{D} \in \mathbf{C}$ and $|D|=n$ as an input, that proceeds in two steps.
\begin{enumerate}
\item A preprocessing phase that runs in time $\mathcal{O}(H(n))$, and
\item an enumeration phase that enumerates a set $S$ of distinct tuples with constant delay between 
	any two consecutive outputs.
\end{enumerate}
 Moreover, we require that with probability at least $2/3$, $S \subseteq \phi(\mathcal{D}) \cup\phi(\mathcal{D},\mathbf{C}, \epsilon)$ and, if $|\phi(\mathcal{D})| \geq f(n)$, then $\phi(\mathcal{D}) \subseteq S$.
The algorithm can make oracle queries of the form $(R,i,j)$ as discussed in Section \ref{Prelim PT} which allows us to explore bounded radius neighbourhoods in constant time. 
We call such an algorithm an \emph{$\epsilon$-approximate enumeration algorithm}.

\subsection{Main Results}

Before proving our main result of this section on the approximate enumeration of general first-order queries, we start by proving the following lemma. In this lemma, we show that for a given database $\mathcal{D}$ and FO query $\phi$ we can compute a set of neighbourhood types in polylogarithmic time, that with high probability only contains the neighbourhood types of tuples that are answers or close to being answers to $\phi$ on $\mathcal{D}$. To compute this set we write $\phi$ in the form (\ref{normal-form}) as in Lemma~\ref{normal-form lemma} and then run property testers on the sentence parts to determine with high probability whether tuples with the corresponding $r$-type (the $r$-type that appears in the sphere-formula) are answers to $\phi$ on the input database or are far from being an answer to $\phi$ on the input database.

\begin{lem}\label{lemma: computing rel r-types}
Let $\phi(\bar{x}) \in \operatorname{FO}[\sigma]$ with $|\bar{x}|=k$ and Hanf locality radius $r$ and let $\epsilon \in (0,1]$. There exists an algorithm $\mathbb{A}_{\epsilon}$, which, given oracle access to a $\sigma$-db $\mathcal{D} \in \mathbf{C}_d^t$ as input along with $|D|=n$, computes a set $T$ of $r$-types with $k$ centres such that with probability at least $5/6$, for any $\bar{a} \in D^k$,
\begin{enumerate}
\item if $\bar{a} \in \phi(\mathcal{D})$, then the $r$-type of $\bar{a}$ in $\mathcal{D}$ is in $T$, and
\item if $\bar{a} \in D^k \setminus \phi(\mathcal{D},\mathbf{C}_d^t, \epsilon )$, then the $r$-type of $\bar{a}$ in $\mathcal{D}$ is not in $T$.
\end{enumerate} 
Furthermore, $\mathbb{A}_{\epsilon}$ runs in polylogarithmic time.
\end{lem}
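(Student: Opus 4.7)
My plan is to reduce the task to $\operatorname{FO}$ property testing on $\mathbf C_d^t$ via Theorem \ref{FO testability} and the special Hanf normal form of Lemma \ref{normal-form lemma}. First I would compute the $d$-equivalent formula $\chi(\bar{x})=\bigvee_{i\in[m]}\bigl(\operatorname{sph}_{\tau_i}(\bar{x})\wedge\psi^s_i\bigr)$ given by Lemma \ref{normal-form lemma}, where each $\tau_i$ is an $r$-type with $k$ centres and each $\psi^s_i$ is an $\operatorname{FO}$ sentence (a Boolean combination of Hanf-sentences). Crucially, $m$, the $\tau_i$ and the $\psi^s_i$ depend only on $\phi$ and $d$, so this preprocessing step runs in time independent of $n$.

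Next, for each $i\in[m]$ I would invoke Theorem \ref{FO testability} on the property $\mathbf P_{\psi^s_i}\cap \mathbf C_d^t$ with proximity parameter $\epsilon'$, chosen as a small constant fraction of $\epsilon$ to leave slack for the local correction described below. Each resulting tester $\mathbb{A}_i$ runs in polylogarithmic time, and I would boost its error probability to at most $1/(6m)$ by standard majority-vote amplification. The output is $T:=\{\tau_i\mid \mathbb{A}_i\text{ accepts}\}$. A union bound then gives that with probability at least $5/6$ every $\mathbb{A}_i$ returns the correct verdict, and I condition on this event; since $m=\mathcal O(1)$ the total running time stays polylogarithmic.

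Completeness is immediate: if $\bar{a}\in\phi(\mathcal D)$ then $\mathcal D\models\chi(\bar{a})$, so for some $i$ the tuple $\bar{a}$ has $r$-type $\tau_i$ and $\mathcal D\models\psi^s_i$, so $\mathbb{A}_i$ accepts and $\tau_i\in T$. For soundness, suppose $\bar{a}$ has $r$-type $\tau_i\in T$ in $\mathcal D$; then $\mathbb{A}_i$ accepted, so there is $\mathcal D''\in\mathbf C_d^t$ with $\operatorname{dist}(\mathcal D,\mathcal D'')\le\epsilon'dn$ and $\mathcal D''\models\psi^s_i$. I would then locally patch $\mathcal D''$ inside the radius-$r$ ball around $\bar{a}$, restoring its original $r$-neighbourhood and hence its $r$-type $\tau_i$. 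Since $|N_r(\bar{a})|$ is bounded by a constant $B$ depending only on $r,k,d$, this patch costs a constant number of tuple edits, so for large enough $n$ the resulting database $\mathcal D'''$ lies within $\epsilon dn$ of $\mathcal D$.

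The main obstacle is ensuring simultaneously that $\mathcal D'''\models\psi^s_i$ and that $\mathcal D'''\in\mathbf C_d^t$. To address the first, I would replace each Hanf-sentence $\exists^{\ge m}x\operatorname{sph}_\sigma(x)$ occurring positively in $\psi^s_i$ by the strengthened $\exists^{\ge m+B'}x\operatorname{sph}_\sigma(x)$, and shift every negated threshold $m$ to $m-B'$ (dropping the conjunct when the latter becomes non-positive), where $B'$ bounds the change in any single $r$-type count under a constant-size patch; the strengthened $\widetilde{\psi^s_i}$ is still $\operatorname{FO}$ and therefore testable on $\mathbf C_d^t$ by Theorem \ref{FO testability}, and whenever $\mathcal D''\models\widetilde{\psi^s_i}$ the original $\psi^s_i$ survives the patch. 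To address the second, I would execute the patch as a surgical swap of the induced sub-database on $N_r^{\mathcal D''}(\bar{a})$ with the induced sub-database on $N_r^{\mathcal D}(\bar{a})$; standard bag-surgery on a width-$t$ tree-decomposition of $\mathcal D''$ then produces a width-$t$ decomposition of $\mathcal D'''$, while the degree bound is preserved because the restored neighbourhood comes from $\mathcal D\in\mathbf C_d^t$. This yields $\bar{a}\in\phi(\mathcal D''')\subseteq\phi(\mathcal D,\mathbf C_d^t,\epsilon)$ and completes the soundness argument.
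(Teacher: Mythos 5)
There is a genuine gap, and it originates in your decision to test only the sentence part $\psi^s_i$ rather than the full clause turned into a sentence. The paper's proof runs, for each $i$, an $\epsilon/2$-tester for $\exists\bar{x}\,\operatorname{sph}_{\tau_i}(\bar{x})\wedge\psi^s_i$. This matters because acceptance then certifies a close database $\mathcal{D}'$ that not only satisfies $\psi^s_i$ but also \emph{contains a witness tuple $\bar{b}$ of $r$-type $\tau_i$}. The soundness argument is then a swap, not a graft: delete every tuple meeting an element of $\bar{a}$ or $\bar{b}$ and re-insert it with the roles of $\bar{a}$ and $\bar{b}$ exchanged (at most $4dk$ edits). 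The result is \emph{isomorphic} to $\mathcal{D}'$, so it still satisfies the clause, still has degree at most $d$ and tree-width at most $t$, and now gives $\bar{a}$ type $\tau_i$ — all for free, by isomorphism-closure. Your surgical graft has none of these guarantees: elements on the boundary of $N_r(\bar{a})$ keep their tuples from $\mathcal{D}''$ reaching outward \emph{and} acquire the restored tuples from $\mathcal{D}$, so the degree bound can be violated; and the "bag-surgery" claim for tree-width is asserted, not proved, for what is not an induced-substructure replacement on a fixed vertex set.

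The second, independent problem is that your threshold-shifting repair destroys completeness. If you test $\widetilde{\psi^s_i}$ (positive thresholds raised to $m+B'$), then a database $\mathcal{D}$ with $\bar{a}\in\phi(\mathcal{D})$ via clause $i$ may satisfy $\exists^{\geq m}x\,\operatorname{sph}_{\tau}(x)$ with exactly $m$ witnesses, hence fail $\widetilde{\psi^s_i}$ and even be far from it; the tester then correctly rejects, $\tau_i\notin T$, and condition (1) of the lemma fails. So you cannot simultaneously test the weakened-for-soundness formula and keep completeness; the paper sidesteps this entirely because the swap is an isomorphism and never changes any type count. (A minor further omission: your "for large enough $n$" slack needs a case for small $n$ — the paper computes $T$ exactly by a full scan when $n<8k/\epsilon$.) The overall architecture (normal form from Lemma \ref{normal-form lemma}, one amplified tester per clause via Theorem \ref{FO testability}, union bound) is the right one; the fix is to make each tester test the whole clause $\exists\bar{x}\,\operatorname{sph}_{\tau_i}(\bar{x})\wedge\psi^s_i$ and replace the graft by the swap.
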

\begin{proof}
If $n < 8k/ \epsilon$ then we do a full check of $\mathcal{D}$ and form the set $T$ exactly. Otherwise, $\mathbb{A}_{\epsilon}$ starts by computing the formula $\chi(\bar{x})$ that is $d$-equivalent to $\phi$ and is in the form (\ref{normal-form}) as in Lemma~\ref{normal-form lemma}. Let $m$ be as in~Lemma~\ref{normal-form lemma}.  By Theorem \ref{FO testability}, any sentence definable in FO is uniformly testable on $\mathbf{C}_d^t$ in polylogarithmic time. Hence for every $i \in [m]$ there exists an $\epsilon/2$-tester that runs in polylogarithmic time and with probability at least ${2}/{3}$ accepts if the input satisfies $\exists \bar{x} \operatorname{sph}_{\tau_i}(\bar{x}) \land \psi^s_i$ and rejects if the input is $\epsilon/2$-far from satisfying $\exists \bar{x} \operatorname{sph}_{\tau_i}(\bar{x}) \land \psi^s_i$. We can amplify this probability to $({5}/{6})^{{1}/{m}}$ by repeating the tester a constant number of times and we denote the resulting $\epsilon/2$-tester as~$\pi_i$. Next, $\mathbb{A}_{\epsilon}$ computes the set $T$ as follows.
\begin{enumerate}
\item Let $T= \emptyset$.
\item For each $i \in [m]$, run $\pi_i$ with $\mathcal{D}$ as input, and if $\pi_i$ accepts, then add $\tau_i$ to $T$.
\end{enumerate}

By Lemma~\ref{normal-form lemma}, $\chi(\bar{x})$ can be computed in constant time (only dependent on $d$, $\|\phi\|$ and $\|\sigma\|$). Moreover, each $\epsilon/2$-tester $\pi_i$ runs in polylogarithmic time. Since $m$ is a constant, $\mathbb{A}_{\epsilon}$ runs in polylogarithmic time.  

It now only remains to prove correctness. Let $\bar{a} \in D^k$ and let $\tau$ be the $r$-type of $\bar{a}$ in $\mathcal{D}$. Let us assume that each $\pi_i$ correctly accepts if $\mathcal{D}$ satisfies $\exists \bar{x} \operatorname{sph}_{\tau_i}(\bar{x}) \land \psi^s_i$ and correctly rejects if $\mathcal{D}$ is $\epsilon/2$-far from satisfying $\exists \bar{x} \operatorname{sph}_{\tau_i}(\bar{x}) \land \psi^s_i$, which happens with probability at least $(5/6)^{(1/m) \cdot m}={5}/{6}$.

First let us assume that $\bar{a} \in \phi(\mathcal{D})$. We shall show that $\tau \in T$. Since $\mathcal{D} \models \phi(\bar{a})$, there exists at least one $i \in [m]$ such that $\mathcal{D} \models  \operatorname{sph}_{\tau_i}(\bar{a}) \land \psi^s_i$ (as $\phi$ is $d$-equivalent to $\chi(\bar{x}) =\bigvee_{i\in [m]}\Big(\operatorname{sph}_{\tau_i}(\bar{x}) \land \psi^s_i \Big)$). Hence, $\mathcal{D} \models  \exists \bar{x}\operatorname{sph}_{\tau_i}(\bar{x}) \land \psi^s_i$ and as we are assuming $\pi_i$ correctly accepted, the $r$-type $\tau_i$ will have been added to $T$. Since $\mathcal{D} \models  \operatorname{sph}_{\tau_i}(\bar{a})$, $\tau_i = \tau$, and therefore $\tau \in T$.

Now let us assume that $\bar{a} \in D^k \setminus \phi(\mathcal{D},\mathbf{C}_d^t, \epsilon )$. We shall show that $\tau \not\in T$. For a contradiction let us assume that $\tau \in T$ and hence there must exist some $i \in [m]$ such that $\mathcal{D}$ is $\epsilon/2$-close to satisfying $\exists \bar{x}\operatorname{sph}_{\tau_i}(\bar{x}) \land \psi^s_i$ on $\mathbf{C}_d^t$ and $\tau_i = \tau$. By definition there exists a $\sigma$-db $\mathcal{D}' \in  \mathbf{C}_d^t$ such that $\mathcal{D}' \models \exists \bar{x}\operatorname{sph}_{\tau_i}(\bar{x}) \land \psi^s_i$ and $\operatorname{dist}(\mathcal{D}, \mathcal{D}') \leq \epsilon d n /2$. Since any property defined by a FO sentence on $\mathbf{C}_d^t$ is closed under isomorphism we can assume that $\mathcal{D}'$ can be obtained from $\mathcal{D}$ with at most $\epsilon d n /2$ tuple modifications. If in $\mathcal{D}'$ the $r$-type of $\bar{a}$ is no longer $\tau$ then we can modify $\mathcal{D}'$ with at most $4dk$ tuple modifications into a $\sigma$-db $\mathcal{D}'' \in  \mathbf{C}_d^t$ such that the $r$-type of $\bar{a}$ is $\tau$ in $\mathcal{D}'' $ and $\mathcal{D}'' \cong \mathcal{D}'$ (and hence $\bar{a} \in \phi(\mathcal{D}'')$). To do this we choose a tuple $\bar{b}$ whose $r$-type is $\tau$ in $\mathcal{D}'$ and for any tuple that contains an element from $\bar{a}$ or $\bar{b}$, delete it and add back the same tuple but with the elements from $\bar{a}$ exchanged for the corresponding elements from $\bar{b}$ and vice versa. This requires at most $4dk$ tuple modifications. Hence $\operatorname{dist}(\mathcal{D}, \mathcal{D}'') \leq\epsilon d n/2 +4dk \leq \epsilon d n$ if $n \geq 8k / \epsilon$ (which we can assume as otherwise we do a full check of $\mathcal{D}$ and compute $T$ exactly) and so by definition $\bar{a} \in \phi(\mathcal{D},\mathbf{C}_d^t, \epsilon )$ which is a contradiction. Therefore $\tau \not\in T$.

Hence with probability at least $5/6$, for every $\bar{a} \in D^k$, if $\bar{a} \in \phi(\mathcal{D})$, then the $r$-type of $\bar{a}$ in $\mathcal{D}$ is in $T$, and  if $\bar{a} \in D^k \setminus \phi(\mathcal{D},\mathbf{C}_d^t, \epsilon )$, then the $r$-type of $\bar{a}$ in $\mathcal{D}$ is not in $T$.
\end{proof}

We now use Lemmas \ref{lemma: partitioned set enumeration} and \ref{lemma: computing rel r-types} to prove our main result of this section (Theorem \ref{thm:non-local enumeration main}).

\begin{thm}\label{thm:non-local enumeration main}
Let $\phi(\bar{x}) \in \operatorname{FO}[\sigma]$ where $|\bar{x}|=k$. Then $\operatorname{Enum}_{\mathbf{C}_d^t}(\phi)$ can be solved approximately with polylogarithmic preprocessing time and constant delay for answer threshold function $f(n)=\gamma n^k$ for any parameter $\gamma \in (0,1)$.
\end{thm}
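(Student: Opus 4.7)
The plan is to combine Lemma~\ref{lemma: computing rel r-types} with Lemma~\ref{lemma: partitioned set enumeration}. Fix $\epsilon \in (0,1]$. Let $r$ be the Hanf locality radius of $\phi$. Given oracle access to $\mathcal{D} \in \mathbf{C}_d^t$ with $|D|=n$, in the preprocessing phase I would invoke the algorithm $\mathbb{A}_{\epsilon}$ from Lemma~\ref{lemma: computing rel r-types} to compute, in polylogarithmic time, a set $T$ of $r$-types with $k$ centres such that, with probability at least $5/6$, every $\bar{a} \in \phi(\mathcal{D})$ has its $r$-type in $T$ and no $\bar{a} \in D^k\setminus\phi(\mathcal{D},\mathbf{C}_d^t,\epsilon)$ has its $r$-type in $T$.

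For the enumeration phase, I would partition $V := D^k$ into
\[
V_1 := \{\bar{a} \in D^k : \text{the $r$-type of $\bar{a}$ in $\mathcal{D}$ is in } T\}, \qquad V_2 := V \setminus V_1,
\]
and then invoke the enumeration algorithm from Lemma~\ref{lemma: partitioned set enumeration} with parameters $\mu := \gamma$ and $\delta := 5/6$ on this partition. Two prerequisites of Lemma~\ref{lemma: partitioned set enumeration} must be verified. First, $|V| = n^k$ is known since $n$ is part of the input. Second, membership of a tuple in $V_1$ or $V_2$ must be decidable in constant time. Since the degree of $\mathcal{D}$ is bounded by $d$ and $r$ depends only on $\phi$, the $r$-neighbourhood of any $\bar{a}\in D^k$ has constantly bounded size and can be explored via a constant number of oracle queries. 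Hence the $r$-type of $\bar{a}$ can be computed in constant time, and compared against the (constant size) set $T$ in constant time.

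Correctness follows by a union bound over two events: (A) the set $T$ is ``good'' in the sense above (probability $\geq 5/6$ by Lemma~\ref{lemma: computing rel r-types}), and (B) the enumeration subroutine outputs $S=V_1$ whenever $|V_1|\geq \gamma|V|$ (probability $\geq 5/6$ by Lemma~\ref{lemma: partitioned set enumeration}). Conditioned on (A),
\[
\phi(\mathcal{D}) \;\subseteq\; V_1 \;\subseteq\; \phi(\mathcal{D}) \cup \phi(\mathcal{D},\mathbf{C}_d^t,\epsilon),
\]
so $S\subseteq V_1 \subseteq \phi(\mathcal{D})\cup\phi(\mathcal{D},\mathbf{C}_d^t,\epsilon)$ always holds under (A); and if additionally $|\phi(\mathcal{D})|\geq \gamma n^k$, then $|V_1|\geq \gamma|V|$, so under (B) we also get $\phi(\mathcal{D})\subseteq V_1 = S$. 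By the union bound, both (A) and (B) hold with probability at least $1 - 1/6 - 1/6 = 2/3$, which is exactly the guarantee required by the notion of approximate enumeration.

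I do not expect any serious obstacle beyond assembling the pieces: the preprocessing cost is dominated by $\mathbb{A}_{\epsilon}$ (polylogarithmic) plus the constant-time setup of Lemma~\ref{lemma: partitioned set enumeration}, and the delay is constant because each enumeration step performs only a constant number of random samples, a constant number of $r$-neighbourhood explorations (each requiring constantly many oracle queries), and constant-time array operations on the bookkeeping arrays from Lemma~\ref{lemma: partitioned set enumeration}. The one subtlety worth noting is that although $T$ is computed from $\mathcal{D}$ via randomised testers, once produced it is treated as a deterministic constant-size lookup table during enumeration, so the randomness of $\mathbb{A}_{\epsilon}$ and of Lemma~\ref{lemma: partitioned set enumeration} are independent and the union bound is legitimate.
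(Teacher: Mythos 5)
Your proposal is correct and follows essentially the same route as the paper: compute $T$ via Lemma~\ref{lemma: computing rel r-types}, then run Lemma~\ref{lemma: partitioned set enumeration} on $V=D^k$ with $V_1$ the tuples whose $r$-type lies in $T$ and $\mu=\gamma$, $\delta=5/6$. The only cosmetic difference is that you combine the two failure probabilities by a union bound ($1-1/6-1/6=2/3$) where the paper multiplies ($(5/6)^2>2/3$); both suffice.
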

\begin{proof}
Let $\mathcal{D} \in \mathbf{C}_d^t$ with $|D|=n$, let $\epsilon \in (0,1]$ and let $\gamma \in (0,1)$. We shall construct an $\epsilon$-approximate enumeration algorithm for $\operatorname{Enum}_{\mathbf{C}_d^t}(\phi)$ that has answer threshold function $f(n)=\gamma n^k$, polylogarithmic preprocessing time and constant delay.

In the preprocessing phase, the algorithm starts by running the algorithm from Lemma \ref{lemma: computing rel r-types} on $\mathcal{D}$ to compute a set $T$ of $r$-types with $k$ centres.
Then the algorithm from the proof of Lemma \ref{lemma: partitioned set enumeration} with $\mu=\gamma$, $\delta =  5/6$, $V =D^k$, $V_1=\{\bar{a} \in D^k \mid \text{the r-type of } \bar{a} \text{ in } \mathcal{D}\text{ is in } T \}$ and $V_2 = D^k \setminus V_1$ is run. 

By Lemma~\ref{lemma: computing rel r-types}, the set $T$ is computed in polylogarithmic time. Hence as the preprocessing phase from the proof of Lemma \ref{lemma: partitioned set enumeration} runs in constant time, the whole preprocessing phase runs in polylogarithmic time. By Lemma \ref{lemma: partitioned set enumeration} there is constant delay between any two consecutive outputs. This concludes the analysis of the running time. We now prove correctness.

Let $S$ be the set of tuples enumerated. By Lemma \ref{lemma: partitioned set enumeration} no duplicates are enumerated and $S \subseteq V_1=\{\bar{a} \in D^k \mid \text{the r-type of } \bar{a} \text{ is in } T \}$. By Lemma \ref{lemma: computing rel r-types}, with probability at least $5/6$, for every $\bar{a} \in D^k$, if $\bar{a} \in \phi(\mathcal{D})$, then the $r$-type of $\bar{a}$ in $\mathcal{D}$ is in $T$, and  if $\bar{a} \in D^k \setminus \phi(\mathcal{D},\mathbf{C}_d^t, \epsilon )$, then the $r$-type of $\bar{a}$ in $\mathcal{D}$ is not in $T$. Therefore with probability at least $5/6$, $\phi(\mathcal{D}) \subseteq V_1$ and $V_1 \subseteq \phi(\mathcal{D},\mathbf{C}, \epsilon)$. Hence with probability at least $5/6 > 2/3$, $S \subseteq  \phi(\mathcal{D}) \cup\phi(\mathcal{D},\mathbf{C}, \epsilon)$ as required. 
%Moreover, by Lemma \ref{lemma: partitioned set enumeration} if $|V_1| \geq \gamma |V|$ then, with probability at least $5/6$, $S=V_1$.
 As previously discussed with probability at least $5/6$, $\phi(\mathcal{D}) \subseteq V_1$. Note that if $\phi(\mathcal{D}) \subseteq V_1$, then $|V_1| \geq |\phi(\mathcal{D})|$. If we assume that $\phi(\mathcal{D}) \subseteq V_1$ and $|\phi(\mathcal{D})| \geq \gamma n^k = \gamma |V|$, then $|V_1| \geq \gamma |V|$ and by Lemma \ref{lemma: partitioned set enumeration} with probability at least $5/6$, $S=V_1$ and hence $\phi(\mathcal{D}) \subseteq S$. Therefore the probability that $\phi(\mathcal{D}) \subseteq S$ if $|\phi(\mathcal{D})| \geq \gamma n^k$ is at least $(5/6)^2 > 2/3$ as required. This completes the proof.
%  By combining the probability that the set $T$ was correctly constructed and the probability that $\phi(\mathcal{D}) \subseteq S$  if $|\phi(\mathcal{D})| \geq \gamma n^k$ and $\phi(\mathcal{D}) \subseteq V_1$, the probability that $\phi(\mathcal{D}) \subseteq S$ if $|\phi(\mathcal{D})| \geq \gamma n^k$ is at least $(5/6)^2 > 2/3$ as required. This completes the proof.
\end{proof}

%We sketch the proof of Theorem \ref{theorem: strengthened main}. First note that to discover a $k$-tuple $\bar{a} \in \phi(\mathcal{D}, \mathbf{C}, \epsilon)$ we only actually need to sample an element from each connected component in the neighbourhood of $\bar{a}$. Using this fact and the definition of $c_m$ we can sample $c_m$-tuples instead of $k$-tuples as is done in the proof of Theorem \ref{thm:non-local enumeration main}.  We can show that each $c_m$-tuple gives at most a constant number of $k$-tuples from $\phi(\mathcal{D})$ and therefore if $|\phi(\mathcal{D})| \geq \gamma n^{c_m}$, then the number of $c_m$-tuples that lead to at least one tuple in $\phi(\mathcal{D})$ is a constant fraction of $\gamma n^{c_m}$. Hence there is a sufficient number of `good' $c_m$-tuples to allow us to sample one in the preprocessing phase with high probability. We can then start enumerating and sample more tuples between outputs as in the proof of Theorem \ref{thm:non-local enumeration main}.
%Before stating our next result we return to our running example to demonstrate the reduction in the answer threshold function given in Theorem \ref{theorem: strengthened main}.
As discussed in Section \ref{sec: local queries}, it is natural for us to expect that for queries that occur in practice, the neighbourhood of the answer tuple has few connected components. We saw that for local FO queries, in such scenarios we can reduce the number of answers required to output all answers to the query with high probability (Theorem \ref{theorem: strengthened local main}). The following theorem shows how we can reduce the answer threshold function for general FO queries.

\begin{thm}\label{theorem: strengthened non local main}
Let $\phi(\bar{x}) \in \operatorname{FO}[\sigma]$ and let $c:=\operatorname{conn}(\phi,d)$. Then the problem $\operatorname{Enum}_{\mathbf{C}^t_d}(\phi)$ can be solved approximately with polylogarithmic preprocessing time and constant delay for answer threshold function $f(n)=\gamma n^{c}$ for any parameter $\gamma \in (0,1)$.

\end{thm}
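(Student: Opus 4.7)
The plan is to combine the preprocessing strategy of Theorem~\ref{thm:non-local enumeration main} with a $c$-tuple sampling strategy analogous to the one underlying Theorem~\ref{theorem: strengthened local main}. In the preprocessing phase, I would invoke the algorithm from Lemma~\ref{lemma: computing rel r-types} to compute, in polylogarithmic time, a set $T$ of $r$-types with $k$ centres such that, with probability at least $5/6$, every $\bar{a} \in \phi(\mathcal{D})$ has its $r$-type in $T$, and no tuple outside $\phi(\mathcal{D},\mathbf{C}_d^t,\epsilon)$ has its $r$-type in $T$. The crucial observation is that, by construction of $T$ via the normal form of Lemma~\ref{normal-form lemma} and by definition of $\operatorname{conn}(\phi,d)$, every $r$-type in $T$ has at most $c$ connected components.

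In the enumeration phase, rather than sampling $k$-tuples from $D^k$ as in Theorem~\ref{thm:non-local enumeration main}, I would sample $c$-tuples $\bar{b} = (b_1,\ldots,b_c) \in D^c$. For each sampled $\bar{b}$, I would explore each neighbourhood $\mathcal{N}_r(b_i)$ (of constant size, since the degree is bounded by $d$) via oracle queries in constant time, and enumerate every $k$-tuple $\bar{a}$ satisfying the following: (i) every component of $\bar{a}$ lies within distance $r$ of some $b_i$; (ii) the $r$-neighbourhood $\mathcal{N}_r(\bar{a})$ decomposes into connected components, one containing each $b_i$; and (iii) the $r$-type of $\bar{a}$ lies in $T$. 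All three conditions can be checked in constant time from the already-explored neighbourhoods of the $b_i$'s. To avoid duplicates (since the same $\bar{a}$ can be generated by many $c$-tuples), I would maintain an array indexed by $k$-tuples, accessed in constant time via lazy initialisation, in the spirit of Lemma~\ref{lemma: partitioned set enumeration}.

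For the analysis, each $\bar{a} \in \phi(\mathcal{D})$ whose $r$-type is in $T$ is generated by at least one sampled $c$-tuple $\bar{b}$, obtained by picking one representative from each connected component of $\mathcal{N}_r(\bar{a})$; the number of such $\bar{b}$'s is a constant depending only on $d$, $r$, $k$ and $c$. Conversely, each $\bar{b} \in D^c$ produces only a constant number of candidate $k$-tuples, since every such $\bar{a}$ must lie entirely inside $N_r(\bar{b})$, which has constant size. Hence if $|\phi(\mathcal{D})| \geq \gamma n^c$, then the set of \emph{good} $c$-tuples (those generating at least one answer) has size $\Theta(\gamma n^c)$, i.e.\ a constant fraction of $|D|^c$. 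Invoking Lemma~\ref{lemma: partitioned set enumeration} with $V = D^c$ and $V_1$ the set of good $c$-tuples (and $V_2$ its complement) then yields enumeration of all relevant $c$-tuples with probability at least $5/6$; expanding each one gives all answers, and combining this with the $5/6$ correctness of the preprocessing yields total success probability at least $(5/6)^2 > 2/3$.

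The main obstacle I anticipate is keeping the delay \emph{constant} rather than merely amortized: the expansion of a single sampled $c$-tuple may produce several new $k$-tuples in one step, so outputs must be buffered into a queue and released one-per-enumeration-step, while fresh samples are drawn at each step to keep the queue nonempty — exactly the interleaving trick already used in Lemma~\ref{lemma: partitioned set enumeration}. A secondary subtlety is the duplicate-detection array: since it is indexed by $k$-tuples it has size $n^k$, but lazy initialisation ensures constant-time access and a constant-time preprocessing cost for its set-up, so these concerns do not affect the claimed polylogarithmic preprocessing and constant delay bounds.
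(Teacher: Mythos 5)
Your overall architecture matches the paper's: compute $T$ via Lemma~\ref{lemma: computing rel r-types}, sample short tuples instead of $k$-tuples, expand each sampled tuple into full answers in constant time, and buffer outputs in a queue. However, there is a genuine gap in the delay analysis. In your scheme a single answer $\bar{a}$ can be generated by many different $c$-tuples $\bar{b}$ (you note yourself that the number of such $\bar{b}$'s is a constant $K$), and a ``good'' $c$-tuple processed later may therefore produce \emph{only} answers that have already been output. The duplicate array guarantees correctness but not delay: consider answers $\bar{a}_1,\dots,\bar{a}_N$ each generated by exactly two $c$-tuples $b_i^{(1)},b_i^{(2)}$, where the random order in which Lemma~\ref{lemma: partitioned set enumeration} emits the good $c$-tuples happens to be $b_1^{(1)},\dots,b_N^{(1)},b_1^{(2)},\dots,b_N^{(2)}$. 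Then the last $N$ expansion steps contribute nothing new, so the time between the final output and the end-of-enumeration message (which the paper's definition of delay explicitly includes) is $\Theta(N)$. Your argument only yields \emph{amortized} constant delay.

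The missing idea is a canonical, one-to-one association between each answer and the short tuple that generates it. The paper achieves this with $r$-splits: each $k$-tuple $\bar{b}$ satisfies exactly one split formula $\operatorname{Split}_r^C$ and is ``found'' from exactly one pair $(\bar{a},C)$, where $\bar{a}$ consists of the \emph{most significant} element of each connected component and the remaining elements are recovered deterministically via the bindings $F_i$ (Remarks~\ref{remark: unique split} and~\ref{remark: unique tuple and split}, Lemma~\ref{lemma: compute found tuple}). With $V_1$ defined as the set of tuples $\bar{a}$ from which at least one answer is found, every element of $V_1$ processed in the enumeration phase adds at least one genuinely new tuple to the queue, so the queue never starves and no $k$-tuple-indexed duplicate array is needed at all. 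Two smaller points: answers whose $r$-neighbourhood has strictly fewer than $c$ components are not covered by your condition that the components of $\mathcal{N}_r(\bar{a})$ each contain one $b_i$ of a length-$c$ sample (the paper takes $V=\bigcup_{1\le i\le c}D^i$ for this reason), and determining the $r$-type of $\bar{a}$ requires exploring radius $2r$ (or, as in the paper, $3rk$) around the sampled elements, not just radius $r$.
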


We defer the proof of Theorem \ref{theorem: strengthened non local main} to Section \ref{sec: proof of theorem strengthened local main}.

\section{Proofs of Theorems \ref{theorem: strengthened local main} and \ref{theorem: strengthened non local main}}\label{sec: proof of theorem strengthened local main}
%Put a note on why we need to introduce this notation.

Before we prove Theorems \ref{theorem: strengthened local main} and \ref{theorem: strengthened non local main} we start with some definitions (which are based on those introduced by Kazana and Segoufin in \cite{KazanaS11}) and some lemmas. 

For each type $\tau \in T_r^{\sigma,d}(k)$ we fix a representative for the corresponding $r$-type and fix a linear order among its elements (where, for technical reasons, the centre elements always come first). This way, we can speak of the first, second, $\dots$, element of an $r$-type. Let $\mathcal{D}$ be a $\sigma$-db and let $\bar{a}$ be a tuple in $\mathcal{D}$ with $r$-type $\tau$. For technical reasons, if there are multiple isomorphism mappings from the $r$-neighbourhood of $\bar{a}$ to the fixed representative of $\tau$, we use the isomorphism mapping which is of smallest lexicographical order (recall that we assume that $\mathcal{D}$ comes with a linear ordering on its elements).
The cardinality of $\tau$, denoted as $|\tau|$, is the number of elements in its representative.

Let $\mathcal{D}$ be a $\sigma$-db  and $\bar{a}$ be a tuple of elements from $\mathcal{D}$. We say that $\bar{a}$ is \emph{$r$-connected} if the $r$-neighbourhood of $\bar{a}$ in $\mathcal{D}$ is connected.

Let $s\in \mathbb{N}$, let $F=(\alpha_2,\dots,\alpha_m)$ be a sequence of elements from $[d^{s+1}]$ (recall that the maximum size of an $s$-neighbourhood is $d^{s+1}$), and let $\bar{x}=(x_1, \dots, x_m)$ be a tuple. We write $\bar{x} = F(x_1)$ for the fact that, for $j \in \{2,\dots, m\}$, $x_j$ is the $\alpha_j$-th element of the $s$-neighbourhood of $x_1$. We call each such $F$ an \emph{$s$-binding of $\bar{x}$}. Given $s$-type $\tau$, we say that an $s$-binding $F$ of $\bar{x}$ is \emph{$r$-good for $\tau$} if $F(x_1)$ is $r$-connected for every $x_1$ with type $\tau$. 
%We let $\mathcal{F}_s^m$ denote the set of all $s$-bindings of tuples with $m$ elements.

For a given tuple $\bar{x}= (x_1,\dots,x_k)$, an $r$-split of $\bar{x}$ is a set of triples \\ $C=\{(C_1, F_1, \tau_1), \dots, (C_{\ell}, F_{\ell}, \tau_{\ell})\}$ where for each $i \in [\ell]$
\begin{itemize}
\item $\emptyset \neq C_i \subseteq \bar{x}$,
$C_i \cap C_j = \emptyset$ for $i \neq j\in [\ell]$ and
$\bigcup_{1 \leq i \leq \ell}C_i = \{x_1,\dots,x_k\}$,
%\item $\bar{x}_a = (x_{a_1},\dots, x_{a_c})$ and $\bar{x}_b = (x_{b_1},\dots, x_{b_{k-c}})$ where $a_1, \dots,a_c,b_1,\dots,b_{k-c} \in [k]$, \\$\{a_1, \dots,a_c\} \cap \{b_1,\dots,b_{k-c} \} =\emptyset$ and $\{a_1, \dots,a_c\} \cup \{b_1,\dots,b_{k-c} \} =[k]$,
%\item $\{x_1^1,\dots, x_c^1\}\cap \{x_1^2,\dots, x_{k-c}^2\} = \emptyset$ and $\{x_1^1,\dots, x_c^1\}\cup \{x_1^2,\dots, x_{k-c}^2\} = \{x_1,\dots,x_k\}$, 
\item $\tau_i$ is a $3rk$-type with $1$ centre, and
\item $F_i=(\alpha_2,\dots,\alpha_{|C_i|})$ is a $3rk$-binding of a tuple with $|C_i|$ elements such that for each $j \in \{2,\dots,|C_i|\}$, $\alpha_j \in [|\tau_i|]$ and $F_i$ is $r$-good for $\tau_i$.

\end{itemize}
We write $\bar{x}^i$ to represent the variables from $C_i$, $x_1^i$ to represent the most significant variable from $C_i$ (i.e the variable in $C_i$ which appears first in the tuple $\bar{x}$), $x_2^i$ to represent the second most significant variable from $C_i$ (i.e the variable in $C_i$ which appears second in the tuple $\bar{x}$) and so on. We define the formula \[\operatorname{Split}_{r}^C(\bar{x}) := \bigwedge_{1\leq i \neq j \leq \ell}(N_r(\bar{x}^i) \cap N_r(\bar{x}^j) = \emptyset) \wedge \bigwedge_{(C_i,F_i,\tau_i) \in C}(\bar{x}^i=F_i(x_1^i) \land  \operatorname{sph}_{\tau_i}(x_1^i)).\]
 We let $S_r^{\sigma,d}(k)$ denote the set of $r$-splits of tuples with $k$ elements for $\sigma$-dbs with degree at most $d$. We denote the cardinality of $S_r^{\sigma,d}(k)$ as $s(r,k)$.

\begin{rem}\label{remark: unique split} For any $r,k \in \mathbb{N}$, $\sigma$-db $\mathcal{D}$ and tuple $\bar{a} \in D^k$ there exists exactly one $r$-split $C$ such that $\mathcal{D} \models \operatorname{Split}_{r}^C(\bar{a})$.
\end{rem}

Let $\mathcal{D}$ be a $\sigma$-db, let $r,k,c \in \mathbb{N}$ where $c \leq k$ and let $C$ be an $r$-split for a tuple with $k$ elements. For tuples $\bar{a} \in D^c$ and $\bar{b} \in D^k$ we say that $\bar{b}$ is \emph{found} from $\bar{a}$ and $C$, if $c = |C|$, $\mathcal{D} \models \operatorname{Split}_{r}^C(\bar{b})$ and for every $i \in [c]$, the element $b_1^i$ (from $\bar{b}$) according to $C$, is equal to $a_i$. Intuitively, $\bar{a}$ consists of the most significant elements from $\bar{b}$ according to $C$.

\begin{rem}\label{remark: unique tuple and split} Let $\mathcal{D}$ be a $\sigma$-db and let $r,k \in \mathbb{N}$. For any $\bar{b} \in D^k$ there exists exactly one  $r$-split $C$ (of a tuple with $k$ elements) and tuple $\bar{a}$ from $\mathcal{D}$ such that $\bar{b}$ is found from $\bar{a}$ and $C$. \end{rem}

\begin{lem}\label{lemma: compute found tuple} Let $r,k,c \in \mathbb{N}$ where $c \leq k$. There exists an algorithm which, given a $\sigma$-db $\mathcal{D}$, a tuple $\bar{a} \in D^c$ and an $r$-split $C$ of a tuple with $k$ elements as input, returns a tuple  $\bar{b} \in D^k$ that is found from $\bar{a}$ and $C$ if one exists and returns false otherwise. Furthermore if such a $\bar{b}$ exists then it is unique.

The running time of the algorithm depends only on $r$, $|C|$, $k$, $\sigma$ and $d$.

%let $C$ be an $r$-split of a tuple with $k$ elements and let $\mathcal{D}$ be a $\sigma$-db. For any $c \leq k$ and $\bar{a} \in D^c$, there exists at most one $\bar{b} \in D^k$ that is found from $\bar{a}$ and $C$ and there exists an algorithm \polly{add input} that runs in constant time which returns $\bar{b}$ if it exists and returns false otherwise.
\end{lem}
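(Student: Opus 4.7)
The plan is to reconstruct $\bar{b}$ element-by-element from $\bar{a}$ and $C$, then verify the disjointness side conditions of $\operatorname{Split}_r^C$.

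First I would do the obvious structural compatibility checks: that $c$ equals $|C|=\ell$, and, for each $i\in[c]$, that the $3rk$-type of $a_i$ in $\mathcal{D}$ equals $\tau_i$. Computing the $3rk$-type of $a_i$ amounts to exploring the $3rk$-neighbourhood of $a_i$ (which contains at most $d^{3rk+1}$ elements, using the degree bound) and comparing it to the fixed representative of $\tau_i$; this takes time depending only on $r$, $k$, $d$ and $\|\sigma\|$. If either check fails, return false.

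Next I would reconstruct each block $\bar{b}^i$. Set $b_1^i := a_i$, and for $j\in\{2,\ldots,|C_i|\}$ use the binding $F_i=(\alpha_2,\ldots,\alpha_{|C_i|})$ to set $b_j^i$ to be the $\alpha_j$-th element of the $3rk$-neighbourhood of $a_i$, where the ordering on the neighbourhood is the one induced by the fixed representative of $\tau_i$ via the canonical (lexicographically smallest) isomorphism, as stipulated at the start of Section~\ref{sec: proof of theorem strengthened local main}. By construction, $\bar{b}^i=F_i(b_1^i)$, and together with the preceding type check this ensures $\mathcal{D}\models\operatorname{sph}_{\tau_i}(b_1^i)\wedge \bar{b}^i=F_i(b_1^i)$. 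Assemble the tuple $\bar{b}\in D^k$ by placing the elements of each $\bar{b}^i$ into the positions of $\bar{x}$ dictated by $C_i$. Finally, verify the remaining conjunct of $\operatorname{Split}_r^C$: for every pair $i\neq j$, test whether $N_r(\bar{b}^i)\cap N_r(\bar{b}^j)=\emptyset$; each $r$-neighbourhood has bounded size, so all $O(c^2)$ checks run in time depending only on $r$, $k$ and $d$. If any check fails, return false; otherwise return $\bar{b}$.

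For uniqueness, note that once $b_1^i=a_i$ is fixed, the binding $F_i$ pins down each $b_j^i$ unambiguously, because the ``$\alpha_j$-th element of the $3rk$-neighbourhood of $a_i$'' is well-defined via the fixed canonical ordering. Consequently any $\bar{b}$ that is found from $\bar{a}$ and $C$ must coincide with the tuple produced above, proving both uniqueness and correctness of the algorithm. There is no serious conceptual obstacle: the main point requiring care is simply that the canonical ordering on neighbourhoods (inherited from the representative of each type together with the lexicographically smallest isomorphism) is used consistently so that $F_i$ and the ``$\alpha_j$-th element'' refer to the same, well-defined object.
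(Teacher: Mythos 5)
Your proposal is correct and follows essentially the same route as the paper's proof: the same compatibility checks ($|C|=c$ and the $3rk$-type of each $a_i$), the same reconstruction of each block $\bar{b}^i$ via the binding $F_i$, the same final disjointness verification, and the same uniqueness argument. Your explicit remark that the ``$\alpha_j$-th element'' is well-defined via the canonical ordering on the fixed representative is a point the paper leaves implicit, but it does not change the argument.
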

\begin{proof}
Let  $\mathcal{D}$ be a $\sigma$-db, let $\bar{a} \in D^c$ and let $C$ be an $r$-split of a tuple with $k$ elements.
The following algorithm returns a tuple $\bar{b} \in D^k$ that is found from $\bar{a}$ and $C$ if one exists and returns false otherwise.
\begin{enumerate}
\item  If $|C| \neq c $ or $\mathcal{D} \not\models \bigwedge_{(C_i,F_i,\tau_i) \in C} \operatorname{sph}_{\tau_i}(a_i)$ then return false.
\item For each $i \in [c]$, let $\bar{b}^i$ be the tuple whose first element is $a_i$ such that $\mathcal{D} \models (\bar{b}^i = F_i(a_i))$. Then let $\bar{b}$ be the tuple found by combining all the $\bar{b}^i$ according to $C$.
\item If $\mathcal{D} \models \bigwedge_{1\leq i \neq j \leq c}(N_r(\bar{b}^i) \cap N_r(\bar{b}^j) = \emptyset)$, return $\bar{b}$. Otherwise, return false.
\end{enumerate}

The $3rk$-neighbourhood of an element can be computed in time only dependent on $r$, $k$, $\sigma$ and $d$. Hence Steps 1 and 2 runs in time dependent only on $r$, $k$, $\sigma$, $d$ and $|C|$ since each $\bar{b}^i$ can be found by exploring the $3rk$-neighbourhood of $a_i$. In Step 3, for every $i \in [c]$, $N_r(\bar{b}^i)$ can be computed in time only dependent on $r$, $|\bar{b}^i| \leq k$, $\sigma$ and $d$ and hence the running time of Step 3 depends only on $r$, $k$, $\sigma$, $d$ and $|C|$ also. Therefore the overall running time of the algorithm depends only on $r$, $k$, $\sigma$, $d$ and $|C|$ as required.

Assume a tuple $\bar{b}$ is returned by the above algorithm from $C$ and $\bar{a}$. Then clearly $c=|C|$, $\mathcal{D} \models \operatorname{Split}_{r}^C(\bar{b})$ and each $b_1^i$ according to $C$ is equal to $a_i$. Therefore $\bar{b}$ is found from $\bar{a}$ and $C$. 

Now assume that there does exist a tuple $\bar{b} \in D^k$ that is found from $\bar{a}$ and $C$. Then $\bar{b}$ is unique as there is only one way to choose each tuple $\bar{b}^i$ such that $\mathcal{D} \models (\bar{b}^i = F_i(a_i))$. Furthermore, it is easy to see that $\bar{b}$ will be outputted by the above algorithm.
%Takes constant time, there is at least one $C\in S$ where a b can be found for any a in V1. b is unique. every tuple can be found by only one pair.
\end{proof}

%\begin{lem}\label{lemma: unique found tuple} Let $r,k \in \mathbb{N}$ and let $\mathcal{D}$ be a $\sigma$-db. Then for any $\bar{b} \in D^k$ there exists exactly one  $r$-split $C$ (of a tuple with $k$ elements), $c \leq k$ and $\bar{a} \in D^c$ such that $\bar{b}$ is found from $\bar{a}$ and $C$.
%\end{lem}
%\begin{proof}
%
%\end{proof}

%We denote by $F(\bar{x}_a,\bar{x}_b)$ the fact that for all $1 \leq i \leq k-c$, $x_{b_i}$ is the $\alpha_i$-th element in the $rk$-neighbourhood of $\bar{x}_a$. We define the FO formula $\operatorname{Split}_{c,r}^L(\bar{x}):= \operatorname{sph}_{\tau}(\bar{x}_a) \land F(\bar{x}_a,\bar{x}_b)$.

\begin{lem}\label{lemma: equivalent partitions}
Let $T \subseteq T_r^{\sigma,d}(k)$. 
%Let $c$ be the maximum number of connected components in a type $\tau \in T$. 
We can compute a set of $r$-splits $S$ for $\bar{x}= (x_1,\dots,x_k)$ such that the following holds: For any $\sigma$-db $\mathcal{D}$ and tuple $\bar{a}\in D^k$, 
%if $\mathcal{D} \models \bigvee_{\tau \in T} \operatorname{sph}_{\tau}(\bar{a})$ then there exists exactly one $L \in \{L_1,\dots,L_m\}$ such that $\mathcal{D} \models \operatorname{Split}_{c,r}^{L}(\bar{a})$, and if $\mathcal{D} \models \bigvee_{1\leq i \leq m} \operatorname{Split}_{c,r}^{L_i}(\bar{a})$ then $\mathcal{D} \models \bigvee_{\tau \in T} \operatorname{sph}_{\tau}(\bar{a})$. 
$\mathcal{D} \models \bigvee_{\tau \in T} \operatorname{sph}_{\tau}(\bar{a})$ if and only if $\mathcal{D} \models \bigvee_{C \in S} \operatorname{Split}_{r}^{C}(\bar{a})$. 
\end{lem}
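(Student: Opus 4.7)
The plan is to construct $S$ by taking each $\tau \in T$ and enumerating every way of lifting $\tau$ to a split: first decompose $\tau$ into its $r$-connected components of centres, and then, for each component, enumerate the possible $3rk$-types of the most significant centre that are consistent with that component. Concretely, given $\tau \in T$, consider its fixed representative and let $\pi(\tau) = \{C_1, \dots, C_\ell\}$ be the partition of $\{x_1,\dots,x_k\}$ where two centres lie in the same block iff they lie in the same connected component of $\mathcal{N}_r$ in the representative. Within such a component the diameter is at most $2r(k+1)\le 3rk$, so for every $i$, all centres in $C_i$ lie inside the $3rk$-neighbourhood of the most significant centre $x_1^i$.

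For each such $\tau$ and each tuple of choices $(\tau_1,F_1),\dots,(\tau_\ell,F_\ell)$ with $\tau_i \in T_{3rk}^{\sigma,d}(1)$ and $F_i$ a $3rk$-binding of a tuple of length $|C_i|$, I add $C_\tau := \{(C_i,F_i,\tau_i)\mid i\in[\ell]\}$ to $S$ provided that: $F_i$ is $r$-good for $\tau_i$, and the restriction of the representative of $\tau_i$ to the elements of $F_i(x_1^i)$, together with their $r$-neighbourhoods, is isomorphic (respecting the centres) to the $i$-th connected component of the representative of $\tau$ together with the centres of $C_i$. Each of these checks is purely on finite fixed representatives, so $S$ is computable. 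Since $|T|$ is finite and $|T_{3rk}^{\sigma,d}(1)|$ and the number of possible bindings are both bounded, $S$ is finite.

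For the forward direction, suppose $\mathcal{D}\models \operatorname{sph}_\tau(\bar{a})$ for some $\tau\in T$. The partition of $\bar{a}$ into $r$-connected pieces agrees with $\pi(\tau)$, the $3rk$-type of each $a_1^i$ in $\mathcal{D}$ is some concrete $\tau_i$, and the binding $F_i$ recording where the other centres of $C_i$ sit inside $N_{3rk}(a_1^i)$ is determined. By construction this $(\tau_i,F_i)$ choice was enumerated (the representatives match by the consistency check), so the resulting $C_\tau$ lies in $S$ and $\mathcal{D}\models \operatorname{Split}_r^{C_\tau}(\bar{a})$. For the backward direction, suppose $\mathcal{D}\models \operatorname{Split}_r^{C}(\bar{a})$ for $C\in S$ arising from $\tau\in T$. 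Then $\mathcal{N}_r(\bar{a}^i)$ and $\mathcal{N}_r(\bar{a}^j)$ are disjoint for $i\ne j$, and for each $i$ the $3rk$-type of $a_1^i$ is $\tau_i$ and $\bar{a}^i = F_i(a_1^i)$; the consistency check guarantees that $\mathcal{N}_r(\bar{a}^i)$ together with $\bar{a}^i$ is isomorphic to the $i$-th component of $\tau$. Assembling these disjoint components yields $\mathcal{N}_r(\bar{a})\cong\tau$, so $\mathcal{D}\models \operatorname{sph}_\tau(\bar{a})$.

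The main obstacle is the fact that an $r$-type $\tau$ does not by itself determine the $3rk$-types of the most significant centres of its components, which forces the construction to enumerate all finitely many $3rk$-extensions of each component rather than pick a canonical one. The bookkeeping point to verify is precisely that the consistency condition between $\tau_i$, $F_i$ and the $i$-th component of $\tau$ is both testable on representatives and strong enough to reconstruct $\mathcal{N}_r(\bar{a})$ from the split data, which is what makes the two directions line up.
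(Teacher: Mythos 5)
Your proposal is correct and takes essentially the same approach as the paper: both arguments rest on the key fact that the data of an $r$-split (the $3rk$-types $\tau_i$ of the lead centres, the bindings $F_i$, and the pairwise disjointness of the components' $r$-neighbourhoods) completely determines the $r$-type of any tuple satisfying $\operatorname{Split}_r^C$, using the same distance bound placing all centres of a block inside the $3rk$-ball of its lead centre. The only difference is the direction of enumeration --- the paper iterates over all $r$-splits $C$, reads off the induced $r$-type from the disjoint union of the representatives of the $\tau_i$, and keeps $C$ when that type lies in $T$, whereas you iterate over $\tau \in T$ and enumerate the splits consistent with its components --- but the resulting sets $S$ coincide.
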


\begin{proof}
The algorithm proceeds as follows.
Let $S$ be an empty set. For each possible $r$-split $C=\{(C_1, F_1, \tau_1), \dots, (C_{\ell}, F_{\ell}, \tau_{\ell})\}$ of the tuple $\bar{x}$ do the following. Let $\mathcal{D}_0$ be the disjoint union of the fixed representatives of each $\tau_i$. Let $\bar{b} \in D_0^k$ be a tuple such that $\mathcal{D}_0 \models  \operatorname{Split}_{r}^{C}(\bar{b})$ (note that such a tuple exists by the definition of an $r$-split). Then if $\bar{b}$'s $r$-type in $\mathcal{D}_0$ is in $T$, add $C$ to $S$.

Towards correctness let $\mathcal{D}$ be a $\sigma$-db and let $\bar{a}\in D^k$. Let $C=\{(C_1, F_1, \tau_1), \dots, (C_{\ell}, F_{\ell}, \tau_{\ell})\}$ be the $r$-split such that $\mathcal{D} \models  \operatorname{Split}_{r}^{C}(\bar{a})$ (note that $C$ is unique by Remark \ref{remark: unique split}).  Let $\mathcal{D'}$ be the disjoint union of the fixed representatives of each $3rk$-type that appears in $C$ and let $\bar{b} \in D'^k$ be a tuple such that $\mathcal{D'} \models  \operatorname{Split}_{r}^{C}(\bar{b})$. It remains to show that $\mathcal{N}_r^{\mathcal{D}}(\bar{a}) \cong \mathcal{N}_r^{\mathcal{D'}}(\bar{b}) $. This completes the proof because by the construction of $S$, it implies that $C \in S$ if and only if the $r$-type of $\bar{a}$ in $\mathcal{D}$ is in $T$ (i.e. $\mathcal{D} \models \bigvee_{\tau \in T} \operatorname{sph}_{\tau}(\bar{a})$ if and only if $\mathcal{D} \models \bigvee_{C \in S} \operatorname{Split}_{r}^{C}(\bar{a})$). Recall that we use $a_j^i$ and $b_j^i$ to represent the elements from $\bar{a}$ and $\bar{b}$ respectively that are the elements from $C_i$ that appear $j$-th in the tuples $\bar{a}$ and $\bar{b}$ respectively. As $\mathcal{D} \models  \operatorname{Split}_{r}^{C}(\bar{a})$ and $\mathcal{D'} \models  \operatorname{Split}_{r}^{C}(\bar{b})$, by the definition of the formula $\operatorname{Split}_{r}^{C}(\bar{x})$, it follows that $\mathcal{N}_{3rk}^{\mathcal{D}}(a_1^i) \cong \mathcal{N}_{3rk}^{\mathcal{D'}}(b_1^i) $ for every $i \in [\ell]$. For every $i \in [\ell]$ and $j \in [|C_i|]$, $a_j^i$ is at distance at most $(2r +1)(|C_i|-1) \leq (2r +1)(k-1)\leq 3rk-r$ from $a_1^i$ in $\mathcal{D}$, and $b_j^i$ is at distance at most $(2r +1)(|C_i|-1) \leq (2r +1)(k-1)\leq 3rk-r$ from $b_1^i$ in $\mathcal{D'}$ (since each $F_i$ is $r$-good for $\tau_i$). Therefore for every $i \in [\ell]$, the $r$-neighbourhoods of $\bar{a}_i$ and $\bar{b}_i$ are contained in the $3rk$-neighbourhoods of $a_1^i$ and $b_1^i$ respectively and hence $\mathcal{N}_r^{\mathcal{D}}(\bar{a}_i) \cong \mathcal{N}_r^{\mathcal{D'}}(\bar{b}_i) $. Then since $N_r^{\mathcal{D}}(\bar{a}_i) \cap N_r^{\mathcal{D}}(\bar{a}_j)=\emptyset$ and $N_r^{\mathcal{D'}}(\bar{b}_i) \cap N_r^{\mathcal{D'}}(\bar{b}_j)=\emptyset$ (as $\mathcal{D} \models  \operatorname{Split}_{r}^{C}(\bar{a})$ and $\mathcal{D'} \models  \operatorname{Split}_{r}^{C}(\bar{b})$), it follows that $\mathcal{N}_r^{\mathcal{D}}(\bar{a}) \cong \mathcal{N}_r^{\mathcal{D'}}(\bar{b}) $.
\end{proof}

Let us first prove Theorem \ref{theorem: strengthened non local main}.

\begin{proof}[Proof of Theorem \ref{theorem: strengthened non local main}]

Let $\mathcal{D} \in \mathbf{C}_d^t$ with $|D|=n$, let $\epsilon \in (0,1]$ and let $\gamma \in (0,1)$. We shall construct an $\epsilon$-approximate enumeration algorithm for $\operatorname{Enum}_{\mathbf{C}_d^t}(\phi)$ that has answer threshold function $f(n)=\gamma n^c$, polylogarithmic preprocessing time and constant delay.

In the preprocessing phase, the algorithm starts by running the algorithm from Lemma \ref{lemma: computing rel r-types} on $\mathcal{D}$ to compute a set $T$ of $r$-types with $k$ centres. The algorithm then computes the set of $r$-splits $S$ from $T$ as in Lemma \ref{lemma: equivalent partitions}. An empty queue $\mathbf{Q}$ is then initialised which will store tuples to be outputted in the enumeration phase.

Let $V = \bigcup_{1 \leq i \leq c}D^i$. 
Let $V_1$ be the set that contains all $\bar{a} \in V$ such that there exists a $C \in S$ and $\bar{b} \in D^k$ where $\bar{b}$ is found from $\bar{a}$ and $C$.
%where $|C|=|\bar{a}|$ and there exists a $\bar{b} \in D^k$ such that $\mathcal{D} \models \operatorname{Split}_{r}^C(\bar{b})$ and for every $i \in [|C|]$, the element $b_1^i$ (from $\bar{b}$) according to $C$, is equal to $a_i$.
%$\mathcal{D} \models  \bigwedge_{(C_i,F_i,\tau_i) \in C} \operatorname{sph}_{\tau_i}(a_i) \wedge \bigwedge_{1\leq i \neq j \leq |C|}(N_r(\bar{a}^i) \cap N_r(\bar{a}^j) = \emptyset)$ where for every $i \in [|C|]$, $\bar{a}^i$ is the tuple from $\mathcal{D}$ such that $\mathcal{D} \models \bar{a}^i=F_i(a_i)$. 
Finally let $V_2 = V \setminus V_1$. Note that by Lemma \ref{lemma: compute found tuple}, given a tuple $\bar{a} \in V$ it can be decided in constant time whether $\bar{a} \in V_1$.

The algorithm from Lemma \ref{lemma: partitioned set enumeration} is then run with $\mu = \gamma/(c\cdot s(r,k))$, $\delta = 4/5$ and $V$, $V_1$ and $V_2$ as defined above. Once the enumeration phase of the algorithm from Lemma \ref{lemma: partitioned set enumeration} starts we do the following.
%The enumeration phase is then as follows:
\begin{enumerate}
\item Each time a tuple $\bar{a}$ is enumerated from the algorithm from Lemma \ref{lemma: partitioned set enumeration}, for each $C \in S$: run the algorithm from Lemma \ref{lemma: compute found tuple} with $\bar{a}$ and $C$ and if a tuple is returned add it to $\mathbf{Q}$.
\item If $\mathbf{Q} \neq \emptyset$, output the next tuple from $\mathbf{Q}$; stop otherwise.
\item Repeat Steps 1-2 until there is no tuple to output in step 2.
\end{enumerate}
%Each time a tuple $\bar{a}$ is outputted from the algorithm from Lemma \ref{lemma: partitioned set enumeration}, for each $C \in S$ such that $\mathcal{D} \models \bigwedge_{(C_i,F_i,\tau_i) \in C} \operatorname{sph}_{\tau_i}(a_i)$, 
%by exploring the $rk$-neighbourhood of $\bar{a}$,

%the tuple $\bar{a'}$ is computed such that $\mathcal{D} \models  \operatorname{Split}_{r}^{C}(\bar{a'})$ and 

From Lemma \ref{lemma: computing rel r-types} the set $T$ can be computed in polylogarithmic time. The set $S$ can be constructed in constant time as $|T|$ is a constant and the number of possible $r$-splits for a $k$-tuple is also a constant. Then as the preprocessing phase from the algorithm from Lemma \ref{lemma: partitioned set enumeration} runs in constant time the overall running time of the preprocessing phase is polylogarithmic. 

In the enumeration phase, by Lemma \ref{lemma: partitioned set enumeration} there is constant delay between the outputs of the tuples $\bar{a}$ used in Step 1. For every such tuple, by the definition of the set $V_1$, there exists at least one $r$-split in $S$ that leads to a tuple being added to $\mathbf{Q}$. Then as $|S|$ is a constant and the algorithm from Lemma \ref{lemma: compute found tuple} runs in constant time, the enumeration phase has constant delay as required. This concludes the analysis of the running time. Let us now prove correctness.

By Lemma \ref{lemma: partitioned set enumeration} in Step 1 of the enumeration phase no duplicate tuples $\bar{a}$ will be considered. Since for every tuple $\bar{b} \in D^k$ there exists exactly one $r$-split $C$ and tuple $\bar{a}$ from $\mathcal{D}$ such that $\bar{b}$ is found from $C$ and $\bar{a}$ (Remark \ref{remark: unique tuple and split}), no duplicates will be enumerated.

Now let us assume that the set of $r$-types $T$ were computed correctly (i.e. for any $\bar{a} \in D^k$,
 if $\bar{a} \in \phi(\mathcal{D})$, then the $r$-type of $\bar{a}$ in $\mathcal{D}$ is in $T$, and if $\bar{a} \in D^k \setminus \phi(\mathcal{D},\mathbf{C}_d^t, \epsilon )$, then the $r$-type of $\bar{a}$ in $\mathcal{D}$ is not in $T$) which happens with probability at least $5/6$ by Lemma \ref{lemma: computing rel r-types}.
%Therefore by Lemma \ref{lemma: equivalent partitions} for every $\bar{b} \in D^k$, let $C$ be the $r$-split such that  $\mathcal{D} \models \operatorname{Split}_{r}^C(\bar{b})$, then if $\bar{b} \in \phi(\mathcal{D})$, $C$ is in $S$, and if $\bar{a} \in D^k \setminus \phi(\mathcal{D},\mathbf{C}_d^t, \epsilon )$, then $C$ is not in $S$. 
Let $\bar{b} \in D^k$ have $r$-type $\tau$ in $\mathcal{D}$ and let $C \in S_r^{\sigma,d}(k)$ be such that $\mathcal{D} \models \operatorname{Split}_{r}^{C}(\bar{b})$. 

If $\bar{b} \in D^k \setminus \phi(\mathcal{D},\mathbf{C}_d^t, \epsilon )$, $\tau \not\in T$ and hence by Lemma \ref{lemma: equivalent partitions}, $C \not\in S$ and so $\bar{b}$ will not be enumerated. Therefore with probability at least $5/6$ only tuples from $\phi(\mathcal{D},\mathbf{C}_d^t, \epsilon )$ will be enumerated.

If $\bar{b} \in \phi(\mathcal{D})$, then $\tau \in T$ and hence by Lemma \ref{lemma: equivalent partitions}, $C \in S$. Let $\bar{a}$ be the tuple such that $\bar{b}$ is found from $\bar{a}$ and $C$. Note that as the maximum number of connected components in the $r$-neighbourhood of $\bar{b}$ in $\mathcal{D}$ is $c$, $|\bar{a}| \leq c$ and hence $\bar{a} \in V$. Then by definition $\bar{a} \in V_1$. Hence if every tuple from $V_1$ is considered in Step 1 of the enumeration phase, every tuple in $\phi(\mathcal{D})$ will be enumerated. By Lemma \ref{lemma: partitioned set enumeration} with probability at least $\delta$ if $|V_1| \geq \mu |V|$, every tuple from $V_1$ will be considered in Step 1. We know that $|V_1| \geq |\phi(\mathcal{D})|/s(r,k)$ as every $\bar{a} \in V_1$ leads us to at most $|S| \leq s(r,k)$ many tuples from $\phi(\mathcal{D})$ (since by Lemma \ref{lemma: compute found tuple} for any $r$-split $C\in S$ there is at most one tuple that is found from $\bar{a}$ and $C$). If $|\phi(\mathcal{D})| \geq \gamma n^c$ then $|V_1| \geq \gamma n^c/s(r,k) \geq \gamma |V|/(c\cdot s(r,k)) = \mu |V|$ as $|V| = \sum_{i=1}^c n^i \leq c n^c$ and by the choice of $\mu$. Hence if $|\phi(\mathcal{D})| \geq \gamma n^c$ with probability at least $\delta \cdot 5/6 = 2/3$ every tuple from $\phi(\mathcal{D})$ will be enumerated. This completes the proof.
\end{proof}

We now prove Theorem \ref{theorem: strengthened local main} which is similar to the proof of Theorem \ref{theorem: strengthened non local main}.

\begin{proof}[Proof of Theorem \ref{theorem: strengthened local main}] 
First let us note that if $\phi$ is local then by Lemma \ref{lemma: local query rel r-types} we can compute a set $T$ of $r$-types (where $r$ is the locality radius of $\phi$) in constant time such that for any $\sigma$-db $\mathcal{D}$ and tuple $\bar{a}$ from $\mathcal{D}$, the $r$-type of $\bar{a}$ in $\mathcal{D}$ is in $T$ if and only if $\bar{a} \in \phi(\mathcal{D})$. 

Then to construct an algorithm as in the theorem statement we can just use the algorithm from the proof of Theorem \ref{theorem: strengthened non local main} but change it in two ways. Firstly we allow the input class to be any class of bounded degree $\sigma$-dbs and secondly, we construct $T$ as discussed above. The only part of the algorithm from the proof of Theorem \ref{theorem: strengthened non local main} that runs in non-constant time is the construction of $T$ and hence our algorithm has the required running times. 

To prove correctness first note that in the proof of Theorem \ref{theorem: strengthened non local main} the only reason the input class was $\mathbf{C}_d^t$ was to allow the set $T$ to be computed efficiently and with high probability correctly. Now $T$ is computed exactly and since the algorithm will only enumerate tuples that have $r$-type in $T$, only tuples that are answers to the query for the input database will be enumerated as required. The proof of (2) from the theorem statement is then very similar to the last paragraph in the proof of Theorem \ref{theorem: strengthened non local main} (the only difference is that now for local queries this happens with higher probability as $T$ is computed exactly every time).
% Then the proof of Theorem \ref{theorem: strengthened local main} is the same as the proof of Theorem \ref{theorem: strengthened non local main} apart from in the construction of $T$ (which can be constructed in constant time and exactly now) and the input class can be any class of bounded degree $\sigma$-dbs (in the proof of Theorem \ref{theorem: strengthened non local main} the only reason the input class was $\mathbf{C}_d^t$ was to allow the set $T$ to be computed efficiently and with high probability correctly).
\end{proof}

% then for any $\sigma$-db $\mathcal{D}$, $\epsilon \in (0,1]$ and input class $\mathbf{C}$, $\phi(\mathcal{D}) = \phi(\mathcal{D}, \mathbf{C}, \epsilon)$. 

\section{Further Results}\label{sec: further results}

In this section, we start by generalising our result on approximate enumeration of general FO queries (Theorem~\ref{theorem: strengthened non local main}). We identify a condition that we call \emph{Hanf-sentence testability}, which is a weakening of the bounded tree-width condition, under which we still get approximate enumeration algorithms with the same probabilistic guarantees as before. Finally, we discuss approximation versions of query membership testing and counting.

\subsection{Generalising Theorem \ref{theorem: strengthened non local main}}

We first introduce Hanf-sentence testability, which is based on the Hanf normal form of a formula. It allows us to compute the set of $r$-types as in Lemma \ref{lemma: computing rel r-types} efficiently. Theorem~\ref{theorem: strengthened main hanf} below is the generalisation of Theorem \ref{theorem: strengthened non local main}, and Example~\ref{running example 2} illustrates the use of this generalisation.

\begin{defi}[Hanf-sentence testable]
Let $\phi(\bar{x}) \in \operatorname{FO}[\sigma]$ and $\chi(\bar{x})$ be the formula in the form (\ref{normal-form}) of Lemma \ref{normal-form lemma} that is $d$-equivalent to $\phi$. Let $m$ be the number of conjunctive clauses in $\chi$. We say that $\phi$ is \emph{Hanf-sentence testable on $\mathbf{C}$ in time $H(n)$} if for every $i \in [m]$, the formula $\exists \bar{x} \operatorname{sph}_{\tau_i}(\bar{x}) \land \psi_i^s$ is uniformly testable on $\mathbf{C}$ in time at most $H(n)$.
\end{defi}

We shall illustrate Hanf sentence testability in the following example. 

\begin{exa}\label{running example Hanf testable}
	Let $\phi$ be as in Example \ref{running example} and let $\mathcal{G}\in \mathbf{G}_d$. If there exists $(u,v) \in V(\mathcal{G})^2$ with $2$-type $\tau_1$ then there exists a vertex with $2$-type $\tau_4$ (where $\tau_4$ is as in Example \ref{tester example}) and vice versa. Hence, $\phi$ can be easily transformed into the form (\ref{normal-form}) of Lemma \ref{normal-form lemma} by replacing the subformula $\lnot(\exists z \exists w\operatorname{sph}_{\tau_1}(z,w))$ with $\lnot \exists ^{\geq 1}z\operatorname{sph}_{\tau_4}(z)$. The resulting formula then has two conjunctive clauses, ${sph}_{\tau_2}(x,y) \land \lnot \exists ^{\geq 1}z\operatorname{sph}_{\tau_4}(z)$ and ${sph}_{\tau_1}(x,y)$.
	 We saw in Example~\ref{tester example} that $\exists x \exists y \operatorname{sph}_{\tau_2}(x,y) \land \lnot \exists ^{\geq 1}z\operatorname{sph}_{\tau_4}(z)$ is uniformly testable on $\mathbf{G}_d$ in constant time. The formula $\exists x \exists y \operatorname{sph}_{\tau_1}(x,y)$ is trivially testable in constant time on $\mathbf{G}_d$ since we can insert a copy of $\tau_1$ into a graph $\mathcal{G} \in \mathbf{G}_d$ with at most $8d +7$ modifications and therefore if $8d+7 \leq \epsilon d |V(\mathcal{G})|$ we can always accept and otherwise (i.e. if $|V(\mathcal{G})| < (8d+7)/\epsilon d$) we can just do a full check of the graph for a copy of $\tau_1$ in constant time. Hence, $\phi$ is Hanf-sentence testable on $\mathbf{G}_d$ in constant time.
\end{exa}

%\begin{thm}\label{thm:non-local enumeration hanf-sentence}
%Let $\phi(\bar{x}) \in \operatorname{FO}[\sigma]$ where $|\bar{x}|=k$. If $\phi$ is Hanf-sentence testable on $\mathbf{C}$ in time $H(n)$, then $\operatorname{Enum}_{\mathbf{C}}(\phi)$ can be solved approximately with preprocessing time $\mathcal{O}(H(n))$ and constant delay for answer threshold function $f(n)=\gamma n^k$ for any parameter $\gamma \in (0,1)$. 
%\end{thm}

Note that any FO query is Hanf sentence testable on $\mathbf{C}_d^t$ in polylogarithmic time.
We shall now prove a result that is similar to Lemma \ref{lemma: computing rel r-types} but works for any class $\mathbf{C}$ and FO query $\phi$ where $\phi$ is Hanf-sentence testable on $\mathbf{C}$. This will then be used to show we can replace bounded tree-width with Hanf sentence testability and still obtain enumeration algorithms with the same probabilistic guarantees.

\begin{lem}\label{lemma: computing rel r-types strengthened}
Let $\phi(\bar{x}) \in \operatorname{FO}[\sigma]$ with $|\bar{x}|=k$ and Hanf locality radius $r$ and let $\epsilon \in (0,1]$. If $\phi$ is Hanf-sentence testable on $\mathbf{C}$ in time $H(n)$ then there exists an algorithm $\mathbb{B}_{\epsilon}$ that runs in time $\mathcal{O}(H(n))$, which, given oracle access to a $\sigma$-db $\mathcal{D} \in \mathbf{C}$ as input along with $|D|=n$, computes a set $T$ of $r$-types with $k$ centres such that with probability at least $5/6$, for any $\bar{a} \in D^k$,
\begin{enumerate}
\item if $\bar{a} \in \phi(\mathcal{D})$, then the $r$-type of $\bar{a}$ in $\mathcal{D}$ is in $T$, and
\item if $\bar{a} \in D^k \setminus \phi(\mathcal{D},\mathbf{C}, \epsilon )$, then the $r$-type of $\bar{a}$ in $\mathcal{D}$ is not in $T$.
\end{enumerate} 
\end{lem}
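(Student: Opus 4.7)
The plan is to adapt the proof of Lemma \ref{lemma: computing rel r-types} essentially verbatim, replacing the invocation of Theorem \ref{FO testability} with the hypothesis of Hanf-sentence testability. The only role the bounded tree-width assumption played in the proof of Lemma \ref{lemma: computing rel r-types} was to supply polylogarithmic-time testers for each of the sentences $\exists \bar{x}\,\operatorname{sph}_{\tau_i}(\bar{x}) \land \psi_i^s$ arising from the normal form of $\phi$; and Hanf-sentence testability on $\mathbf{C}$ in time $H(n)$ is precisely the guarantee that such testers exist with running time $H(n)$.

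First I would handle the small-$n$ case by brute force: if $n < 8k/\epsilon$, fully inspect $\mathcal{D}$ and compute the correct set $T$ exactly. Otherwise, using Lemma \ref{normal-form lemma}, compute in constant time the $d$-equivalent formula $\chi(\bar{x}) = \bigvee_{i \in [m]}\bigl(\operatorname{sph}_{\tau_i}(\bar{x}) \land \psi_i^s\bigr)$. By the assumption that $\phi$ is Hanf-sentence testable on $\mathbf{C}$ in time $H(n)$, for each $i \in [m]$ there is an $\epsilon/2$-tester $\pi_i$ for $\exists \bar{x}\,\operatorname{sph}_{\tau_i}(\bar{x}) \land \psi_i^s$ on $\mathbf{C}$ whose running time is $H(n)$. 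Amplify each $\pi_i$ to success probability at least $(5/6)^{1/m}$ by independent repetition a constant number of times (this preserves the $\mathcal{O}(H(n))$ time bound since $m$ is a constant depending only on $\phi$ and $d$). The algorithm $\mathbb{B}_\epsilon$ then runs each amplified tester on $\mathcal{D}$ and adds $\tau_i$ to $T$ whenever $\pi_i$ accepts. By a union bound, all $m$ testers are simultaneously correct with probability at least $5/6$, and the total running time is $\mathcal{O}(H(n))$ as required.

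For correctness, conditional on all testers answering correctly: if $\bar{a} \in \phi(\mathcal{D})$ then since $\phi$ is $d$-equivalent to $\chi$, there is some $i \in [m]$ with $\mathcal{D} \models \operatorname{sph}_{\tau_i}(\bar{a}) \land \psi_i^s$, hence $\mathcal{D} \models \exists \bar{x}\,\operatorname{sph}_{\tau_i}(\bar{x}) \land \psi_i^s$, so $\pi_i$ accepts and $\tau_i = \tau$, where $\tau$ is the $r$-type of $\bar{a}$ in $\mathcal{D}$; thus $\tau \in T$. Conversely, suppose $\bar{a}$ has $r$-type $\tau \in T$; then some $\tau_i = \tau$ was added, meaning $\mathcal{D}$ is $\epsilon/2$-close to a $\mathcal{D}' \in \mathbf{C}$ satisfying $\exists \bar{x}\,\operatorname{sph}_{\tau_i}(\bar{x}) \land \psi_i^s$. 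Exactly as in the proof of Lemma \ref{lemma: computing rel r-types}, we can further perform at most $4dk$ tuple modifications swapping a witness tuple of type $\tau_i$ in $\mathcal{D}'$ with $\bar{a}$ to obtain $\mathcal{D}'' \in \mathbf{C}$ (using closure of $\mathbf{C}$ under isomorphism) with $\bar{a}\in \phi(\mathcal{D}'')$ and $\bar{a}$ having $r$-type $\tau$ in $\mathcal{D}''$; the total distance from $\mathcal{D}$ is at most $\epsilon d n/2 + 4dk \le \epsilon d n$ since $n \ge 8k/\epsilon$, so $\bar{a} \in \phi(\mathcal{D},\mathbf{C},\epsilon)$.

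The only conceptual point to verify is that the iso-closure modification step still works in the abstract class $\mathbf{C}$. Since we assume throughout (as stated in the Proviso) that classes of databases are closed under isomorphism, this step carries over without change. No other use of the bounded tree-width assumption appears in the original argument, so there is no genuine obstacle; the main task is simply to track that the running time bound $\mathcal{O}(H(n))$ absorbs both the constant-time construction of $\chi$ and the constant-factor amplification of the $m$ testers.
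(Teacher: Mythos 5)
Your proposal is correct and matches the paper's own proof, which likewise constructs $\mathbb{B}_\epsilon$ by taking the algorithm $\mathbb{A}_\epsilon$ from Lemma \ref{lemma: computing rel r-types} verbatim and substituting the Hanf-sentence-testability hypothesis for Theorem \ref{FO testability} as the source of the $\epsilon/2$-testers $\pi_i$. Your explicit check that the iso-closure swap step carries over to an arbitrary class $\mathbf{C}$ is a detail the paper leaves implicit, but it is exactly the right point to verify.
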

\begin{proof} The algorithm $\mathbb{B}_{\epsilon}$ is nearly identical to the algorithm $\mathbb{A}_{\epsilon}$ from Lemma \ref{lemma: computing rel r-types}. The only difference being is we replace the input class $\mathbf{C}_d^t$ with $\mathbf{C}$. The $\epsilon/2$-testers $\pi_i$ used now have input class $\mathbf{C}$ (rather than $\mathbf{C}_d^t$) and as $\phi$ is Hanf-sentence testable on $\mathbf{C}$ in time $H(n)$ each $\pi_i$ runs in time $\mathcal{O}(H(n))$ (rather than polylogarithmic). As all other parts of the algorithm $\mathbb{A}_{\epsilon}$ run in constant time, it follows that $\mathbb{B}_{\epsilon}$ runs in time $\mathcal{O}(H(n))$ as required. The proof of the correctness of $\mathbb{B}_{\epsilon}$ is then identical to the proof of the correctness of $\mathbb{A}_{\epsilon}$ (but with the input class $\mathbf{C}_d^t$ replaced with $\mathbf{C}$).
\end{proof}

We shall now show that if a FO query $\phi$ is Hanf-sentence testable on a class $\mathbf{C}$ in time $H(n)$ then $\operatorname{Enum}_{\mathbf{C}}(\phi)$ can be solved approximately with preprocessing time $\mathcal{O}(H(n))$ and constant delay. Note we are still able to reduce the answer threshold function.

\begin{thm}\label{theorem: strengthened main hanf} 
Let $\phi(\bar{x}) \in \operatorname{FO}[\sigma]$ and let $c : =\operatorname{conn}(\phi, d)$.
 If $\phi$ is Hanf-sentence testable on $\mathbf{C}$ in time $H(n)$, then $\operatorname{Enum}_{\mathbf{C}}(\phi)$ can be solved approximately with preprocessing time $\mathcal{O}(H(n))$ and constant delay for answer threshold function $f(n)=\gamma n^{c}$ for any $\gamma \in (0,1)$.
\end{thm}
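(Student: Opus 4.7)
The plan is to adapt the proof of Theorem~\ref{theorem: strengthened non local main} by swapping out its one tree-width-specific ingredient for the Hanf-sentence-testable analogue. Inspecting that earlier proof, the only place where bounded tree-width is actually used is the call to Lemma~\ref{lemma: computing rel r-types}, which, given $\mathcal{D} \in \mathbf{C}_d^t$, produces a set $T$ of $r$-types (where $r$ is the Hanf locality radius of $\phi$) such that, with probability at least $5/6$, every tuple in $\phi(\mathcal{D})$ has its $r$-type in $T$ while every tuple in $D^k \setminus \phi(\mathcal{D},\mathbf{C}_d^t,\epsilon)$ has its $r$-type outside $T$. Lemma~\ref{lemma: computing rel r-types strengthened} achieves exactly this for any class $\mathbf{C}$ on which $\phi$ is Hanf-sentence testable in time $H(n)$, and it runs in $\mathcal{O}(H(n))$ time. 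The remaining building blocks (Lemmas~\ref{lemma: partitioned set enumeration}, \ref{lemma: equivalent partitions}, \ref{lemma: compute found tuple} and Remark~\ref{remark: unique tuple and split}) are purely combinatorial statements about arbitrary bounded-degree databases and make no appeal to tree-width.

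I would therefore run essentially the same $\epsilon$-approximate enumeration algorithm as in Theorem~\ref{theorem: strengthened non local main}, only replacing $\mathbb{A}_{\epsilon}$ with $\mathbb{B}_{\epsilon}$. In the preprocessing phase: invoke $\mathbb{B}_{\epsilon}$ on $\mathcal{D}$ to obtain $T$ in time $\mathcal{O}(H(n))$; compute from $T$ the set of $r$-splits $S$ via Lemma~\ref{lemma: equivalent partitions}; initialise an empty queue $\mathbf{Q}$; set $V := \bigcup_{1 \leq i \leq c} D^i$ and let $V_1$ consist of those $\bar{a} \in V$ for which Lemma~\ref{lemma: compute found tuple} returns some $\bar{b} \in D^k$ for some $C \in S$, with $V_2 := V \setminus V_1$; finally start the algorithm of Lemma~\ref{lemma: partitioned set enumeration} with $\mu := \gamma/(c\cdot s(r,k))$ and $\delta := 4/5$ on the partition $(V_1,V_2)$ of $V$. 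In the enumeration phase, every time Lemma~\ref{lemma: partitioned set enumeration} produces a new $\bar{a} \in V_1$, run Lemma~\ref{lemma: compute found tuple} with $\bar{a}$ against each of the constantly many $C \in S$, push each resulting $\bar{b}$ into $\mathbf{Q}$, and then output the next tuple from $\mathbf{Q}$.

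The running-time analysis is immediate. Building $S$ and initialising $\mathbf{Q}$ take constant time, since $|T|$ and $s(r,k)$ depend only on $\phi$, $d$ and $\sigma$; combined with the $\mathcal{O}(H(n))$ cost of $\mathbb{B}_{\epsilon}$ and the constant cost of starting Lemma~\ref{lemma: partitioned set enumeration}, the preprocessing stays in $\mathcal{O}(H(n))$. For the delay, Lemma~\ref{lemma: partitioned set enumeration} supplies $\bar{a}$'s with constant delay, each $\bar{a} \in V_1$ contributes at least one new tuple to $\mathbf{Q}$ by the definition of $V_1$, and each Lemma~\ref{lemma: compute found tuple} call runs in constant time, so each emitted $\bar{b}$ costs only constant work.

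Correctness reproduces the analysis of Theorem~\ref{theorem: strengthened non local main} almost verbatim. No duplicates are ever output, by Remark~\ref{remark: unique tuple and split}. Conditioning on $T$ being correct, which happens with probability at least $5/6$ by Lemma~\ref{lemma: computing rel r-types strengthened}, Lemma~\ref{lemma: equivalent partitions} guarantees that every enumerated tuple lies in $\phi(\mathcal{D}) \cup \phi(\mathcal{D},\mathbf{C},\epsilon)$, yielding~(1); moreover every $\bar{b} \in \phi(\mathcal{D})$ arises from a unique $\bar{a} \in V_1$ with $|\bar{a}| \leq c$, because the $r$-neighbourhood of such a $\bar{b}$ has at most $c = \operatorname{conn}(\phi,d)$ connected components. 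Since each $\bar{a} \in V_1$ yields at most $s(r,k)$ tuples in $\phi(\mathcal{D})$, we obtain $|V_1| \geq |\phi(\mathcal{D})|/s(r,k)$, which combined with $|V| \leq c n^c$ and the hypothesis $|\phi(\mathcal{D})| \geq \gamma n^c$ gives $|V_1| \geq \mu |V|$. Lemma~\ref{lemma: partitioned set enumeration} then enumerates all of $V_1$ with probability at least $4/5$ (independently of the randomness used to compute $T$), forcing every tuple of $\phi(\mathcal{D})$ to be output, for an overall joint success probability of at least $(5/6)\cdot(4/5) = 2/3$, as required. The only real obstacle is to verify that Lemma~\ref{lemma: computing rel r-types} is the sole tree-width-dependent ingredient in Theorem~\ref{theorem: strengthened non local main}; once that is observed, Lemma~\ref{lemma: computing rel r-types strengthened} slots in and the rest of the argument carries over unchanged.
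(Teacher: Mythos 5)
Your proposal is correct and follows essentially the same route as the paper: the paper's own proof of Theorem \ref{theorem: strengthened main hanf} likewise takes the algorithm from Theorem \ref{theorem: strengthened non local main} verbatim, replaces the input class $\mathbf{C}_d^t$ by $\mathbf{C}$, and substitutes Lemma \ref{lemma: computing rel r-types strengthened} for Lemma \ref{lemma: computing rel r-types}, noting that computing $T$ was the only non-constant-time (and only tree-width-dependent) step of the preprocessing. Your more detailed re-derivation of the delay and the $(5/6)\cdot(4/5)=2/3$ success probability matches the paper's argument for Theorem \ref{theorem: strengthened non local main}, which the paper simply invokes.
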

\begin{proof} 
Let $\epsilon \in (0,1]$, let $\gamma \in (0,1)$ and let us assume that $\phi$ is Hanf-sentence testable on $\mathbf{C}$ in time $H(n)$. If we take the $\epsilon$-approximate enumeration algorithm for $\operatorname{Enum}_{\mathbf{C}_d^t}(\phi)$ with answer threshold function $f(n)=\gamma n^{c}$ given in the proof of Theorem \ref{theorem: strengthened non local main}, which we shall denote by $\mathbb{E}_{\phi, \mathbf{C}_d^t, \epsilon} $, and make the following changes: replace the input class $\mathbf{C}_d^t$ with $\mathbf{C}$, and use Lemma \ref{lemma: computing rel r-types strengthened} instead of Lemma \ref{lemma: computing rel r-types} to compute the set of $r$-types $T$. Then we argue that the resulting algorithm $\mathbb{E}_{\phi, \mathbf{C}, \epsilon} $ is an $\epsilon$-approximate enumeration algorithm for $\operatorname{Enum}_{\mathbf{C}}(\phi)$ with preprocessing time $\mathcal{O}(H(n))$ and constant delay for answer threshold function $f(n)=\gamma n^{c}$.

In the preprocessing phase of $\mathbb{E}_{\phi, \mathbf{C}_d^t, \epsilon} $ the only part that runs in non-constant time is the construction of the set $T$ (which takes polylogarithmic time). In $\mathbb{E}_{\phi, \mathbf{C}, \epsilon} $ it takes $\mathcal{O}(H(n))$ time to compute $T$ and hence $\mathbb{E}_{\phi, \mathbf{C}, \epsilon} $ has preprocessing time $\mathcal{O}(H(n))$.
Since $\mathbb{E}_{\phi, \mathbf{C}_d^t, \epsilon} $ has constant delay, $\mathbb{E}_{\phi, \mathbf{C}, \epsilon} $ also has constant delay.

Since the only differences in Lemma \ref{lemma: computing rel r-types} and Lemma \ref{lemma: computing rel r-types strengthened} is the running times and the input class, the proof of the correctness of $\mathbb{E}_{\phi, \mathbf{C}, \epsilon} $ is the same as the proof of the correctness of $\mathbb{E}_{\phi, \mathbf{C}_d^t, \epsilon} $ but with the input class $\mathbf{C}_d^t$ replaced with $\mathbf{C}$.
\end{proof}

We shall now return to our running example where we discuss an FO query and input class, which previous theorems did not give us an approximate enumeration algorithm for, but by Theorem \ref{theorem: strengthened main hanf} can now be approximately enumerated.

\begin{exa}\label{running example 2}
Let $\phi$ be the formula as in Example \ref{running example}.
We saw in Example \ref{running example Hanf testable} that $\phi$ is Hanf-sentence testable on $\mathbf{G}_d$ in constant time and that the formula in the form (\ref{normal-form}) of Lemma \ref{normal-form lemma} that is $d$-equivalent to $\phi$ is $\chi(x,y) =\operatorname{sph}_{\tau_1}(x,y) \lor (\operatorname{sph}_{\tau_2}(x,y) \land \lnot \exists ^{\geq 1}z\operatorname{sph}_{\tau_4}(z))$. The maximum number of connected components of the neighbourhood types that appear in the sphere-formulas of $\chi$ is one. Hence, by Theorem \ref{theorem: strengthened main hanf}, $\operatorname{Enum}_{\mathbf{G}_d}(\phi)$ can be solved approximately with constant preprocessing time and constant delay for answer threshold function $f(n)=\gamma n$ for any parameter $\gamma \in (0,1)$. 
\end{exa}

%Add membership testing and counting here? -yes but not too much

\subsection{Approximate query membership testing}
The \emph{query membership testing problem} for $\phi(\bar{x}) \in \operatorname{FO}[\sigma]$ over $\mathbf{C}$ is the computational problem where, for a database $\mathcal{D} \in \mathbf{C}$, we ask whether a given tuple $\bar{a} \in D^k$ satisfies $\bar{a}\in \phi(\mathcal{D})$. We call $\bar{a}$ the \emph{dynamical input} and the answer (`true' or `false') the \emph{dynamical answer}. 
Similar to query enumeration, the goal is to obtain an algorithm, that, after a preprocessing phase, can answer membership queries for dynamical inputs very efficiently. The preprocessing phase should also
be very efficient.
Kazana \cite{kazana2013query} shows that the query membership testing problem for any $\phi(\bar{x}) \in \operatorname{FO}[\sigma]$ over $\mathbf{C}$ can be solved by an algorithm with a linear time preprocessing phase, and an answering phase that, for a given dynamical input, computes the dynamical answer in constant time.

Given a local FO query, by Lemma \ref{lemma: local membership testing}, for any $\sigma$-db $\mathcal{D}$ and tuple $\bar{a}$ from $\mathcal{D}$ we can test in constant time whether $\bar{a} \in \phi (\mathcal{D})$. Hence in this section we shall focus on general queries.

We introduce an approximate version of the query membership testing problem. We say that the query membership testing problem for $\phi(\bar{x}) \in \operatorname{FO}[\sigma]$ over $\mathbf{C}$ can be \emph{solved approximately} with an $\mathcal{O}(H(n))$-time preprocessing phase and constant-time answering phase if for any $\epsilon \in (0,1]$, there exists an algorithm, which is given oracle access to a database $\mathcal{D} \in \mathbf{C}$ and $|D| =n$ as an input, and proceeds in two phases.
\begin{enumerate}
\item A preprocessing phase that runs in time $\mathcal{O}(H(n))$.
\item An answer phase where, given dynamical input $\bar{a} \in D^k$, 
	%in constant time, with probability at least ${2}/{3}$
	the following is computed in constant time.
\begin{itemize}
\item If $\bar{a} \in \phi(\mathcal{D})$, the algorithm returns `true', with probability at least $2/3$, and
\item if $\bar a\notin \phi(\mathcal D,\mathbf C, \epsilon)$, the algorithm returns `false', with probability at least $2/3$.
\end{itemize}
\end{enumerate}
The following follows from the proof of Lemma~\ref{lemma: computing rel r-types}.

\begin{thm}\label{query membership testing}
The query membership testing problem for $\phi(\bar{x}) \in \operatorname{FO}[\sigma]$ (where $|\bar{x}|=k$) over $\mathbf{C}^t_d$ can be solved approximately with a polylogarithmic preprocessing phase and constant-time answering phase.
%The query testing problem for $\phi(\bar{x}) \in \operatorname{FO}[\sigma]$ over $\mathbf{C}$ can be solved approximately with an $\mathcal{O}(H(n))$-time preprocessing phase and constant-time answering phase if $\phi$ is Hanf-sentence testable on $\mathbf{C}$ in time $H(n)$.
\end{thm}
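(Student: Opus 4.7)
The plan is to reuse the machinery of Lemma~\ref{lemma: computing rel r-types} directly, moving its entire computation into the preprocessing phase and exploiting the fact that, once a finite set of ``acceptable'' neighbourhood types has been precomputed, query membership can be decided locally in constant time.

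Preprocessing phase: given oracle access to $\mathcal{D} \in \mathbf{C}_d^t$ with $|D|=n$ and the proximity parameter $\epsilon \in (0,1]$, I would simply invoke the algorithm $\mathbb{A}_{\epsilon}$ of Lemma~\ref{lemma: computing rel r-types} on input $\phi$ and $\mathcal{D}$. This produces, in polylogarithmic time, a set $T$ of $r$-types with $k$ centres (where $r$ is the Hanf locality radius of $\phi$) such that, with probability at least $5/6$, for every $\bar{a} \in D^k$,
\begin{itemize}
\item if $\bar{a} \in \phi(\mathcal{D})$, then the $r$-type of $\bar{a}$ in $\mathcal{D}$ lies in $T$, and
\item if $\bar{a} \notin \phi(\mathcal{D},\mathbf{C}_d^t,\epsilon)$, then the $r$-type of $\bar{a}$ in $\mathcal{D}$ does not lie in $T$.
\end{itemize}
The set $T$ is stored as the output of the preprocessing phase. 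Since $|T_r^{\sigma,d}(k)|$ is a constant depending only on $r$, $k$, $\sigma$ and $d$, storing $T$ (say as a lookup table keyed by canonical representatives of types) uses constant space.

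Answering phase: given the dynamical input $\bar{a} \in D^k$, I would compute the $r$-type $\tau$ of $\bar{a}$ in $\mathcal{D}$ by exploring its $r$-neighbourhood via the oracle and canonicalising. Because $r$ and $k$ are fixed and the degree is bounded by $d$, the neighbourhood has at most $k \cdot d^{r+1}$ elements, hence both the exploration and the computation of the canonical form take constant time. Then I would perform the test $\tau \in T$, again in constant time. The algorithm returns \texttt{true} if $\tau \in T$ and \texttt{false} otherwise. Correctness follows immediately from the guarantees on $T$: with probability at least $5/6 \ge 2/3$ (taken once, in preprocessing), every future dynamical input is answered correctly in the approximate sense required.

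There is essentially no genuine obstacle here, which is why the theorem is stated as a corollary of the proof of Lemma~\ref{lemma: computing rel r-types} rather than as an independent result; the only care needed is to note (i) that the random coin flips of $\mathbb{A}_{\epsilon}$ are absorbed into the preprocessing phase so that a single $5/6$-probability event simultaneously covers all subsequent dynamical queries, and (ii) that computing the $r$-type of a $k$-tuple and looking it up in $T$ really are constant-time operations under the RAM model with oracle access, an observation already implicit in the proofs of Lemmas~\ref{lemma: local membership testing} and~\ref{lemma: computing rel r-types}.
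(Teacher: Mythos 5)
Your proposal is correct and follows essentially the same route as the paper: compute the set $T$ of $r$-types via Lemma~\ref{lemma: computing rel r-types} in the preprocessing phase, then in the answering phase compute the $r$-type of the dynamical input in constant time and test membership in $T$, with the $5/6 > 2/3$ probability guarantee inherited from the lemma. Your additional observation that a single preprocessing-time event covers all subsequent dynamical queries is a harmless strengthening of what the definition requires.
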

\begin{proof}
Let $r$ be the Hanf locality radius of $\phi$. In the preprocessing phase a set $T$ of $r$-types as in Lemma \ref{lemma: computing rel r-types} is computed. Then in the answer phase, given a tuple $\bar{a} \in D^k$, the $r$-type $\tau$ of $\bar{a}$ is computed. If $\tau \in T$ then the algorithm returns `true', otherwise it returns `false'. By Lemma \ref{lemma: computing rel r-types} the set $T$ can be computed in polylogarithmic time and it takes constant time to calculate $\tau$. By Lemma \ref{lemma: computing rel r-types} with probability at least $5/6 >2/3$, if $\bar{a} \in \phi(\mathcal{D})$, then the $r$-type of $\bar{a}$ in $\mathcal{D}$ is in $T$, and
 if $\bar{a} \in D^k \setminus \phi(\mathcal{D},\mathbf{C}_d^t, \epsilon )$, then the $r$-type of $\bar{a}$ in $\mathcal{D}$ is not in $T$. Therefore with probability at least $2/3$ if $\bar{a} \in \phi(\mathcal{D})$ the algorithm outputs `true' and if $\bar a\notin \phi(\mathcal D,\mathbf C, \epsilon)$ the algorithm outputs `false' as required. 
\end{proof}

Note that we can get a similar result for Hanf sentence testable FO queries over any class of bounded degree graphs.
%From Theorem \ref{query membership testing} and Theorem \ref{FO testability} the following follows. 
%
%\begin{cor}\label{cor:testing}
%The query membership testing problem for $\phi(\bar{x}) \in \operatorname{FO}[\sigma]$ over $\mathbf{C}^t_d$ can be solved approximately with a polylogarithmic preprocessing phase and constant-time answering phase.
%\end{cor}

\subsection{Approximate counting}
The \emph{counting problem} for $\phi(\bar{x}) \in \operatorname{FO}[\sigma]$ over $\mathbf{C}$ is the problem of, given a database $\mathcal{D} \in \mathbf{C}$, compute $|\phi(\mathcal{D})|$.  It was shown in \cite{bagan2008computing} that the counting problem for any $\phi(\bar{x}) \in \operatorname{FO}[\sigma]$ over $\mathbf{C}$ can be solved in linear time.

Lemma~5.1 in~\cite{newman2013every}, allows approximating the distribution of the $r$-types with one centre of an input graph by looking at a constant number of vertices. We can easily extend this to databases and neighbourhood types with multiple centres.

%On $\sigma$-dbs of degree at most $d$, the number of $r$-types with $k$ centres is finite for fixed $r$ and $k$. Let 
% $\operatorname{c}(r,k)$ 
% be the number of $r$-types with $k$ centres, and fix an enumeration $\tau_1, \ldots,\tau_{\operatorname{c}(r,k)}$ of the $r$-types.
We fix an enumeration $\tau_1, \ldots,\tau_{\operatorname{c}(r,k)}$ of the $r$-types in $T_{r}^{\sigma, d}(k)$.
 For a $\sigma$-db $\mathcal D$ with $|D|=n$, the \emph{$k$ centre $r$-neighbourhood distribution of $\mathcal D$} 
 is the vector 
 $\dv_{r,k}(\mathcal D)$ of length $\operatorname{c}(r,k)$ whose $i$-th component (denoted by $\dv_{r,k}(\mathcal D)[i]$) contains the number $t(i)/n^k$, where 
 $t(i)\in \mathbb N$ is
 the number of elements of $D$ 
 whose $r$-type is $\tau_i$. 

We let $\operatorname{EstimateFrequencies}_{r,s,k}$ be an algorithm with oracle access to an input database $\mathcal{D} \in \mathbf{C}$, that samples $s$ tuples from $D^k$ uniformly and independently and explores their $r$-neighbourhoods. $\operatorname{EstimateFrequencies}_{r,s,k}$ returns the distribution vector $\bar{v}$ of the $r$-types of this sample.
$\operatorname{EstimateFrequencies}_{r,s,k}$ has constant running time, independent of $|D|$, and comes with the following guarantees.

\begin{lem}\label{neighbourhood distribution}
	Let $\mathcal{D} \in \mathbf{C}$ be a database on $n$ elements, $\lambda \in (0,1)$ and $r,k \in \mathbb{N}$. If $s \geq \operatorname{c}(r,k)^2/\lambda^2 \cdot \operatorname{ln}(20\operatorname{c}(r,k))$, with probability at least $9/10$ the vector $\bar{v}$ returned by \\$\operatorname{EstimateFrequencies}_{r,s,k}$ on input $\mathcal{D}$ satisfies $\| \bar{v} - \dv_{r,k}(\mathcal{D}) \|_1 \leq \lambda$.
\end{lem}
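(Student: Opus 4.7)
The plan is to prove this via a standard concentration-plus-union-bound argument. For each $r$-type $\tau_i \in T_{r}^{\sigma,d}(k)$, let $p_i := \dv_{r,k}(\mathcal{D})[i]$ denote the true proportion of $k$-tuples in $D^k$ whose $r$-type (in $\mathcal{D}$) is $\tau_i$. Since $\operatorname{EstimateFrequencies}_{r,s,k}$ samples $s$ tuples from $D^k$ uniformly and independently, and since the $r$-neighbourhood of any sampled tuple can be computed in constant time from oracle queries (thanks to the degree bound $d$), the indicator $X_j^{(i)}$ that the $j$-th sampled tuple has $r$-type $\tau_i$ is a Bernoulli random variable with mean $p_i$, and these are independent across $j$. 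The $i$-th component of the returned vector is $\bar v[i] = \tfrac{1}{s}\sum_{j=1}^s X_j^{(i)}$, so $\mathbb{E}[\bar v[i]] = p_i$.

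Next I would apply Hoeffding's inequality to each coordinate separately. For any fixed $i$ and any $t>0$,
\[
\Pr\bigl[\,|\bar v[i] - p_i| > t\,\bigr] \;\leq\; 2\exp(-2st^2).
\]
Set $t := \lambda/\operatorname{c}(r,k)$. Using the assumption $s \geq \operatorname{c}(r,k)^2/\lambda^2 \cdot \ln(20\operatorname{c}(r,k))$, we get $2st^2 \geq 2\ln(20\operatorname{c}(r,k))$, and hence
\[
\Pr\bigl[\,|\bar v[i] - p_i| > \lambda/\operatorname{c}(r,k)\,\bigr] \;\leq\; \frac{2}{(20\operatorname{c}(r,k))^2} \;\leq\; \frac{1}{10\operatorname{c}(r,k)}.
\]

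Then I would take a union bound over the $\operatorname{c}(r,k)$ coordinates to conclude that with probability at least $9/10$, every coordinate satisfies $|\bar v[i] - p_i| \leq \lambda/\operatorname{c}(r,k)$. In that event,
\[
\|\bar v - \dv_{r,k}(\mathcal{D})\|_1 \;=\; \sum_{i=1}^{\operatorname{c}(r,k)} |\bar v[i] - p_i| \;\leq\; \operatorname{c}(r,k) \cdot \frac{\lambda}{\operatorname{c}(r,k)} \;=\; \lambda,
\]
as required.

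There is no real obstacle here; the only mild subtlety is to confirm that the quantity $\bar v[i]$ really is an empirical mean of i.i.d.\ Bernoullis (so that Hoeffding applies), which follows because the samples are drawn uniformly and independently from $D^k$ and the $r$-type of each sample is determined entirely by $\mathcal{D}$ and the sample itself. One could alternatively use a Chernoff bound to get slightly tighter constants, but the stated sample size $s \geq \operatorname{c}(r,k)^2/\lambda^2 \cdot \ln(20\operatorname{c}(r,k))$ is already twice what Hoeffding strictly requires, so the plain Hoeffding bound is the cleanest route.
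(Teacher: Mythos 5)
Your proof is correct, and it is essentially the standard argument that the paper itself relies on: the paper gives no proof of this lemma, instead citing Lemma~5.1 of Newman and Sohler and asserting the extension to $k$-tuples is easy, and that cited proof is exactly the per-coordinate concentration bound plus union bound you give. Your Hoeffding computation checks out (the failure probability per coordinate is in fact $1/(200\operatorname{c}(r,k)^2)$, comfortably below the $1/(10\operatorname{c}(r,k))$ needed for the union bound), so this is a valid way to make the paper's claim explicit.
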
 
By combining Lemmas \ref{lemma: computing rel r-types} and \ref{lemma: equivalent partitions} and Lemma \ref{neighbourhood distribution} we get the following result.
%
%\begin{thm}\label{counting theorem}
%Let $\phi(\bar{x}) \in \operatorname{FO}[\sigma]$ with $k$ free variables and $\mathcal{D} \in \mathbf{C}$ be a $\sigma$-db on $n$ elements. Let $\epsilon \in (0,1]$ and $\lambda \in (0,1)$. If $\phi$ is Hanf-sentence testable on $\mathbf{C}$ in time $H(n)$ then there exists an algorithm, which is given oracle access to $\mathcal{D}$ and $|D|=n$ as an input, that returns an estimate of $|\phi(\mathcal{D})|$ such that with probability at least ${2}/{3}$ the estimate is within the range
%$[|\phi(\mathcal{D})| - \lambda n^k, |\phi(\mathcal{D}) \cup \phi(\mathcal{D}, \mathbf{C}, \epsilon)| + \lambda n^k].$
%Furthermore, the algorithm runs in time $\mathcal{O}(H(n))$.
%\end{thm}
%\begin{proof}
%	Let $r$ be the locality radius of $\phi$ and $s = \operatorname{c}(r,k)^2/\lambda^2 \cdot \operatorname{ln}(20\operatorname{c}(r,k))$. The algorithm first runs $\operatorname{EstimateFrequencies}_{r,s,k}$ on input $\mathcal{D}$. Let $\bar{v}$ be the vector returned. Next, the algorithm computes a set Rel of $r$-types with $k$ centres as in Lemma \ref{relevant set}. The algorithm then returns \[n^k \sum_{\tau_i \in \operatorname{Rel}} v[i].\] 
%By Lemma \ref{relevant set} and Lemma \ref{neighbourhood distribution} the overall running time is $\mathcal{O}(H(n))$ and by looking at the two extreme cases with probability at least $9/10 \cdot 5/6 > 2/3$
% \[|\phi(\mathcal{D})| - \lambda n^k \leq n^k \sum_{\tau_i \in \operatorname{Rel}} v[i] \leq |\phi(\mathcal{D}) \cup \phi(\mathcal{D}, \mathbf{C}, \epsilon)| + \lambda n^k\]
%	as required.
%\end{proof}

\begin{thm}\label{counting theorem}
Let $\phi(\bar{x}) \in \operatorname{FO}[\sigma]$, let $\epsilon \in (0,1]$, let $\lambda \in (0,1)$ and let $c : =\operatorname{conn}(\phi, d)$. 
There exists an algorithm, which, given oracle access to $\mathcal{D} \in \mathbf{C}_d^t$ and $|D|=n$ as an input, returns an estimate of $|\phi(\mathcal{D})|$ such that with probability at least ${2}/{3}$ the estimate is within the range
$[|\phi(\mathcal{D})| - \lambda cn^c, |\phi(\mathcal{D}) \cup \phi(\mathcal{D}, \mathbf{C}^t_d, \epsilon)| + \lambda cn^c].$
	Furthermore, the algorithm runs in polylogarithmic time in $n$.

%Let $\phi(\bar{x}) \in \operatorname{FO}[\sigma]$ and $\mathcal{D} \in \mathbf{C}$ be a $\sigma$-db on $n$ elements. Let $\epsilon \in (0,1]$, $\lambda \in (0,1)$ and  $c \in [1,k]$ be the maximum number of connected components in a neighbourhood type in $\operatorname{Rel}(\mathcal{D}, \phi)\cup \operatorname{Rel}^{\mathbf{C}}_{\epsilon\text{-close}}(\mathcal{D}, \phi)$ for any $\mathcal{D} \in \mathbf{C}$. If $\phi$ is Hanf-sentence testable on $\mathbf{C}$ in time $H(n)$ then there exists an algorithm, which is given oracle access to $\mathcal{D}$ and $|D|=n$ as an input, that returns an estimate of $|\phi(\mathcal{D})|$ such that with probability at least ${2}/{3}$ the estimate is within the range
%$[|\phi(\mathcal{D})| - \lambda n^{c}, |\phi(\mathcal{D}) \cup \phi(\mathcal{D}, \mathbf{C}, \epsilon)| + \lambda n^{c}].$
%Furthermore, the algorithm runs in time $\mathcal{O}(H(n))$.
\end{thm}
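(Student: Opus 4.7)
The plan is to combine the machinery of Lemmas~\ref{lemma: computing rel r-types}, \ref{lemma: equivalent partitions}, and \ref{lemma: compute found tuple} with a Hoeffding-style sampling argument on tuples of length at most $c$. First I would run the algorithm from Lemma~\ref{lemma: computing rel r-types} to compute a set $T$ of $r$-types (where $r$ is the Hanf locality radius of $\phi$) in polylogarithmic time; with probability at least $5/6$ this yields
\[
\phi(\mathcal{D}) \ \subseteq \ W := \{\bar b\in D^k : \text{the } r\text{-type of } \bar b \text{ in } \mathcal{D} \text{ is in } T\} \ \subseteq \ \phi(\mathcal{D}) \cup \phi(\mathcal{D},\mathbf{C}^t_d,\epsilon).
\]
From $T$ I would then compute, in constant time, the set $S$ of $r$-splits given by Lemma~\ref{lemma: equivalent partitions}, so that $\bar b\in W$ iff $\mathcal D\models \bigvee_{C\in S}\operatorname{Split}^C_r(\bar b)$.

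The key observation is that every $r$-type in $T$ has at most $c=\operatorname{conn}(\phi,d)$ connected components (by definition of $c$ and of the formula $\chi$), hence every $C\in S$ satisfies $|C|\leq c$. Moreover, by Remark~\ref{remark: unique tuple and split}, each $\bar b\in W$ corresponds to a unique pair $(\bar a,C)$ with $C\in S$ and $\bar b$ found from $\bar a$ and $C$. Writing $X_C := \{\bar a\in D^{|C|} : \text{some } \bar b\in D^k \text{ is found from } \bar a \text{ and } C\}$, this gives the exact identity $|W| = \sum_{C\in S}|X_C|$, and on the good event above, $|\phi(\mathcal{D})|\leq \sum_{C\in S}|X_C| \leq |\phi(\mathcal{D})\cup\phi(\mathcal{D},\mathbf{C}^t_d,\epsilon)|$.

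It then suffices to estimate each $|X_C|$ by uniform sampling: draw $s$ tuples from $D^{|C|}$ independently and uniformly, use Lemma~\ref{lemma: compute found tuple} to test membership in $X_C$ in constant time, and let $\hat N_C$ be the empirical fraction scaled by $n^{|C|}$. By Hoeffding's inequality, $s=\Theta\!\bigl(|S|^2/(\lambda c)^2\cdot \log|S|\bigr)$ samples (a constant, since $|S|\leq s(r,k)$ and $c\leq k$ are constants depending only on $\phi$ and $d$) ensure $\bigl||\hat N_C|-|X_C|\bigr|\leq \lambda c\,n^{|C|}/|S| \leq \lambda c\,n^c/|S|$ with probability at least $1-1/(6|S|)$. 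A union bound over $S$ gives that $\hat N:=\sum_{C\in S}\hat N_C$ satisfies $|\hat N - |W|| \leq \lambda c\,n^c$ with probability at least $5/6$.

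Combining both good events via a union bound, $\hat N$ lies in the claimed interval with probability at least $2/3$. The running time is dominated by computing $T$, which is polylogarithmic; everything else (computing $S$, drawing constantly many samples, each tested in constant time) is constant. The only conceptual obstacle is verifying the bound $|C|\leq c$ for every $C\in S$, which is what upgrades the naive additive error $\lambda n^k$ (from directly sampling $k$-tuples) to the sharper $\lambda c\,n^c$; this is exactly the same mechanism by which Theorem~\ref{theorem: strengthened non local main} lowers the answer threshold from $\gamma n^k$ to $\gamma n^c$.
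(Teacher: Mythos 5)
Your proposal is correct and follows essentially the same route as the paper: compute $T$ via Lemma~\ref{lemma: computing rel r-types}, pass to the $r$-splits $S$ via Lemma~\ref{lemma: equivalent partitions}, use Remark~\ref{remark: unique tuple and split} to avoid double counting, and estimate the count by sampling constantly many tuples of length at most $c$ (your observation that every $C\in S$ has $|C|\le c$ because each part of a split is $r$-connected with pairwise disjoint $r$-neighbourhoods is exactly the mechanism the paper relies on). The only cosmetic difference is that you estimate each $|X_C|$ directly by a Hoeffding bound on the success indicator of Lemma~\ref{lemma: compute found tuple}, whereas the paper estimates the $3rk$-type distribution vectors $\dv_{3rk,i}(\mathcal D)$ via Lemma~\ref{neighbourhood distribution} and multiplies by split multiplicities; both compute the same quantity.
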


We shall only give the proof idea of Theorem \ref{counting theorem}. To estimate $|\phi(\mathcal{D})|$ we can do the following. We will start by computing a set of $r$-types $T$ as in Lemma \ref{lemma: computing rel r-types} and then use Lemma \ref{lemma: equivalent partitions} to compute the set of $r$-splits $S$ from $T$. Then using Lemma \ref{neighbourhood distribution}, for every $i \in [c]$ we can compute an estimate to the vector $\dv_{3rk,i}(\mathcal D)$. For every $i \in [c]$, we can also compute a vector $\bar{v}_i$ which has a component corresponding to each $3rk$-type $\tau$ with $i$ centres. The component in $\bar{v}_i$ that corresponds to the $3rk$-type $\tau$ is the number of $r$-splits $C \in S$ such that for any tuple $\bar{a}$ in $\mathcal{D}$ with $3rk$-type $\tau$ there will exist exactly one tuple that is found from $\bar{a}$ and $C$. We can then estimate $|\phi(\mathcal{D})|$ using the estimates to the vectors $\dv_{3rk,i}(\mathcal D)$ and the vectors $\bar{v}_i$. By Lemmas \ref{lemma: computing rel r-types} and \ref{neighbourhood distribution} this estimate can be computed in polylogarithmic  time. 

 For correctness, first note that by Remark \ref{remark: unique tuple and split} for any tuple $\bar{b}$ in $\mathcal{D}$ there exists exactly one tuple $\bar{a}$ from $\mathcal{D}$ and one $r$-split $C$ such that $\bar{b}$ is found from $\bar{a}$ and $C$. Therefore we will not double count any tuple. By Lemmas \ref{lemma: computing rel r-types} and \ref{lemma: equivalent partitions} and the construction of the vectors $\bar{v}_i$, with high probability we will get an estimation to the number of tuples in $\phi(\mathcal{D})$.  By looking at the two extreme cases, where $T$ contains only $r$-types of tuples in $\phi(\mathcal{D})$, and $T$ contains all $r$-types of tuples in $\phi(\mathcal{D}) \cup \phi(\mathcal{D}, \mathbf{C}, \epsilon)$, it is easy to see that the returned estimate will be within the desired range.

The obvious limitation of Theorem~\ref{counting theorem} is that 
$|\phi(\mathcal{D}) \cup \phi(\mathcal{D}, \mathbf{C}^t_d, \epsilon)|$ can be much larger than
$|\phi(\mathcal{D})|$,
as discussed in Example~\ref{ex:many-eps-close}. Nevertheless, in application where the focus is on structural
closeness and very efficient running time, this might be tolerable.

\subsection*{Acknowledgement}
%\textbf{Acknowledgement.}
We would like to thank Benny Kimelfeld for inspiring discussions during early stages of this work.

\bibliographystyle{plainurl}
\bibliography{approx_query_enumeration_rewrite}

\end{document}